\newtheorem{theorem}{Theorem}
\newtheorem{definition}{Definition}
\newtheorem{lemma}{Lemma}
\theoremstyle{plain}
\newtheorem{redrule}{Reduction Rule}
\Crefname{redrule}{Reduction Rule}{Reduction Rules}
\newcommand{\NP}{\ensuremath{\mathsf{NP}}\xspace}
\renewcommand{\P}{\ensuremath{\mathsf{P}}\xspace}
\newcommand{\containment}{\ensuremath{\mathsf{NP \subseteq coNP/poly}}\xspace}
\newcommand{\N}{\mathbb{N}}
\newcommand{\Z}{\mathbb{Z}}
\newcommand{\Oh}{\mathcal{O}}
\newcommand{\cQ}{\ensuremath{\mathcal{Q}}}
\newcommand{\cC}{\ensuremath{\mathcal{C}}}
\newcommand{\cD}{\ensuremath{\mathcal{D}}}
\newcommand{\cI}{\ensuremath{\mathcal{I}}}
\newcommand{\cJ}{\ensuremath{\mathcal{J}}}
\newcommand{\cF}{\ensuremath{\mathcal{F}}}
\newcommand{\cX}{\ensuremath{\mathcal{X}}}
\newcommand{\oY}{\ensuremath{\overline{Y}}}
\newcommand{\oX}{\ensuremath{\overline{X}}}
\newcommand{\oS}{\ensuremath{\overline{S}}}
\newcommand{\oQ}{\ensuremath{\overline{Q}}}
\newcommand{\oG}{\ensuremath{\overline{G}}}
\newcommand{\graphclass}[1]{\ensuremath{\cC^{\texttt{#1}}}}
\newcommand{\Cindependent}{\graphclass{independent}}
\newcommand{\Ccomplete}{\graphclass{complete}}
\newcommand{\Ctree}{\graphclass{tree}}
\newcommand{\Cforest}{\graphclass{forest}}
\newcommand{\Ctdd}{\cC^{\td \leq d}}
\DeclareMathOperator{\OPT}{OPT}
\DeclareMathOperator{\td}{td}
\newcommand{\modto}[1]{\text{mod to }#1}
\newcommand{\modC}{\ensuremath{\modto{\cC}}}
\newcommand{\Nout}{N^{\texttt{out}}}
\newcommand{\conf}[2]{\textsc{conf}_{#1}(#2)}
\newcommand{\confact}[2]{\textsc{active}_{#1}(#2)}
\newcommand{\probname}[1]{\lowercase{\textsc{#1}}}
\newcommand{\myproblem}[1]{\probname{#1}\xspace}
\newcommand{\parameterizedproblem}[2]{\textup{\probname{#1}[#2]}\xspace}
\newcommand{\MODC}{\myproblem{\modC}}
\newcommand{\dominatingset}{\myproblem{Dominating Set}}
\newcommand{\dominatingsetvc}{\parameterizedproblem{Dominating Set}{vc}}
\newcommand{\feedbackvertexset}{\myproblem{Feedback Vertex Set}}
\newcommand{\feedbackvertexsetfvs}{\parameterizedproblem{Feedback Vertex Set}{fvs}}
\newcommand{\vertexcover}{\myproblem{Vertex Cover}}
\newcommand{\VC}{\ensuremath{\text{VC}}}
\newcommand{\vertexcovervc}{\parameterizedproblem{Vertex Cover}{vc}}
\newcommand{\vertexcovertdd}{\parameterizedproblem{Vertex Cover}{mod to $\td\leq d$}}
\newcommand{\vertexcoverfvs}{\parameterizedproblem{Vertex Cover}{fvs}}
\newcommand{\clusterediting}{\myproblem{Cluster Editing}}
\newcommand{\clustereditingmodcvd}{\parameterizedproblem{Cluster Editing}{cvd}}
\newcommand{\clustereditingparamce}{\parameterizedproblem{Cluster Editing}{ce}}
\newcommand{\CE}{\ensuremath{\text{CE}}}
\newcommand{\treedeletion}{\myproblem{Tree Deletion Set}}
\newcommand{\treedeletionvc}{\parameterizedproblem{Tree Deletion Set}{vc}}
\newcommand{\treedeletiontds}{\parameterizedproblem{Tree Deletion Set}{tds}}
\newcommand{\TDS}{\ensuremath{\text{TDS}}}
\newcommand{\longcycle}{\myproblem{Long Cycle}}
\newcommand{\longcyclevc}{\parameterizedproblem{Long Cycle}{vc}}
\newcommand{\longcycledeg}{\parameterizedproblem{Long Cycle}{$\#v, \text{deg}(v) \neq 2$}}
\newcommand{\LC}{\ensuremath{\text{LC}}}
\newcommand{\longpath}{\myproblem{Long Path}}
\newcommand{\longpathvc}{\parameterizedproblem{Long Path}{vc}}
\newcommand{\longpathdeg}{\parameterizedproblem{Long Path}{$\#v, \text{deg}(v) \neq 2$}}
\newcommand{\LP}{\ensuremath{\text{LP}}}
\newcommand{\hamcyclevc}{\parameterizedproblem{Hamiltonian Cycle}{vc}}
\newcommand{\hamcycledeg}{\parameterizedproblem{Hamiltonian Cycle}{$\#v, \text{deg}(v) \neq 2$}}
\newcommand{\hampathvc}{\parameterizedproblem{Hamiltonian Path}{vc}}
\newcommand{\hampathdeg}{\parameterizedproblem{Hamiltonian Path}{$\#v, \text{deg}(v) \neq 2$}}
\newcommand{\hamcyclepathdeg}{\parameterizedproblem{Hamiltonian Cycle/Path}{$\#v, \text{deg}(v) \neq 2$}}
\newcommand{\maximumcut}{\myproblem{Maximum Cut}}
\newcommand{\maximumcutmodvc}{\parameterizedproblem{Maximum Cut}{vc}}
\newcommand{\MC}{\ensuremath{\text{MC}}}
\newcommand{\steinertree}{\myproblem{Steiner Tree}}
\title{Boundaried Kernelization}
\author{
	Leonid Antipov\\Humboldt-Universität zu Berlin, Berlin, Germany\\leonid.antipov@hu-berlin.de 
	\and
	Stefan Kratsch\\Humboldt-Universität zu Berlin, Berlin, Germany\\kratsch@informatik.hu-berlin.de
}
\begin{document}
\maketitle
		
\begin{abstract}
	The notion of a (polynomial) kernelization from parameterized complexity is a well-studied model for efficient preprocessing for hard computational problems. By now, it is quite well understood which parameterized problems do or (conditionally) do not admit a polynomial kernelization. Unfortunately, polynomial kernelizations seem to require strong restrictions on the \emph{global structure} of inputs.
	
	To avoid this restriction, we propose a model for efficient \emph{local preprocessing} that is aimed at local structure in inputs. Our notion, dubbed \emph{boundaried kernelization}, is inspired by protrusions and protrusion replacement, which are tools in meta-kernelization [Bodlaender et al.\ J'ACM 2016]. Unlike previous work, we study the preprocessing of suitable boundaried graphs in their own right, in significantly more general settings, and aiming for polynomial rather than exponential bounds. We establish polynomial boundaried kernelizations for a number of problems, while \emph{unconditionally} ruling out such results for others. We also show that boundaried kernelization can be a tool for regular kernelization by using it to obtain an improved kernelization for \vertexcover parameterized by the vertex-deletion distance to a graph of bounded treedepth.
\end{abstract}


\section{Introduction}
We introduce \emph{boundaried kernelization} as a model for provably efficient \emph{local preprocessing}, i.e., preprocessing that leverages local structure without having to explicitly consider the entire instance. This extends the notion of \emph{kernelization} from \emph{parameterized complexity} and is inspired by \emph{protrusions} and \emph{protrusion replacement}, which have been used, in particular, in meta theorems for kernelization (e.g.,~\cite{DBLP:journals/jacm/BodlaenderFLPST16}). Unlike protrusions, which are subgraphs of constant treewidth that have a constant-size interface/boundary to the rest of the graph, we study different kinds of internal structure and we allow for larger boundaries. Further, we usually aim to reduce the size to at most polynomial in boundary size and optional further parameters, while for protrusion replacement it is sufficient to reduce to a size with any (usually at least exponential) dependence on treewidth and boundary size. Let us first give some context.

Parameterized complexity studies \emph{parameterized problems}, which are classical problems augmented with one or more parameters such as solution size, dimension, or treewidth; often, parameters quantify structural properties. A central goal is to design algorithms that are fast when the chosen parameter value(s) are low, showing that the corresponding property or structure is algorithmically beneficial (called \emph{fixed-parameter tractable} algorithms). This parameterized setting also allows for a robust notion of efficient preprocessing: A \emph{kernelization} for a parameterized problem is an efficient algorithm that expects an instance $(x,k)$, with parameter $k$, and returns an equivalent instance $(x',k')$ of size upper bounded by some computable function $f(k)$. In particular, one is interested what parameterized problems admit \emph{polynomial kernelizations} where the function $f$ is polynomially bounded. Kernelization has been intensely studied (see, e.g., recent works~\cite{DBLP:conf/soda/BessyBTW23,DBLP:conf/iwpec/BhyravarapuJ0S23,DBLP:conf/icalp/BougeretJS24,DBLP:conf/iwpec/Dumas023,DBLP:conf/esa/FominG00Z23,DBLP:conf/esa/FominLL0TZ23,DBLP:conf/iwpec/JansenS23,DBLP:conf/tamc/KammerS24,DBLP:conf/iwpec/KratschK23,DBLP:conf/innovations/LokshtanovM0Z24,DBLP:conf/fct/SamBGB23,DBLP:conf/iconip/XuDL23} or the dedicated book~\cite{kernelization_book_FominLSZ19}) and, by now, it is quite well understood which problems do or (conditionally) do not admit polynomial kernelizations. 
Moreover, there are both meta results (e.g.,~\cite{DBLP:journals/jacm/BodlaenderFLPST16,DBLP:journals/siamcomp/FominLST20,DBLP:journals/jcss/GanianSS16,DBLP:conf/birthday/Thilikos20}), giving upper bounds for whole classes of problems, as well as dichotomy results (e.g.,~\cite{DBLP:conf/icalp/KratschW10,DBLP:journals/toct/KratschMW16,DBLP:journals/jcss/DonkersJ21,DBLP:conf/icalp/BougeretJS24}), classifying all problems in some class into admitting or (conditionally) not admitting a polynomial kernelization. 

We see \emph{two main motivations} for studying local preprocessing: \emph{First}, kernelizations usually rely on \emph{reduction rules}, which usually apply to local configurations in the input. For example, this includes simple rules like \emph{folding of degree-two vertices} for \vertexcover~\cite{DBLP:journals/jal/ChenKJ01} and seeking \emph{flowers}, i.e., cycles that pairwise intersect in the same vertex, for \feedbackvertexset~\cite{DBLP:journals/mst/BodlaenderD10,DBLP:journals/talg/Thomasse10}. In other words, depending on the kind of reduction rule, the scope of the rule may vary, but beyond the required scope it usually does not matter what else is present in the instance.\footnote{A notably different case: Kernelizations obtained via cut-covering sets and other matroid-based techniques~\cite{DBLP:journals/jacm/KratschW20} use the entire instance to discover safe reduction steps.} Is there a sensible way of capturing local structure and giving local guarantees for the effectiveness of preprocessing?
\emph{Second}, most polynomial kernelizations in the literature are for parameterization by solution size or by the distance to some tractable special case for the problem. E.g., we are given a graph $G$ and a modulator $X\subseteq V(G)$ such that $G-X$ belongs to a graph class where the target problem can be solved efficiently, with $|X|$ being the parameter. Intuitively, unless $\P=\NP$, efficient preprocessing can at best deal with tractable parts so the parameter must directly or indirectly allow to (at least) bound the size of the remainder of the instance, which may be structured arbitrarily. Together, this yields a fairly strict requirement for kernelization to be useful: Almost the entire instance must match some tractable special case for the problem in question. What if only parts of an instance fulfill such a restriction? Do we still get nontrivial guarantees for efficient preprocessing?

\paragraph{Our work.}
Inspired by protrusions and protrusion replacement from meta kernelization~\cite{DBLP:journals/jacm/BodlaenderFLPST16}, we use \emph{boundaried graphs} to define \emph{boundaried kernelization} (see Section~\ref{full:section:boundariedkernelization} for formal definitions): A boundaried graph $G_B$ is a graph $G$ with the additional specification of a set $B\subseteq V(G)$ of boundary vertices. 
Informally, a boundaried kernelization for, e.g., the problem \vertexcoverfvs, i.e., \vertexcover parameterized by the size of a given feedback vertex set, is a polynomial-time algorithm that expects as input a boundaried graph $G_B$ with forest-modulator $X$ of size $k$, and returns a boundaried graph $G'_B$ of size bounded by a function of parameter $k$ and boundary size $|B|$ such that attaching the same graph $H_B$ to either $G_B$ or $G'_B$ (by identification of boundary vertices, also called \emph{gluing})  will yield graphs that are equivalent with respect to \vertexcover. In this way, local preprocessing for \vertexcover with respect to local structure as quantified by modulator $X$ corresponds to a boundaried kernelization for \vertexcoverfvs. Note that for $B=\emptyset$ and gluing the empty graph to $G_B$ this includes the requirement that $G$ and $G'$ are equivalent for \vertexcover (and $G'$ is of size bounded by a function of $k$), which essentially constitutes a regular kernelization for \vertexcoverfvs (cf.\ \Cref{full:lem:PBK-PK}). Accordingly, we mainly work with problems that admit a polynomial kernelization.

We establish polynomial size boundaried kernelizations for several problems that are known to admit regular polynomial kernelizations. This confirms the intuition that the local nature of reduction rules often allows us to work on the well-structured part only and ignore the rest of the instance.

\begin{theorem}\label{full:thm:positive}
	The following parameterized problems admit a polynomial boundaried kernelization: \vertexcovervc, \vertexcoverfvs, \feedbackvertexsetfvs, \longcyclevc, \longpathvc, and \hamcyclevc, \hampathvc as well as \hamcyclepathdeg.
\end{theorem}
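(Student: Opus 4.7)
The plan is to prove each of the listed results by separately adapting the respective known polynomial kernelization to the boundaried setting. The uniform strategy is that every reduction rule becomes boundary-safe once we treat the boundary set $B$ as an extension of the structural modulator. This is sound because vertices in $B$ are the only ones that may acquire new adjacencies under gluing, so preserving their role in the graph---and refraining from any decision that fixes their interaction with the unknown outside---is sufficient to guarantee that the reductions commute with attaching any boundaried graph $H_B$.

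For the vertex-cover-family problems \vertexcovervc, \vertexcoverfvs, and \feedbackvertexsetfvs, the approach follows the classical LP-, crown-, and matroid-based kernels (Nemhauser--Trotter for \vertexcovervc, a matroid-style kernel for \vertexcoverfvs, and Thomassé's flower-based kernel for \feedbackvertexsetfvs). After extending the modulator $X$ by $B$, the ``clean part'' $V(G) \setminus (X \cup B)$ retains the same structure (independent set, forest, forest respectively) as in the non-boundaried setting, so the standard reduction rules apply to yield a size bound polynomial in $|X| + |B|$. Decisions that the ordinary kernelization would make on vertices of $B$ (e.g.\ forcing a vertex into the cover) must be deferred, but this only costs us an additive $O(|B|)$ term.

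For the path- and cycle-type problems \longcyclevc, \longpathvc, \hamcyclevc, \hampathvc, and \hamcyclepathdeg, we again enlarge the modulator by $B$ (or, in the degree-parameterized case, combine $B$ with the set of vertices whose degree is not $2$). The existing kernelizations bound the number of non-modulator vertices per distinct ``interface type'' (neighborhood pattern, or path-replacement profile between non-degree-2 vertices). Since interface types are defined solely with respect to the modulator, which now includes $B$, the bucketing and counting arguments transfer verbatim to the boundaried setting, and the resulting gadget replacements preserve traversal behavior through the boundary.

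The principal obstacle is verifying safety under \emph{every} possible gluing $H_B$, rather than for a single instance as in ordinary kernelization. This is most delicate for the Hamiltonicity problems, where the reduced graph must faithfully emulate all ways in which a Hamiltonian structure can traverse the boundary. We expect to formalize this via a representative-family or signature argument: it suffices to show that the set of path systems on $B$ realizable internally is preserved by each reduction, and that this set has only polynomially many distinct elements (or admits a polynomial-size representative family). Once safety of the individual replacements is established in this stronger, gluing-invariant sense, the size bounds follow from the existing kernelization analyses applied to the enlarged modulator $X \cup B$.
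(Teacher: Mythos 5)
Your high-level strategy (extend the modulator by $B$, re-examine each existing reduction rule for safety under all gluings, then reuse the original size analysis) matches the paper, and your choices of source kernel are right for \feedbackvertexsetfvs (Thomass\'e's flower kernel) and roughly right for \vertexcovervc (crowns/LP). But two concrete steps in your plan would not go through.

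First, for \vertexcoverfvs you propose a ``matroid-style kernel.'' Matroid-based compressions (cut-covering sets and related representative-set machinery) inherently examine the \emph{whole} instance to decide which reductions are safe; the paper explicitly flags this in a footnote as the notable exception to the rule that kernelization steps are local. That approach therefore does not yield a reduction that is correct for \emph{every} gluing $H_B$. The paper instead adapts the purely combinatorial Bodlaender--Jansen kernel (perfect matchings, chunks, blocking sets, conflict structures), and even that requires nontrivial surgery: you cannot delete $B$-vertices, so ``remove a chunk vertex'' becomes ``attach a pendant leaf to force it into the cover,'' and crowns must be restricted to $I\subseteq R$ because the independent part $I$ of a crown could pick up new edges after gluing. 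Your claim that the rules transfer ``verbatim'' once the modulator is enlarged underestimates this; several rules in the paper are genuinely modified, not just re-scoped.

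Second, your treatment of \hamcyclevc, \hampathvc, and \hamcyclepathdeg via a representative family of internally realizable path systems on $B$ has a gap: the number of distinct path systems (partial matchings/linkage patterns) on $B$ is superpolynomial in $|B|$, and there is no obvious polynomial-size representative family here. You would end up with a kernel of size exponential in $|B|$, which does not meet the definition of a polynomial boundaried kernelization. The paper sidesteps this entirely with a trivial observation: with a vertex cover $B$ (after lifting), $R$ is independent, and a Hamiltonian cycle/path must alternate $R$ and $B$ up to short detours, so any boundaried graph with $|R|>|B|+1$ glues to a NO-instance for every $H_B$. Hence either $|V(G)|\le 2|B|+1$ already, or one outputs a fixed NO-gadget; no representative-family machinery is needed, and the degree-$\ne 2$ variants reduce to this by contracting degree-two paths in $R$. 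You would need to replace your signature argument with something of this form (or find a genuinely polynomial representative family, which I do not believe exists here).

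For \longcyclevc and \longpathvc your ``interface type / bucketing'' sketch is closer to the paper's actual argument, which is a bipartite matching between $R$-vertices and (pairs of) boundary vertices and an exchange argument via a theorem of Bodlaender et al.; but you should be explicit that the types are pairs from $B$ (plus single endpoints for paths), yielding an $\Oh(|B|^2)$ bound.
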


We complement this by proving \emph{unconditionally} that some problems do not admit polynomial boundaried kernelizations despite admitting a regular polynomial kernelization. To round out the picture we also prove unconditionally that \dominatingsetvc, i.e., \dominatingset parameterized by the size of a given vertex cover, admits no polynomial boundaried kernelization, while it was already known to not admit a polynomial kernelization unless \containment. 

\begin{theorem}\label{full:thm:negative}
	The following parameterized problems do not admit a polynomial boundaried kernelization: \clustereditingmodcvd, \clustereditingparamce, \maximumcutmodvc, \treedeletionvc, \treedeletiontds, \longcycledeg, \longpathdeg, \dominatingsetvc.
\end{theorem}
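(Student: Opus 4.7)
My plan is to establish all eight negative results by an unconditional counting argument based on boundaried-equivalence. For a problem $\Pi$, declare $G_B \equiv_\Pi G'_B$ if, for every boundaried graph $H_B$ compatible with the boundary, the glued instances $G_B \oplus H_B$ and $G'_B \oplus H_B$ agree in the YES/NO answer to $\Pi$. If a boundaried kernelization maps $G_B$ and $G'_B$ to respective outputs $\tilde G_B$ and $\tilde G'_B$, correctness under every gluing forces $G_B \equiv_\Pi \tilde G_B$ and $G'_B \equiv_\Pi \tilde G'_B$; hence any two $\equiv_\Pi$-inequivalent inputs must produce $\equiv_\Pi$-inequivalent outputs. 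A boundaried graph of size $s$ admits at most $2^{O(s^2)}$ encodings, so a polynomial-size output leaves only $2^{\mathrm{poly}(k, |B|)}$ possible distinct outputs. Therefore it suffices, for each listed problem, to exhibit a family of pairwise $\equiv_\Pi$-inequivalent boundaried graphs of cardinality $2^{\omega(\mathrm{poly}(k, |B|))}$, while respecting the parameter and boundary bounds.

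For \dominatingsetvc I would take the boundary to coincide with a vertex cover of size $k$ and, for each chosen family $\mathcal{L} \subseteq 2^{[k]}$, attach one private vertex per set in $\mathcal{L}$ with prescribed neighborhood. Distinct families differ in some subset $S$, and an attachment that requires every vertex of $S$ to be dominated by a private vertex whose neighborhood is exactly $S$ separates the two; this gives $2^{2^k}$ classes, which is doubly-exponential in the parameters. For \clustereditingparamce, \clustereditingmodcvd, \maximumcutmodvc, \treedeletionvc, \treedeletiontds, \longcycledeg, and \longpathdeg, the family is problem-specific but follows the same outline: use the boundary to expose a combinatorial signature (a cluster partition, bipartite assignment, cycle/path pairing, or subtree interface) that takes super-polynomially many distinguishable values, and complete each value with a gluing that detects the signature via the problem's objective.

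The main obstacle is the problem-specific gadget design: one must prove not only that two boundaried graphs look different combinatorially but also that some valid completion $H_B$, respecting both the boundary and parameter budgets, has different YES/NO answers under the two. For \maximumcutmodvc, \treedeletionvc, and \treedeletiontds the signature is less immediate than for dominating set, and one has to carefully choose both the exposed structure and the test attachment so that the single-bit answer to $\Pi$ is sensitive to it. In particular, cycle- and path-based problems (\longcycledeg, \longpathdeg) need boundary pairings that force a Hamiltonian-style routing inside the glued instance, and I expect the design of these pairings within the parameter budget to be the most delicate step.
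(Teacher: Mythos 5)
Your high-level framework is the same as the paper's: a boundaried kernelization of polynomial output size implies a single-exponential bound on the number of gluing-equivalence classes (the paper's \Cref{full:lem:BK-index}), so exhibiting too many pairwise inequivalent boundaried graphs — within the parameter and boundary budgets — rules it out. That part is right. However, there are two substantive issues.

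First, your notion of $\equiv_\Pi$ (agreement of YES/NO under every gluing) is not quite the right invariant for the optimization problems on the list. The boundaried kernelization is \emph{permitted} to change the optimum value by an arbitrary additive offset $\Delta$, which it outputs alongside the shrunk graph; therefore an input and its output do \emph{not} agree in raw YES/NO for a fixed threshold. The equivalence you need is gluing-equivalence modulo a constant offset: $\OPT_\Pi(G_B \oplus H_B) = \OPT_\Pi(G'_B \oplus H_B) + \Delta$ for all $H_B$. With that definition, the counting step you sketch — two inequivalent inputs must map to non-isomorphic output graphs, since $\Delta$ is absorbed into the equivalence — goes through, and is exactly the paper's argument. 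As stated, your $\equiv_\Pi$ is not preserved by the kernelization, so the chain ``correctness under every gluing forces $G_B \equiv_\Pi \tilde G_B$'' does not hold.

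Second, and more critically, you give a concrete construction only for \dominatingsetvc. For that one your idea is close in spirit to the paper's, but it misses a load-bearing detail: one cannot take an arbitrary $\mathcal{L} \subseteq 2^{[k]}$. If $\mathcal{L}$ and $\mathcal{L}'$ differ in a set $S$ but $\mathcal{L}'$ contains some $T \supsetneq S$, then the private vertex for $T$ dominates all of $S$ just as well, and the separating test fails. The paper therefore restricts to an \emph{antichain}: all subsets of size $\lfloor q/2 \rfloor$, giving $2^{\binom{q}{\lfloor q/2 \rfloor}}$ inequivalent graphs (still doubly exponential) while making the separation argument airtight. The separation itself also requires degree-1 pendants on the boundary to force boundary vertices into an optimum dominating set, so that the gap in $\OPT$ depends exactly on whether the witnessing set $I$ is present; the separation is by a difference of one in $\OPT$, not by a direct neighborhood-matching statement.

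For the remaining seven problems the proposal contains no construction at all, only a heuristic outline about ``exposing a combinatorial signature.'' That outline also overshoots what is actually needed. The paper shows for all seven — \clustereditingmodcvd, \clustereditingparamce, \maximumcutmodvc, \treedeletionvc, \treedeletiontds, \longcycledeg, \longpathdeg — that there are \emph{infinitely} many equivalence classes with a \emph{constant} boundary (of size $1$ to $4$) and constant parameter value, by exploiting the additive offset $\Delta$: one exhibits a pair of test attachments $H^0_B$ and $H^h_B$ for which the difference $\OPT(G^i_B \oplus H^h_B) - \OPT(G^j_B \oplus H^h_B)$ does not stay constant across attachments (e.g.\ two cliques glued at a vertex for \clusterediting, two paths meeting at the boundary for \longcycle/\longpath, stars/flowers for \treedeletion, independent sets opposite complete bipartite parts for \maximumcut). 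This rules out boundaried kernelization of \emph{any} size — strictly stronger than the polynomial bound — and is markedly simpler than manufacturing super-polynomially many boundary-signatures, which is only necessary for \dominatingsetvc because \dominatingset genuinely has finite integer index. As written, the proposal does not establish the theorem for these seven problems.
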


Finally, we show that, like protrusion replacement, boundaried kernelization can itself be useful for obtaining regular kernelization results. Concretely, we improve the size of a kernelization for \vertexcovertdd, i.e., \vertexcover parameterized by the size of a given deletion set/modulator to treedepth at most $d$, from $\Oh(\cramped{k^{2^{\Theta(d^2)}}})$ \cite{DBLP:journals/algorithmica/BougeretS19,DBLP:journals/siamdm/HolsKP22} to $\Oh(\cramped{k^{2^{d-1}}})$ vertices, which is a significant improvement and comes much closer to the known lower bound of $\Oh(\cramped{k^{2^{d-2}+1}})$ vertices \cite{DBLP:journals/siamdm/HolsKP22}.

\begin{theorem}\label{full:thm:vc_tdd}
	For any constant $d \in \N_{\geq 1}$, the parameterized problem \vertexcovertdd admits a polynomial kernelization with $\Oh(k^{2^{d-1}})$ vertices.
\end{theorem}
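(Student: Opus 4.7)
The plan is to prove \Cref{full:thm:vc_tdd} by strengthening the statement to a boundaried-kernelization one and inducting on $d$. Concretely, I would establish the following: given a boundaried graph $G_B$ whose non-boundary vertex set decomposes as $V(G) \setminus B = X \uplus Y$ with $|X| = k$ and $\td(G[Y]) \leq d$, one can compute in polynomial time an equivalent boundaried graph $G'_B$ of size $\Oh((k+|B|)^{2^{d-1}})$. Instantiating this statement at the top level with $B = \emptyset$ and $Y = V(G) \setminus X$ yields the $\Oh(k^{2^{d-1}})$ bound of \Cref{full:thm:vc_tdd}.

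Base case $d = 1$: here $Y$ is an independent set, so every $y \in Y$ has all its neighbors inside $X \cup B$. I would apply a \vertexcover-specific boundaried reduction in the spirit of the positive result for \vertexcovervc (\Cref{full:thm:positive}), using a crown-like or LP-based marking that is aware of $B$, to reduce $Y$ to $\Oh(k+|B|)$ representatives while preserving equivalence under arbitrary gluings on $B$.

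Inductive step $d \geq 2$: compute an elimination forest $F$ of $G[Y]$ of depth at most $d$ and let $r_1, \dots, r_m$ be its roots. For each $i$, the subtree $Y_i$ rooted at $r_i$ has the property that $Y_i \setminus \{r_i\}$ induces a graph of treedepth at most $d-1$, and the only interaction of $Y_i$ with the rest of $G$ is through $X \cup B$. Viewing $Y_i$ as a boundaried sub-instance with modulator $\{r_i\}$ and boundary $(X \cup B) \cap N_G[V(Y_i)]$, the induction hypothesis shrinks each subtree to $\Oh((k + |B|)^{2^{d-2}})$ vertices. A complementary marking phase then classifies the reduced subtrees by their \emph{type} (capturing how a subtree together with its root interacts with $X \cup B$) and keeps only $\Oh((k + |B|)^{2^{d-2}})$ subtrees per type via a sunflower- or representative-style argument; the remaining subtrees are safely removable because a kept representative behaves identically under any gluing on $B$. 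Multiplying size-per-subtree by number-of-retained-subtrees yields the target $\Oh((k+|B|)^{2^{d-1}})$.

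The main obstacle, I expect, is designing the type classification together with the marking rule so that equivalence is preserved under \emph{every} gluing on $B$, not merely the empty one — this strictly stronger requirement is precisely what a boundaried kernelization must deliver, and is what enables local substitution of reduced subtrees without rechecking the rest of the instance. The second delicate point is the exponent accounting: at each recursive level the bound must \emph{exactly} square rather than blow up further, which is what turns the previous $\Oh(k^{2^{\Theta(d^2)}})$ bound into $\Oh(k^{2^{d-1}})$ and brings it close to the known lower bound $\Oh(k^{2^{d-2}+1})$ of \cite{DBLP:journals/siamdm/HolsKP22}.
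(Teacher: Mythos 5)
Your overall strategy — strengthen to a boundaried-kernelization statement, apply the $d=1$ crown machinery as the base case, and induct on $d$ — matches the paper's. But the inductive step as sketched has a concrete gap in the exponent accounting that your proposal does not resolve.

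You propose to shrink each rooted subtree $Y_i$ individually (treating $\{r_i\}$ as a size-$1$ modulator and $(X\cup B)\cap N_G[V(Y_i)]$ as boundary), obtaining $\Oh((k+|B|)^{2^{d-2}})$ vertices per subtree, and then to "keep only $\Oh((k+|B|)^{2^{d-2}})$ subtrees" via a representative or sunflower-style marking. The trouble is that the component count which such a marking phase can safely achieve is governed by the chunk/blocking-set machinery of Hols et al., and for graphs of treedepth $\leq d$ (where the maximal minimal blocking set size is $2^{d-2}+1$) that machinery yields $\Oh((k+|B|)^{2^{d-2}+1})$ retained components — one power more than you claim, and essentially tight by the lower bound you cite. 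Plugging in the correct count, your plan gives $\Oh((k+|B|)^{2^{d-2}+1}) \cdot \Oh((k+|B|)^{2^{d-2}}) = \Oh((k+|B|)^{2^{d-1}+1})$, overshooting the target by a factor of $(k+|B|)$. A second, smaller issue: you recurse first and mark afterwards, but the blocking-set bound is stated for graphs in $\cC_B$ with $\cC=\Ctdd$, and the output of a recursive boundaried kernelization is not guaranteed to land back in $\Ctdd$ — the paper therefore reduces the component count \emph{before} recursing.

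The paper closes both gaps with one move that your proposal lacks: \emph{grouping}. After bounding the number of components of $G'-B$ to $\Oh(|B|^{2^{d-2}+1})$ (via \Cref{full:lem:bs_cc}), it partitions the surviving components into groups of at most $|B|$ components each. Each \emph{group}, not each component, is handed to the recursion: the roots of its components together form a modulator of size at most $|B|$ with boundary $B$, so the induction hypothesis shrinks each group to $\Oh(|B|^{2^{d-2}})$ vertices. There are only $\Oh(|B|^{2^{d-2}+1})/|B| = \Oh(|B|^{2^{d-2}})$ groups, so the product is $\Oh(|B|^{2^{d-1}})$ as desired, and \Cref{full:lem:glue_equiv_get_equiv} lets the locally reduced groups be re-glued while preserving global gluing-equivalence with the summed offsets. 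Recursing per component squanders one power of $|B|$ on the number of recursive calls — exactly the slack that the grouping trick saves and that marking alone cannot recover.
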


\paragraph{Related work.}
Arnborg et al.~\cite{DBLP:journals/jacm/ArnborgCPS93} used finite index of gluing-equivalence (defined in \Cref{full:section:boundariedkernelization}) and protrusion replacement for the construction of algorithms that solve the membership problem for MSO-definable graph classes with bounded treewidth. Likewise, Fellows and Langston~\cite{DBLP:conf/focs/FellowsL89} showed that under condition of finite index of gluing-equivalence, the obstruction set of a minor-closed graph class can be computed as soon as a treewidth bound for the former is known. Bodlaender and van Antwerpen-de Fluiter~\cite{DBLP:journals/iandc/BodlaenderF01} extended the use to several decision, construction, and optimization problems on graphs of bounded treewidth.

As a tool for obtaining linear kernelizations for \myproblem{(Connected) Dominating Set} on graphs with a fixed excluded topological minor, Fomin et al.~\cite{DBLP:journals/talg/FominLST18} introduce a notion dubbed ``generalized protrusion:'' This refers to a subgraph with constant-size boundary but, instead of having constant treewidth, it contains at most a constant number of vertices from the sought solution (presently, from the sought minimum (connected) dominating set).
In a work on kernelization for \myproblem{Integer Linear Program Feasibility}, Jansen and Kratsch~\cite{DBLP:conf/esa/JansenK15} consider shrinking protrusion-like subsystems of ILPs with a constant-size boundary that are either totally unimodular or have a Gaifman graph of bounded treewidth.

Other prior work has considered ``dynamic sketching''~\cite{DBLP:conf/fsttcs/AssadiKLT15}, respectively ``preprocessing under uncertainty''~\cite{DBLP:conf/stacs/FafianieKQ16,DBLP:conf/mfcs/FafianieHKQ16}, mostly for tractable problems, where parts of the input are dynamic or unknown and the preprocessing needs to be consistent with any possible instantiation of the dynamic/unknown part, e.g., adjacency of vertices in dynamic part may change. Here the dynamic/unknown part is similar in spirit to the concept of a boundary, though the effects that either can have on the overall solution are very different.

\paragraph{Organization.}
\Cref{full:section:preliminaries} contains preliminaries regarding (graph) problems. In \Cref{full:section:boundariedkernelization} we define boundaried graphs, introduce boundaried kernelization and obtain some first results.
Several polynomial boundaried kernelization results are given in \Cref{full:section:upperbound}, while in \Cref{full:section:lowerbound} we unconditionally rule out existence of (polynomial) boundaried kernelization for several problems. In \Cref{full:section:VC_constTD} we apply the notion of boundaried kernelization in order to show an improved polynomial kernelization for \vertexcovertdd. We conclude in Section~\ref{full:section:conclusion}.


\section{Preliminaries}\label{full:section:preliminaries}    
\paragraph{Decision and optimization problems.}
A \textit{decision problem} is a set $\Pi \subseteq \Sigma^*$. We call any $x \in \Sigma^*$ an \textit{instance} of $\Pi$.
An \textit{optimization problem} is a function $\Pi \colon \Sigma^*  \times \Sigma^* \to \N \cup \{\pm \infty\}$. Again, an instance of $\Pi$ is defined as any $x \in \Sigma^*$. For some instance $x$ and some $s \in \Sigma^*$, if $\Pi(x, s) \neq \pm\infty$, i.e., $\Pi(x, s) \in \N$, then $s$ is called a \textit{(feasible) solution} for $\Pi$ on $x$, and $\Pi(x, s)$ is called the \textit{value} of $s$.
Optimization problems are divided into \textit{minimization} and \textit{maximization} problems. For a minimization problem we define the \textit{optimum value} for $\Pi$ on instance $x$, denoted $\OPT_{\Pi}(x)$, or simply $\OPT(x)$ if $\Pi$ is clear from context, as the lowest value of a feasible solution $s$ on $x$. Conversely, for a maximization problem we define $\OPT_{\Pi}(x)$ as the highest value of a feasible solution on $x$. If there is no feasible solution for $\Pi$ on $x$, then $\OPT_\Pi(x) = +\infty$ for a mininimization problem, resp. $\OPT_\Pi(x) = - \infty$ for a maximization problem. A solution with optimum value is called an \textit{optimum solution}.
For optimization problems used throughout this work, the \emph{value} $\Pi(x, s)$ of solution $s$ on instance $x$ is simply the \emph{size} of $s$ (denoted $|s|$). We remark, however, that in general this does not need to be the case.
We call an optimization problem $\Pi$ an \textit{$\NP$-optimization problem}, if there exists some polynomial function $p$, such that for every instance $x \in \Sigma^*$ the following are upper bounded by $p(|x|)$: (i) the size of any feasible solution $s \in \Sigma^*$ on $x$; (ii) the time needed to compute $\Pi(x, s)$ for any feasible $s$; (iii) the time needed to check if $s \in \Sigma^*$ with $|s| \leq p(|x|)$ is feasible. Note that by (ii), for such $\Pi$ the value of a feasible solution $s$ for $x$ can be encoded in length at most $p(|x|)$.
For any optimization problem $\Pi$ we define the decision problem $\Pi_d$ such that for any instance $(x, k) \in \Sigma^* \times \N$ it holds that $(x, k) \in \Pi_d$ if and only if there exists a feasible solution with value at most $k$ for minimization problems, resp., at least $k$ for maximization problems. 

\paragraph{Parameterized complexity.}
A \textit{parameterized problem} is a set $Q \subseteq \Sigma^* \times \N$. For a tuple $(x, k) \in \Sigma^* \times \N$, $k$ is called the \textit{parameter}.
For an optimization problem $\Pi$, its \emph{standard parameterization} is the decision variant $\Pi_d$ with sought solution value $k$ as the parameter, denoted by $\Pi[k]$. In contrast to this, for $\Pi$ being some decision/optimization problem and $\rho$ some minimization problem, we define the \textit{structurally parameterized problem} $\Pi[\rho]$, for which it holds that $(x, s, k) \in \Pi[\rho]$ if and only if $s$ is a feasible solution of value at most $k$ for $\rho$ on $x$, and (i) for $\Pi$ being a decision problem, $x \in \Pi$; (ii) for $\Pi$ being an optimization problem, $x \in \Pi_d$, i.e., $x = (x', \ell)$ where $\ell$ is the sought solution value for $\Pi$ on $x'$.

Let $Q$ be a parameterized problem and let $f$ be a computable function. A \textit{kernelization} for $Q$ of \textit{size} $f$ is a polynomial-time algorithm which, given as input $(x, k)$, outputs $(x', k')$ such that $|x|, k' \leq f(k)$ and $(x, k) \in Q \Leftrightarrow (x', k') \in Q$. If $f$ is a polynomial, the algorithm is called a \textit{polynomial kernelization}. We will use the term \textit{kernel} interchangeably with the term \textit{kernelization}.

\paragraph{Graphs and graph problems.}
We use standard graph theoretic notation mostly following Diestel \cite{books/Diestel17}. We denote the classes of independent and complete graphs, trees and forests by $\Cindependent, \Ccomplete, \Ctree$ and $\Cforest$, respectively. Any graph class closed under taking subgraphs is said to be \textit{hereditary}. A \textit{graph problem} is a problem for which every instance $x = (G, \hat{x})$ consists of a graph $G$ and the remaining string $\hat{x}$. We say that $\Pi$ is a \textit{pure graph problem} if the remaining string $\hat{x}$ is empty, i.e., every instance is simply a graph $G$, and $\Pi$ is isomorphism-invariant. For some graph class $\mathcal{C}$ and graph $G$, a $\mathcal{C}$-modulator (or vertex deletion set to $\cC$) for $G$ is a vertex set $X \subseteq V(G)$ such that $G - X \in \mathcal{C}$.  We denote the problem of finding a minimum size modulator to $\mathcal{C}$ by $\MODC$. All other problems discussed in this work are defined in \Cref{app:problems}.

The \textit{treedepth} of a graph $G$, denoted $\td(G)$, is the minimum height of a rooted forest $F$, called a \textit{treedepth decomposition} of $G$, such that if $\{u, v\}$ is an edge in $G$, then $u$ is an ancestor of $v$ or vice versa. There is also a recursive definition for the treedepth of $G$, which basically constructs the treedepth decomposition $F$ of $G$: (i) if $G$ is an independent set, then $\td(G) = 1$; (ii) else if $G$ is disconnected, then $\td(G) = \max\{\td(C) \mid C \text{ is a component of } G\}$; (iii) else $\td(G) = \min\{\td(G - v) + 1 \mid v \in V(G)\}$. Similar to our other graph class notation, we denote by $\Ctdd$ the class of graphs whose treedepth is at most $d$. 


\section{Boundaried kernelization}\label{full:section:boundariedkernelization}
\subsection{Boundaried graphs and related terms}
The notion of boundaried graphs and many of the related definitions were already given, e.g., by Bodlaender and van Antwerpen-de Fluiter \cite{DBLP:journals/iandc/BodlaenderF01}, and extensively used for the notion of protrusions and the related meta kernelization \cite{DBLP:journals/jacm/BodlaenderFLPST16}. 
We state our definitions on this topic and follow them by two technical but simple lemmas.

A \textit{($B$-)boundaried graph} consists of a graph $G$ and a \textit{boundary} $B \subseteq V(G)$, and is denoted as $G_B$. Two boundaried graphs $G_B, G'_B$ are \textit{isomorphic}, if there exists an isomorphism from $G$ to $G'$ that identifies vertices in $B$. 
A boundaried graph $G'_B$ is said to be a subgraph of $G_B$, if $G'$ is a subgraph of $G$.

For graph class $\cC$ we denote by $\mathcal{C}_B$ the class of all $\mathcal{C}$-modulated $B$-boundaried graphs, i.e., all boundaried graphs $G_B$ for which it holds that $G-B \in \mathcal{C}$.

Given two boundaried graphs $G_B$ and $H_C$, the \textit{gluing} operation on those boundaried graphs, denoted as $G_B \oplus H_C$, results in a new (simple) graph, which is the disjoint union of $G$ and $H$ while identifying vertices from $B \cap C$. In particular, if an edge $e$ between vertices in $B \cap C$ is contained in both $G$ and $H$, the resulting graph contains $e$ only once.
For convenience, we will tacitly assume that $V(G) \cap V(H) \subseteq B \cap C$, unless mentioned otherwise.
A few examples of graph gluing are given in \Cref{full:fig:ds_ce}.
We remark that for boundaried graphs $F_A, G_B, H_C$ clearly $(F_A \oplus G_B)_{A \cup B} \oplus H_C$ is isomorphic to $F_A \oplus (G_B \oplus H_C)_{B \cup C}$ and that $F_A \oplus G_B$ is isomorphic to $G_B \oplus F_A$, i.e., in this sense graph gluing is associative and commutative.  It also preserves isomorphism, i.e., if $G_B$ is isomorphic to $G'_B$ and $H_C$ to $H'_C$, then also $G_B \oplus H_C$ is isomorphic to $G'_B \oplus H'_C$.

Let $\Pi$ be a pure graph problem, let $G_B$ and $G'_B$ be boundaried graphs. We say that $G_B$ and $G'_B$ are \textit{gluing equivalent with respect  to $\Pi$ and $B$}, denoted $G_B \equiv_{\Pi, B} G'_B$, if: (i)  for $\Pi$ being a decision problem, it holds for every $H_B$, that $G_B \oplus H_B \in \Pi \Leftrightarrow G'_B \oplus H_B \in \Pi$; (ii) for $\Pi$ being an optimization problem, there exists some constant $\Delta \in \Z$ such that for every $H_B$ it holds that $\OPT_\Pi(G_B \oplus H_B) = \OPT_\Pi(G'_B \oplus H_B) + \Delta$. In the latter case we also say that $G_B$ and $G'_B$ are gluing equivalent with \textit{offset} $\pm \Delta$, tacitly having fixed one direction.
For some graph class $\mathcal{C}$ the equivalence relation $\equiv_{\Pi, B}^\mathcal{C}$ (resp. $\equiv_{\Pi,B}^{\mathcal{C}_B}$) is the restriction of $\equiv_{\Pi, B}$ to $\mathcal{C}$ (resp. $\mathcal{C}_B$). If $\Pi$, $B$, and $\cC$ or $\cC_B$ are clear from context, we write $\equiv$.

Some pure graph decision problem $\Pi$ has \textit{finite index}, if there exists some function $f \colon \N \to \N$, such that for each boundary $B$ the number of equivalence classes of $\equiv_{\Pi, B}$ is at most $f(|B|)$. In the case of $\Pi$ being an optimization problem, $\Pi$ is said to have \textit{finite integer index}. For graph class $\mathcal{C}$ we say that $\Pi$ has \textit{finite (integer) index on $\mathcal{C}$} (resp.\ on $\mathcal{C}_B$), if there exists some function $f \colon \N \to \N$ such that for each boundary $B$ the number of equivalence classes of $\equiv_{\Pi, B}^{\mathcal{C}}$ (resp.\ $\equiv_{\Pi, B}^{\mathcal{C}_B}$) is at most $f(|B|)$.
Among problems with finite (integer) index, we highlight those having \textit{single-exponential (integer) index}, which in addition requires some constant $c$ and polynomially bounded function $f$ such that $\equiv_{\Pi, B}$ (resp.\ $\equiv_{\Pi, B}^{\mathcal{C}}$ and $\equiv_{\Pi, B}^{\mathcal{C}_B}$) has $\mathcal{O}(c^{f(|B|)})$ equivalence classes.

The following lemma allows us to use gluing-equivalence for an artificially increased boundary in order to obtain automatically also gluing-equivalence for the actually needed boundary. After that, we state \Cref{full:lem:glue_equiv_get_equiv} which will be used in \Cref{full:section:VC_constTD}. Note that problem $\Pi$ can be either a decision or an optimization problem. For the proof of \Cref{full:lem:BD} we write the proof for both cases. However, since there is not much additional argumentation needed for the case of a decision problem, in subsequent proofs we will assume that $\Pi$ is an optimization problem.

\begin{lemma}\label{full:lem:BD}
	Let $\Pi$ be a pure graph problem, $G$ and $G'$ graphs, and $B, C$ vertex subsets of both $V(G)$ and $V(G')$ such that $B \subseteq C$.
	Then $G_C \equiv_{\Pi, C} G'_C$ implies $G_B \equiv_{\Pi, B} G'_B$, with the same offset $\Delta$ for these two equivalences, if $\Pi$ is an optimization problem.
\end{lemma}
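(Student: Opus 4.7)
The plan is to reduce any gluing on the small boundary $B$ to a gluing on the larger boundary $C$ by padding the attached graph with isolated vertices. So, fix an arbitrary boundaried graph $H_B$ (to glue to $G_B$ and to $G'_B$). I would construct a witness boundaried graph $H'_C$ from $H_B$ by adding the vertices of $C \setminus B$ as isolated vertices and declaring them to be part of the new boundary $C$. Formally, $V(H') = V(H) \cup (C \setminus B)$ (taking the vertices of $C \setminus B$ to be the same vertex objects as in $G$ and $G'$, consistent with the tacit convention $V(H') \cap V(G) \subseteq C$), and $E(H') = E(H)$.

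The key claim is then that $G_B \oplus H_B$ is isomorphic to $G_C \oplus H'_C$, and symmetrically $G'_B \oplus H_B$ is isomorphic to $G'_C \oplus H'_C$. This should be essentially immediate from the definition of gluing: the extra boundary vertices in $C \setminus B$ that appear in $H'_C$ contribute no new edges (since they are isolated in $H'$) and get identified with the corresponding vertices of $G$ (respectively $G'$) that already exist, so no new vertices are created either. The isomorphism is then the identity map on the underlying vertex set.

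Once the two isomorphisms are in place, the conclusion follows by instantiating the hypothesis $G_C \equiv_{\Pi, C} G'_C$ at the specific boundaried graph $H'_C$. For the decision case, $G_C \oplus H'_C \in \Pi \Leftrightarrow G'_C \oplus H'_C \in \Pi$ transfers through the isomorphisms (since $\Pi$ is a pure, i.e., isomorphism-invariant, graph problem) to give $G_B \oplus H_B \in \Pi \Leftrightarrow G'_B \oplus H_B \in \Pi$. For the optimization case, $\OPT_\Pi(G_C \oplus H'_C) = \OPT_\Pi(G'_C \oplus H'_C) + \Delta$ transfers likewise to $\OPT_\Pi(G_B \oplus H_B) = \OPT_\Pi(G'_B \oplus H_B) + \Delta$, with \emph{the same} offset $\Delta$ since we are reusing the constant from the $C$-equivalence. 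Since $H_B$ was arbitrary, this establishes $G_B \equiv_{\Pi, B} G'_B$ with offset $\Delta$.

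The only step that requires any care is verifying the two isomorphisms. Here one has to be slightly pedantic about the tacit vertex-disjointness convention (``$V(G) \cap V(H) \subseteq B \cap C$'' from the definition of gluing) when declaring that the isolated vertices of $H'$ in $C \setminus B$ are identified with the corresponding vertices of $G$ and $G'$; but this is a bookkeeping matter, not a real obstacle, since one can always rename vertices to match. Everything else is a direct unpacking of the definitions of gluing and gluing-equivalence.
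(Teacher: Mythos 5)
Your proof is correct and follows essentially the same approach as the paper's: fix an arbitrary $H_B$, pad it with the vertices of $C \setminus B$ as isolated vertices to form $H'_C$, observe that $G_B \oplus H_B$ and $G_C \oplus H'_C$ coincide (the paper states this as equality after a WLOG renaming, you state it as an isomorphism and invoke isomorphism-invariance — a cosmetic difference), and then instantiate the hypothesized $\equiv_{\Pi,C}$-equivalence at $H'_C$.
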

\begin{proof}
	Assume first that $\Pi$ is an optimization problem and that the stated condition holds, i.e., for some constant $\Delta$ and every $H_C$ it holds that $\OPT_\Pi(G_C \oplus H_C) = \OPT_\Pi(G'_C \oplus H_C) + \Delta$. Now to show that for any $H_B$ it also holds that $\OPT_\Pi(G_B \oplus H_B) = \OPT_\Pi(G'_B \oplus H_B) + \Delta$, fix an arbitrary $B$-boundaried graph $H_B$. Since $\Pi$ is isomorphism invariant, we assume without loss of generality, that the vertices in common between $H$ and $G$, resp., $H$ and $G'$, are all  contained in $B$. As a result, also no vertices in $C \setminus B$ are in common between $H$ and $G, G'$. From $H$ we construct the graph $H'$ by adding the vertices in $C \setminus B$ as an independent set, which leads to the fact that $G_B \oplus H_B = G_C \oplus H'_C$ and $G'_B \oplus H_B = G'_C \oplus H'_C$. This way we obtain the needed equivalence $\OPT_\Pi(G_B \oplus H_B) = \OPT_\Pi(G_C \oplus H'_C) = \OPT_\Pi(G'_C \oplus H'_C) + \Delta = \OPT_\Pi(G'_B \oplus H_B) + \Delta$.\\
	Now assume that $\Pi$ is a decision problem. Again, for any $H_B$ it holds that $G_B \oplus H_B = G_C \oplus H'_C$ and $G'_B \oplus H_B = G'_C \oplus H'_C$, which makes $G_B \oplus H_B \in \Pi \Leftrightarrow G'_B \oplus H_B \in \Pi$ follow from the fact that $G_C \oplus H'_C \in \Pi \Leftrightarrow G'_C \oplus H'_C \in \Pi$.
\end{proof}

\begin{lemma}\label{full:lem:glue_equiv_get_equiv}
	Let $\Pi$ be a pure graph problem, $B$ some vertex set, and $G^1$, $G^2$, $\hat{G}^1$, $\hat{G}^2$ graphs, such that $G^1_B \equiv_{\Pi, B} \hat{G}^1_B$ and $G^2_B \equiv_{\Pi, B} \hat{G}^2_B$. 
	Then it holds that $(G^1_B \oplus G^2_B)_B \equiv_{\Pi, B} (\hat{G}^1_B \oplus \hat{G}^2_B)_B$. \\
	If $\Pi$ is an optimization problem and $\Delta_1, \Delta_2$ are the offset values for the equivalences between $G^1_B$ and $\hat{G}^1_B$, resp. $G^2_B$ and $\hat{G}^2_B$, when glued to any $H_B$, then the offset for the equivalence between $(G^1_B \oplus G^2_B)_B$ and $(\hat{G}^1_B \oplus \hat{G}^2_B)_B$ equals $\Delta_1 + \Delta_2$.
\end{lemma}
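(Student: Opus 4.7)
The plan is to fix an arbitrary $B$-boundaried graph $H_B$ and peel off the two equivalences one at a time, using the associativity and commutativity of $\oplus$ established in the excerpt. The key observation is that because $G^1, G^2, \hat G^1, \hat G^2$ all share the same boundary $B$, each intermediate gluing is again a $B$-boundaried graph, so that the hypothesis equivalences remain applicable.

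I would treat the optimization case in detail; the decision case follows along the same lines by replacing the equalities of $\OPT$-values with membership equivalences (as in the proof of \Cref{full:lem:BD}). Fix $H_B$. By associativity,
\[
  (G^1_B \oplus G^2_B)_B \oplus H_B \;\cong\; G^1_B \oplus (G^2_B \oplus H_B)_B.
\]
Now view $(G^2_B \oplus H_B)_B$ as a single $B$-boundaried graph and apply the first hypothesis with it in the role of the attached graph, obtaining
\[
  \OPT_\Pi\bigl(G^1_B \oplus (G^2_B \oplus H_B)_B\bigr) \;=\; \OPT_\Pi\bigl(\hat G^1_B \oplus (G^2_B \oplus H_B)_B\bigr) + \Delta_1.
\]
Using associativity and commutativity again, the right-hand glued graph is isomorphic to $G^2_B \oplus (\hat G^1_B \oplus H_B)_B$, so the second hypothesis, now applied with $(\hat G^1_B \oplus H_B)_B$ in the role of the attached graph, yields
\[
  \OPT_\Pi\bigl(G^2_B \oplus (\hat G^1_B \oplus H_B)_B\bigr) \;=\; \OPT_\Pi\bigl(\hat G^2_B \oplus (\hat G^1_B \oplus H_B)_B\bigr) + \Delta_2.
\]
A final appeal to associativity rewrites the last expression as $\OPT_\Pi\bigl((\hat G^1_B \oplus \hat G^2_B)_B \oplus H_B\bigr)$, and chaining the three identities gives
\[
  \OPT_\Pi\bigl((G^1_B \oplus G^2_B)_B \oplus H_B\bigr) \;=\; \OPT_\Pi\bigl((\hat G^1_B \oplus \hat G^2_B)_B \oplus H_B\bigr) + \Delta_1 + \Delta_2,
\]
which is exactly the claimed equivalence with combined offset $\Delta_1 + \Delta_2$. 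Since $H_B$ was arbitrary, the conclusion follows.

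I do not anticipate a real obstacle here; the only point that requires a bit of care is bookkeeping the boundaries when shuffling parentheses, in particular verifying at each step that the intermediate object is indeed a $B$-boundaried graph (which it is, because the three constituent graphs all carry the same boundary $B$, so no ``fresh'' boundary vertices are introduced). The isomorphism-invariance of $\Pi$, together with the remark that gluing preserves isomorphism, makes the rewrites rigorous.
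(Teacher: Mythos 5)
Your proof is correct and follows essentially the same route as the paper's: fix an arbitrary $H_B$, apply the first equivalence with $(G^2_B \oplus H_B)_B$ playing the role of the attached graph, then apply the second equivalence with $(\hat G^1_B \oplus H_B)_B$ as the attached graph, and chain the identities using associativity and commutativity of $\oplus$. Your write-up is slightly more explicit about the bookkeeping of boundaries and the isomorphism-invariance justification, but the decomposition and the order of the two peeling steps are identical to the paper's proof.
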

\begin{proof}
	Without loss of generality assume that $\Pi$ is an optimization problem. Let $H_B$ be any $B$-boundaried graph.
	From $G^1_B \equiv_{\Pi, B} \hat{G}^1_B$ follows existence of some $\Delta_1$ such that $\OPT(G^1_B \oplus (G^2_B \oplus H_B)_B) = \OPT(\hat{G}^1_B \oplus (G^2_B \oplus H_B)_B)  + \Delta_1$. At the same time $G^2_B \equiv_{\Pi, B} \hat{G}^2_B$ implies some $\Delta_2$ such that $\OPT(G^2_B \oplus (\hat{G}^1_B \oplus H_B)_B) = \OPT(\hat{G}^2_B \oplus (\hat{G}^1_B \oplus H_B)_B)  + \Delta_2$. 
	Together with commutativity  and associativity of graph gluing, this yields $\OPT((G^1_B \oplus G^2_B)_B \oplus H_B)) = \OPT((\hat{G}^1_B \oplus G^2_B)_B \oplus H_B) + \Delta_1 = \OPT((\hat{G}^1_B \oplus \hat{G}^2_B)_B \oplus H_B) + \Delta_1 + \Delta_2$.
\end{proof}

\subsection{Boundaried kernelization}
We will now formally introduce our notion of boundaried kernelization. Presently, we do this for the restricted setting of pure graph problems $\Pi$ parameterized by some pure graph minimization problem $\rho$. This enables us to prove a few general results on boundaried kernelizations and captures most structural parameterizations while also keeping the required notation somewhat concise. 
Possible extensions of the notion are briefly discussed in the conclusion.
We remark that while we work with pure graph problems, the boundaried kernelization gets as input a \emph{boundaried} graph $G_B$. This is consistent since the result of gluing $G_B$ with any other boundaried graph $H_B$ results in the \emph{simple} graph $Q = G_B \oplus H_B$, which is then the actual input to the pure graph problem, for which the preprocessing was done on the known local part $G_B$.

Unlike in regular kernelization, where, for $\Pi$ being an optimization problem, a sought solution size $k$ is given and can be changed by the kernel (adjusting to situations like safeness of including some vertex in the solution), this is not the case for boundaried kernelization, where only local structure of $G_B$ is used, independently of what graph $H_B$ could be glued to $G_B$ and what the global solution size could be. For this reason, the boundaried kernelization needs to output its influence on the solution size in form of the solution size offset $\Delta$, which does not need to be bounded. Note that this $\Delta$ is already part of the definition of gluing equivalence.

\begin{definition}[Boundaried kernelization]\label{full:def:bk}
	Let $\Pi$ be a pure graph problem, $\rho$ a pure graph minimization problem, and $f$ a computable function. A \textit{boundaried kernelization} of size $f$ for $\Pi[\rho]$ is a polynomial-time algorithm that, given a boundaried graph $G_B$ and $\rho$-solution $s$, outputs a $\equiv_{\Pi, B}$-equivalent boundaried graph $G'_B$ with $\rho$-solution $s'$, such that $|G'|, |s'|, \rho(G', s') \leq f(|B| + \rho(G, s))$. Additionally, if $\Pi$ is an optimization problem, the algorithm also needs to output an integer $\Delta$, such that for all $H_B$ it holds that $\OPT_\Pi(G_B \oplus H_B) = \OPT_\Pi(G'_B \oplus H_B) + \Delta$. If $f$ is upper bounded by some polynomial function, the boundaried kernelization has \textit{polynomial size} and is called a \textit{polynomial boundaried kernelization}.
\end{definition}

Later we will show polynomial boundaried kernelizations for several problems parameterized by the size of a given modulator $|X|$ to some hereditary graph class $\mathcal{C}$. As the size of the output should be bounded polynomially in $|B| + |X|$, it suffices to bound/reduce the size of $R := V(G) \setminus (B \cup X)$. In order to simplify this, we show that it suffices to deal with the special case where $B$ itself is a modulator to $\mathcal{C}$, i.e., where $G-B\in\mathcal{C}$.

\begin{lemma}\label{full:lem:Bismod}
	Let $\Pi$ be a pure graph problem, $\cC$ be a hereditary graph class and $f$ a computable function. If there exists a boundaried kernelization of size $f(|B|)$ for $\Pi[\modC]$ with the additional restriction that $B$ coincides with the given and output solutions to \MODC, then there also exists a boundaried kernelization of size $f$ for $\Pi[\modC]$ without that restriction.
\end{lemma}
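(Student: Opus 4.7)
The plan is to reduce the unrestricted setting to the restricted one by artificially enlarging the boundary so that it absorbs the input modulator. Concretely, given an input boundaried graph $G_B$ with $\cC$-modulator $X$, I would set $B' := B \cup X$. Since $\cC$ is hereditary and $G - X \in \cC$, the induced subgraph $G - B' = (G - X) - (B \setminus X)$ also lies in $\cC$, so $B'$ is itself a $\cC$-modulator of $G$, and the pair $(G_{B'}, B')$ constitutes a legal input to the assumed restricted boundaried kernelization (its modulator coincides with its boundary).

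Feeding $(G_{B'}, B')$ into the restricted kernel would yield a boundaried graph $G'_{B'}$ of size at most $f(|B'|)$ with output $\cC$-modulator $B' \subseteq V(G')$, together with (in the optimization case) an integer offset $\Delta$, such that $G_{B'} \equiv_{\Pi, B'} G'_{B'}$. I would then output the same graph viewed with the original boundary $B$, namely $G'_B$, together with the $\cC$-modulator $X' := B'$ and the offset $\Delta$. Gluing-equivalence on the smaller boundary follows immediately from \Cref{full:lem:BD}, which transfers equivalence (and, where applicable, the offset) from the larger boundary $B'$ to its subset $B$. For the size bound, $|G'| \leq f(|B'|) \leq f(|B| + |X|)$ by monotonicity of $f$ (which we may assume without loss of generality by replacing $f$ with $n \mapsto \max_{m \leq n} f(m)$), and $|X'| = |B'| \leq |G'| \leq f(|B| + |X|)$, so both the graph and the output modulator are bounded as required. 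Polynomial running time is inherited directly from the restricted kernelization.

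I do not foresee a substantive obstacle; the only mild subtleties are (i) invoking hereditariness of $\cC$ to conclude $G - B' \in \cC$, and (ii) noticing that the output modulator must be $X' = B'$ rather than the original $X$, because the restricted kernel only guarantees that the enlarged set $B'$ (and not $X$ alone) remains a modulator of the reduced graph $G'$. Once these two observations are in place, the proof reduces to a direct application of \Cref{full:lem:BD} together with the size accounting above.
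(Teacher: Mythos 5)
Your proposal is correct and follows essentially the same approach as the paper's own proof: enlarge the boundary to $B' = B \cup X$, invoke hereditariness to see that $B'$ is itself a $\cC$-modulator, run the restricted kernelization on $G_{B'}$, and transfer the gluing-equivalence back to the original boundary $B$ via \Cref{full:lem:BD}. The only difference is that you explicitly address the (minor) monotonicity point for the size function $f$, which the paper leaves implicit.
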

\begin{proof}
	Without loss of generality assume that $\Pi$ is an optimization problem. Let $G_B$ be the boundaried graph and $X$ the $\cC$-modulator, which are given to the unrestricted boundaried kernelization. Since $\cC$ is hereditary per requirement, it also holds that $B \cup X$ is a $\cC$-modulator. Thus we can apply the restricted boundaried kernelization on boundaried graph $G_{B \cup X}$ and $\cC$-modulator $B \cup X$, which outputs a boundaried graph $G'_{B \cup X}$, the $\cC$-modulator $B \cup X$, and an integer $\Delta$, such that $|G'|, |B \cup X| \leq f(|B \cup X|)$ and for all $H_{B \cup X}$ it holds that $\OPT_{\Pi}(G_{B \cup X} \oplus H_{B \cup X}) = \OPT_\Pi(G'_{B \cup X} \oplus H_{B \cup X}) + \Delta$. By \Cref{full:lem:BD} it becomes clear that for the same $\Delta$ it also holds for every $H_B$ that $\OPT_{\Pi}(G_B \oplus H_B) = \OPT_\Pi(G'_B \oplus H_B) + \Delta$. Therefore it suffices for the unrestricted boundaried kernelization to output the boundaried graph $G'_B$ and its $\cC$-modulator $B \cup X$.
\end{proof}

The following lemma shows that, for parameterization by \NP-minimization problem $\rho$, a boundaried kernelization of size $f$ also implies a (regular) kernelization of size at most polynomial in $f$. In this sense, boundaried kernelizations are stronger than (regular) kernelizations, showing, in particular, that a polynomial kernelization is a prerequisite for admitting a polynomial boundaried kernelization. We note that most structural parameterizations are indeed based on \NP-minimization problems, as one wishes to be able verify the (often provided) structure efficiently. 
(Parameterization by maximization problems is often unwieldy to deal with, e.g., because we need optimum solutions to get structural implications, but optimality is usually hard to verify. Nevertheless, parameterization by max leaf number comes to mind as a useful exception, cf.~\cite{DBLP:journals/mst/FellowsLMMRS09}.)

\begin{lemma}\label{full:lem:PBK-PK}
	Let $\Pi$ and $\rho$ be pure graph problems. Additionally, let $\rho$ be an $\NP$-minimization problem, and $\Pi$ either a decision problem or an $\NP$-optimization problem.
	If $\Pi[\rho]$ admits a (polynomial) boundaried kernelization, it also admits a (polynomial) kernelization.
\end{lemma}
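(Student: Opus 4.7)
The plan is to reduce to the boundaried setting by viewing the input graph with empty boundary. Given an instance $(x, s, k)\in\Pi[\rho]$, where $x$ carries the graph $G$ and $k=\rho(G,s)$, I would form the boundaried graph $G_\emptyset$ and invoke the boundaried kernelization on $G_\emptyset$ together with $\rho$-solution $s$. This produces a boundaried graph $G'_\emptyset$, a $\rho$-solution $s'$, and (when $\Pi$ is an optimization problem) an integer offset $\Delta$, with $|G'|,|s'|,\rho(G',s') \leq f(|\emptyset|+\rho(G,s)) = f(k)$.

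The key observation is that $G_\emptyset \oplus H_\emptyset = G$ whenever $H$ is the empty graph. Hence the gluing-equivalence $G_\emptyset \equiv_{\Pi,\emptyset} G'_\emptyset$ guaranteed by \Cref{full:def:bk} specialises, via this particular choice of $H$, to $G \in \Pi \Leftrightarrow G' \in \Pi$ in the decision case and to $\OPT_\Pi(G)=\OPT_\Pi(G')+\Delta$ in the optimization case. In the decision case there is nothing more to do: I would output the instance $(x',s',k')$ with $x'$ encoding $G'$ and $k':=\rho(G',s')$, and all size bounds follow directly from the bounds given by the boundaried kernelization.

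For the optimization case, the input is $((G,\ell),s,k)$ and I would output $((G',\ell'),s',k')$ with $\ell':=\ell-\Delta$ (adapting the sign to the direction of $\Pi$) and $k':=\rho(G',s')$. By the offset property this is equivalent to the original instance as a decision question about $\Pi_d$. The remaining issue is that $\ell$, and hence $\ell'$, is not a priori bounded in terms of $k$, so encoding $\ell'$ na\"ively could blow up the output size. Here I would exploit that $\Pi$ is an \NP-optimization problem: there is a polynomial $p$ such that every feasible solution on $G'$ has value at most $p(|G'|)\leq p(f(k))$. If $\ell'$ is negative (for minimization) or exceeds $p(|G'|)$ with $G'$ feasible (i.e., the new threshold is trivially satisfied), I would replace the output by a fixed, constant-size trivial no- or yes-instance. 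Otherwise $|\ell'|\leq p(f(k))$, which is encoded in $O(\log f(k))$ bits.

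Putting these together, the total output size is bounded by a function of $k$ alone, and it is polynomial in $k$ whenever $f$ is; the algorithm runs in polynomial time because the boundaried kernelization does and the post-processing is trivial. The main obstacle is precisely the threshold renormalization for optimization problems: without using the \NP-optimization assumption on $\Pi$, nothing prevents $\ell'$ from carrying $\omega(\text{poly}(k))$ bits, which would spoil the kernel size bound. This is the sole reason the lemma assumes $\Pi$ is either a decision problem or an \NP-optimization problem.
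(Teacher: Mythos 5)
Your proof takes the same route as the paper's: form $G_\emptyset$, run the boundaried kernelization, observe that gluing with the empty boundaried graph specializes the gluing-equivalence to an equivalence of $G$ and $G'$ (with offset $\Delta$ in the optimization case), and output the kernelized graph with the adjusted threshold $\ell' = \ell - \Delta$, using the \NP-optimization bound $p$ to keep the encoding of $\ell'$ small. One step is imprecise, though. You propose, when $\ell'$ exceeds $p(|G'|)$, to ``replace the output by a fixed, constant-size trivial \dots yes-instance'' conditioned on ``$G'$ feasible''; but deciding whether $\Pi$ admits any feasible solution on $G'$ is itself an \NP question, and the kernelization has no polynomial-time means of testing it. Your ``otherwise'' branch then leaves the case $\ell' > p(|G'|)$ with $G'$ possibly infeasible (for minimization) uncovered, so as written the size bound on $\ell'$ is not established in all cases. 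The fix is the one the paper uses: do not try to decide anything. For minimization, simply cap $\ell'$ at the largest value encodable in $p(|G'|)$ bits (this is sound because every feasible solution on $G'$ has value at most $p(|G'|)$, so the capped threshold is equivalent to the original one); for maximization, $\ell' > p(|G'|)$ is unconditionally NO, so output a fixed no-instance, and $\ell' < 0$ can be replaced by $\ell' = 0$. With that substitution your argument is complete and coincides with the paper's.
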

\begin{proof}
	First, we prove the lemma for $\Pi$ being an $\NP$-optimization problem. We are given an instance $(G, \ell, s, k)$, i.e., $s$ is a solution with value at most $k$ for $\rho$ on $G$ and the question is if $\Pi$ has a solution for $G$ with value at most $\ell$ if $\Pi$ is a minimization problem, resp. with value at least $\ell$ if $\Pi$ is a maximization problem. Let us first make sure that the encoding of $\ell$ is not overly large: By the definition of $\NP$-optimization problems, there exists some polynomial function $p$ for $\Pi$, such that any feasible solution for $\Pi$ on $G$ has size at most $p(|G|)$ and its value can be encoded with length at most $p(|G|)$. Hence, if $\ell$ is too large for such an encoding, and if $\Pi$ is a minimization problem, then there exists a feasible solution with value at most $\ell$ if and only if there exists some solution with a value that can be encoded with length $p(|G|)$, so we can work further with the instance $(G, \ell', s, k)$ instead, where $\ell'$ is the highest possible number that can be encoded with length $p(|G|)$; and if $\Pi$ is a maximization problem, then there cannot be a  feasible solution with value at least $\ell$, so we can output some fixed NO-instance of constant size. So from now it holds for our instance $(G, \ell, s, k)$ that $\ell$ can be encoded with length at most $p(|G|)$.
	
	For any boundary $B \subseteq V(G)$, gluing $G_B$ to the independent set on vertex set $B$ results in $G$ itself. In particular this also works for $B$ being empty.
	We apply the boundaried kernelization of size $f$ on $G_{\emptyset}$ and $s$, and get as output a boundaried graph $G'_{\emptyset}$, a $\rho$-solution $s'$, and an integer $\Delta$ such that for all $H_\emptyset$ it holds that $\OPT_\Pi(G_\emptyset \oplus H_\emptyset) = \OPT_\Pi(G'_\emptyset \oplus H_\emptyset) + \Delta$. By choosing $H_\emptyset$ as the empty graph, it holds in particular that $\OPT_\Pi(G) = \OPT_\Pi(G') + \Delta$. Both $G'_{\emptyset}$ and $s'$ have size and value at most $f(0 + \rho(G, s)) \leq f(k)$. If $\Delta > \ell$ and $\Pi$ is a minimization problem, we immediately recognize that there cannot be a feasible solution with value at most $\ell - \Delta<0$ for $G'$ and thus, by gluing equivalence, no feasible solution with value at most $\ell$ exists for $G$. Conversely, if $\Delta > \ell$ and $\Pi$ is a maximization problem, then there exists a feasible solution with value at least $\ell - \Delta$ if and only if there exists one with value at least $0$; so we can replace $\Delta$ by $\ell$. Similar to the first paragraph, we can now make sure that $\ell - \Delta$ can be encoded with length at most $p(|G'|) \leq p(f(k))$, and otherwise replace $\ell-\Delta$ by the highest possible number that can be encoded with length $p(|G'|)$ or output a fixed NO-instance, depending on whether $\Pi$ is a minimization or maximization problem. Altogether, the instance $(G', \ell - \Delta, s', \rho(G', s'))$ is a correct output for a (polynomial) kernelization, as it satisfies the following: (i) $(G, \ell, s, k) \in \Pi[\rho] \Leftrightarrow (G', \ell-\Delta, s', \rho(G', s')) \in \Pi[\rho]$; and (ii) $|G'|, |s'|, \rho(G', s')$ and the encoding length of $|\ell-\Delta|$ are all upper bounded by $f(k) + p(f(k))$.
	
	Now let $\Pi$ be a decision problem. We are given an instance $(G, s, k)$ this time, with $s$ a solution with value at most $k$ for $\rho$ on $G$ and the question being if $G \in \Pi$. Similar to above, we apply the boundaried kernelization of size $f$ on $G_\emptyset$ and $s$, in order to obtain a boundaried graph $G'_\emptyset$ and its $\rho$-solution $s'$, such that for all $H_\emptyset$ it holds that $G_\emptyset \oplus H_\emptyset \in \Pi \Leftrightarrow G'_\emptyset \oplus H_\emptyset \in \Pi$. Again, by choosing $H_\emptyset$ as the empty graph, it holds in particular that $G \in \Pi \Leftrightarrow G' \in \Pi$. Both $G'$ and $s'$ have size and value at most $f(0 + \rho(G, s)) \leq f(k)$. Altogether, the instance $(G', s', \rho(G', s'))$ is a correct output for a (polynomial) kernelization, as it satisfies the following: (i) $(G, s, k) \in \Pi[\rho] \Leftrightarrow (G', s', \rho(G', s')) \in \Pi[\rho]$; and (ii) $|G'|, |s'|$ and $\rho(G', s')$ are all upper bounded by $f(k)$.
\end{proof}

The next and final lemma of this section will be used to rule out (polynomial) boundaried kernelizations. In contrast to lower bounds for (regular) kernelization (cf.\ \cite{DBLP:journals/siamdm/BodlaenderJK14,DBLP:journals/jacm/DellM14}), working with the index of the corresponding equivalence classes allows us to get unconditional lower bounds. In particular, it follows that if $\Pi[\modC]$ has a polynomial boundaried kernelization, then $\equiv_{\Pi, B}^{\mathcal{C}_B}$ and $\equiv_{\Pi, B}^{\mathcal{C}}$ have single-exponential index (and, conversely, larger index rules out such a boundaried kernelization).

\begin{lemma}\label{full:lem:BK-index}
	Let $\Pi$ be a pure graph problem, $\rho$ a pure graph minimization problem and $f$ a computable function such that $\Pi[\rho]$ admits a boundaried kernelization of size $f$. Further fix some vertex set $B$ and graph class $\cC$ such that for each $G \in \cC$ with $B \subseteq V(G)$ exists some $\rho$-solution $s$ with $\rho(G, s) \leq g(|B|)$ for some function $g$.
	Then the number of equivalence classes of $\equiv^\cC_{\Pi, B}$ is upper bounded by $\Oh(2^{f(|B| + g(|B|))^2})$.
\end{lemma}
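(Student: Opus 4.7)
The plan is to use the given boundaried kernelization itself as a ``canonicalization'' map: every input $G_B$ with $G \in \cC$ is sent to a small boundaried graph that is $\equiv_{\Pi,B}$-equivalent to it, and two inputs that produce the same output must then lie in the same equivalence class by transitivity. Consequently, the number of equivalence classes of $\equiv^\cC_{\Pi,B}$ is at most the number of possible outputs, and that count is controlled by a straightforward combinatorial estimate.

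In detail, I would first fix any $G \in \cC$ with $B \subseteq V(G)$. The assumption on $\cC$ supplies a $\rho$-solution $s$ with $\rho(G, s) \leq g(|B|)$, so feeding $(G_B, s)$ into the given boundaried kernelization of size $f$ produces a boundaried graph $G'_B$ with $|G'| \leq f(|B| + \rho(G, s)) \leq f(|B| + g(|B|)) =: N$ and $G_B \equiv_{\Pi, B} G'_B$. If $\Pi$ is an optimization problem, the kernelization also outputs an offset $\Delta$, but this plays no role for equivalence-class membership, since $\equiv_{\Pi,B}$ only asks for the \emph{existence} of such a $\Delta$.

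The key step is the transitivity observation: if $G^1, G^2 \in \cC$ both map to the same boundaried graph $G'_B$ under this process, then $G^1_B \equiv_{\Pi,B} G'_B \equiv_{\Pi,B} G^2_B$, so $G^1_B$ and $G^2_B$ represent the same class of $\equiv^\cC_{\Pi, B}$. Hence the number of classes is bounded by the number of distinct boundaried graphs with boundary $B$ on at most $N$ vertices. Labeling the non-boundary vertices canonically as $\{1,\dots,k\}$ for some $0 \leq k \leq N - |B|$ and noting that there are at most $\binom{N}{2}$ potential edges, this count is at most $(N - |B| + 1)\cdot 2^{\binom{N}{2}} = 2^{\Oh(N^2)}$, yielding the desired bound $\Oh(2^{f(|B|+g(|B|))^2})$.

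I do not expect a genuine obstacle here; the whole argument is essentially ``kernelize, then count''. The only subtleties worth being careful about are the irrelevance of the offset $\Delta$ for class identity and the fact that the output $G'_B$ is not required to lie in $\cC_B$ — this is harmless, since $G'_B$ is used purely as an identifier for the class of $\cC$-graphs that map to it, and the equivalence between input and output is with respect to $\equiv_{\Pi, B}$, which trivially restricts to $\equiv^\cC_{\Pi, B}$ on the $\cC$-side.
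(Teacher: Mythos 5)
Your proof is correct and takes essentially the same approach as the paper: use the boundaried kernelization as a canonicalization map onto boundaried graphs of size at most $f(|B|+g(|B|))$, note that two $\cC$-inputs producing the same output are $\equiv_{\Pi,B}$-equivalent by transitivity, and then count the possible outputs to get $\Oh(2^{f(|B|+g(|B|))^2})$. The paper phrases the key step contrapositively (non-equivalent inputs must yield non-isomorphic outputs), but this is a cosmetic difference; your remarks about the irrelevance of $\Delta$ and about $G'_B$ not needing to lie in $\cC_B$ are accurate and worth noting.
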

\begin{proof}
	Let $F_B, G_B$ be two boundaried graphs with $F, G \in \cC$ that are not equivalent with respect to $\equiv^\cC_{\Pi, B}$. Let $s_F, s_G$ be respective $\rho$-solutions with value at most $g(|B|)$, which exist by assumption of the lemma. Then the given boundaried kernelization outputs for $F_B, s_F$ and $G_B, s_G$ two graphs $F'_B$ and $G'_B$ which are not isomorphic, as otherwise it would hold that $F_B \equiv_{\Pi, B} F'_B \equiv_{\Pi, B} G'_B \equiv_{\Pi, B} G_B$, leading to a contradiction.\\
	The number of equivalence classes of $\equiv^{\cC}_{\Pi, B}$ is thus upper bounded by the number of non-isomorphic graphs we might get as output from the boundaried kernelization when given some $B$-boundaried graph in $\cC$ and respective solution $s$ with guaranteed value at most $g(|B|)$. Since these graphs have size (and thus number of vertices) at most $q = f(|B| + g(|B|))$, their number is upper bounded by $\sum_{n = 0}^q\sum_{m = 0}^{\binom{n}{2}} \binom{\binom{n}{2}}{m} \leq q \cdot 2^{\binom{q}{2}} \in \Oh(2^{\cramped{q^2}})$. 
\end{proof}


\section{Upper bounds}\label{full:section:upperbound}
In this section we show a selection of problems to admit polynomial boundaried kernelization. Our algorithms will be described closely to the (regular) polynomial kernels for the respective problems, which meets the expectation for many reduction rules to be applicable and helpful for boundaried kernelization due to their rather local nature. This will be formally done by showing \textit{gluing safeness} of such rules, i.e., preserving gluing-equivalence.
As given by \Cref{full:lem:Bismod} we assume to be given a boundaried graph $G_B$ such that $G-B$ is either an independent set or a forest, depending on the parameterization. We denote the vertex set of $G-B$ by $R$. 

\subsection{Vertex Cover[vc]}\label{full:section:vc}
Our result for \vertexcovervc is based on the crown reduction rule, which was first introduced by Chen et al.~\cite{DBLP:journals/jal/ChenKJ01}. A \textit{crown} in a graph $G$ is a pair of vertex sets $I, H \subseteq G$, such that: (i) $I$ is a non-empty independent set of $G$; (ii) $H$ is the neighborhood of $I$; and (iii) there exists an $H$-saturating matching between $I$ and $H$. Abu-Khzam et al.~\cite{DBLP:journals/mst/Abu-KhzamFLS07} gave on overview of the technique.

\begin{lemma}[\cite{DBLP:journals/mst/Abu-KhzamFLS07}, Theorem 3]\label{full:lem:crown_solution}
	If $G$ is a graph with crown $(I, H)$, then there is a minimum vertex cover of $G$, containing all $H$-vertices and none from $I$.
\end{lemma}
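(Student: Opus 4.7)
The plan is to start with any minimum vertex cover $C$ of $G$ and transform it into an equally small vertex cover $C'$ that contains all of $H$ and none of $I$, namely by setting $C' := (C \setminus I) \cup H$. Two things must then be verified: (i) $C'$ is still a vertex cover, and (ii) $|C'| \le |C|$.

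For (i), observe that $I$ is independent so there are no edges inside $I$; every edge with at least one endpoint in $I$ must have its other endpoint in $H$ (since $N(I) = H$), and those edges are covered by $H \subseteq C'$; finally, edges not touching $I$ were already covered by $C$, and removing $I$ from $C$ does not affect coverage of such edges, so $C \setminus I \subseteq C'$ still covers them. Hence $C'$ is a vertex cover.

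The key step is (ii), which uses the $H$-saturating matching $M$ between $I$ and $H$. Each of the $|H|$ matching edges must be covered by $C$, and the matching edges are pairwise vertex-disjoint, so
\[
|C \cap I| + |C \cap H| \;=\; |C \cap (I \cup H)| \;\geq\; |H|,
\]
which rearranges to $|C \cap I| \geq |H| - |C \cap H| = |H \setminus C|$. Then
\[
|C'| \;=\; |C| - |C \cap I| + |H \setminus C| \;\leq\; |C|,
\]
so $C'$ is a vertex cover of size at most $|C|$. Since $C$ is minimum, $C'$ is also minimum and has the desired properties.

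The main obstacle, if any, is simply getting the matching-based counting right; the rest is routine bookkeeping about which edges remain covered after swapping $C \cap I$ out for $H \setminus C$. The cleanest way to present it is to introduce $C'$ explicitly and treat (i) and (ii) in sequence, as above.
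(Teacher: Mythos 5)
Your proof is correct. The paper itself states this lemma as a citation to Abu-Khzam et al.\ (Theorem~3) and gives no proof of its own, so there is no in-paper argument to compare against. Your argument is the standard one for the crown reduction: take any minimum vertex cover $C$, replace $C\cap I$ by $H$, check coverage (every edge meeting $I$ has its other endpoint in $H=N(I)$, and $I$ is independent so there are no $I$--$I$ edges), and use the $H$-saturating matching to show $|C\cap(I\cup H)|\ge|H|$, hence $|C\cap I|\ge|H\setminus C|$ and $|C'|\le|C|$. One small thing worth making explicit, which your counting tacitly uses, is that $H\cap I=\emptyset$: this holds because $I$ is independent, so $N(I)$ cannot contain any vertex of $I$. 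With that noted, the bookkeeping $|C'|=|C|-|C\cap I|+|H\setminus C|$ is exact, not just an inequality, and the argument goes through as you present it.
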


Observe that crowns in a boundaried graph $G_B$ might become invalid after gluing to some boundaried graph $F_B$, e.g., $I$-vertices contained in $B$ might become adjacent. For this reason we need to make sure that the intersection of $I$ and $B$ is empty. Furthermore we cannot simply remove vertices of $B$, as they will be re-introduced by gluing. Instead, we fix an $H$-saturating matching between $I$ and $H$, and remove all other edges incident with $H$. As $(I, H)$ remains a crown after that operation, we mark vertices in $I$ and $H$ to prevent multiple handling, therefore initialize the set of marked vertices $\tilde{I}, \tilde{H} = \emptyset$. Additionally, we remove any isolated vertices in $R$, as even after gluing, they will not be incident with any edge.

\begin{redrule}\label{full:rr:VC_iso}
	If $v \in R$ is an isolated vertex, remove it.
\end{redrule}
\begin{proof}[Proof of gluing safeness]
	Assume the condition for the rule holds, i.e., $v \in R$ is an isolated vertex. Let $G'_B = G_B - v$ be the resulting graph. For any boundaried graph $F_B$ we know that $v$ is an isolated vertex in $G_B \oplus F_B$, as it is not part of the boundary and thus does not get any new neighbors after gluing. It holds that $\OPT_{\VC}(G_B \oplus F_B) = \OPT_{\VC}(G'_B \oplus F_B)$. 
\end{proof}

\begin{redrule}\label{full:rr:VC_crown}
	If $G - (\tilde{I} \cup \tilde{H})$ contains a crown $(I, H)$ with $I \subseteq R$ and an $H$-saturating matching $M$ between $H$ and $I$, then remove all edges incident with $H$ but not contained in $M$, and add all vertices in $I, H$ to $\tilde{I}, \tilde{H}$, respectively.
\end{redrule}
\begin{proof}[Proof of gluing safeness]
	Let $(I, H)$ be a crown that satisfies the above conditions, let $G'_B$ be the resulting graph after removing all edges incident with $H$, but not contained in $M$. For any $B$-boundaried graph $F_B$ the pair $(I,H)$ remains a crown in $G_B \oplus F_B$ since no vertex in $I$ gets new neighbors after gluing, and consequently $I$ remains an independent set, $H$ its neighborhood, and $M$ an $H$-saturating matching between $H$ and $I$.
	Clearly, removing all edges incident with $H$ besides those in $M$ does not change any of those points, and $(I, H)$ is also a crown in $G'_B \oplus F_B$. Also note that, since the only difference between $G$ and $G'$ is about the edges of $H$, we have $(G_B \oplus F_B) - (I \cup H) = (G'_B \oplus F_B) - (I \cup H)$. Due to \Cref{full:lem:crown_solution}, we then get
	\begin{align*}
		\OPT_{\VC}(G_B \oplus F_B) 
		&= \OPT_{\VC}((G_B \oplus F_B) - (I \cup H)) + |H| \\
		&= \OPT_{\VC}((G'_B \oplus F_B) - (I \cup H)) + |H| \\
		&= \OPT_{\VC}(G'_B \oplus F_B) 
		\qedhere
	\end{align*}
\end{proof}

Such restrictive handling of crowns still suffices in order to bound the independent set number of $G[R]$, as will be given by \Cref{full:lem:no_crown_small_IS}. 
Next we show how to apply \Cref{full:rr:VC_iso,full:rr:VC_crown} exhaustively in polynomial time by modification of the algorithm by Iwata et al.~\cite{DBLP:conf/soda/IwataOY14} for removing the crowns in a graph. 

First, let $\overline{G}$ be a bipartite graph, which has as vertex set the disjoint union of $\overline{X} \coloneq \{x_v \mid v \in R \setminus (\tilde{I} \cup \tilde{H})\}$ and $\overline{Y} \coloneq \{y_v \mid v \in V(G) \setminus (\tilde{I} \cup \tilde{H})\}$, and the edge set is $\{\{x_u, y_v\} \mid \{u,v\} \in E(G - (\tilde{I} \cup \tilde{H})\}$. We refer to the total vertex set of $\overline{G}$ by $\overline{V}$. For vertex set $\overline{S} \subseteq \overline{V}$ we write $S_X$ for $\{v \in R \mid x_v \in \overline{S} \cap \overline{X}\}$ and $S_Y$ for $\{v \in V \mid y_v \in \overline{S} \cap \overline{Y}\}$. The other way around, for vertex set $T \subseteq V$, we define $\overline{X}_T \coloneq \{x_v \in \overline{X} \mid v \in T\}$ and $\overline{Y}_T \coloneq \{y_v \in \overline{Y} \mid v \in T\}$.
Using $\overline{G}$, we build our actual auxiliary graph $\overline{G}_M$ using a maximum matching $M$ in $\overline{G}$. The vertex set is also $\overline{V}$, but the edges are directed copies of those in $\overline{G}$, directed from $\overline{X}$ to $\overline{Y}$. Only edges belonging to $M$ are additionally directed the other way around, i.e., from $\overline{Y}$ to $\overline{X}$.

At this point, we will also be working with directed graphs, so let us denote, with $F$ being a directed graph and $A \subseteq V(F)$, by $\Nout_F(A)$ the set of vertices in $V(F) \setminus A$, to which there is an incoming edge going from some vertex in $A$.

We show that finding a crown $(I, H)$ in $G - (\tilde{I} \cup \tilde{H})$ with $I \subseteq R$ is equivalent to finding a tail strongly connected component in $\overline{G}_M$, i.e., a set $\oS$ of vertices such that $\Nout_{\oG_M}(\oS) = \emptyset$ and $\overline{G}_M[\oS]$ contains directed paths between any two vertices in any direction.

\begin{lemma}[Analogue of {\cite[Lemma 4.5]{DBLP:conf/soda/IwataOY14}}]\label{full:lem:tscc-crown}
	If $\overline{S}$ is a tail strongly connected component of $\overline{G}_M$ with $S_X \neq \emptyset$ and $S_X \cap S_Y = \emptyset$, then $(S_X, S_Y)$ is a crown in $G - (\tilde{I} \cup \tilde{H})$ with $S_X \subseteq R$.
\end{lemma}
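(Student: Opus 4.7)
The plan is to verify the three defining conditions of a crown for $(S_X, S_Y)$ inside $G - (\tilde I \cup \tilde H)$, together with the containment $S_X \subseteq R$. The containment is immediate from the definition of $\overline{X}$, and non-emptiness of $S_X$ is given as a hypothesis, so only the independence of $S_X$, the neighborhood identity $S_Y = N_{G-(\tilde I \cup \tilde H)}(S_X)$, and the existence of an $S_Y$-saturating matching remain.

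Independence of $S_X$ follows by a one-step contradiction: if distinct $u, v \in S_X$ were adjacent in $G - (\tilde I \cup \tilde H)$, then $\{x_u, y_v\}$ is an edge of $\overline{G}$, so $x_u \to y_v$ is an arc of $\overline{G}_M$; since $x_u \in \overline{S}$ and $\Nout_{\overline{G}_M}(\overline{S}) = \emptyset$, this forces $y_v \in \overline{S}$, i.e., $v \in S_Y$, contradicting $S_X \cap S_Y = \emptyset$. The same arc-tracing step also yields the inclusion $N_{G-(\tilde I \cup \tilde H)}(S_X) \subseteq S_Y$. For the reverse inclusion, take $y_v \in \overline{S}$; since $S_X \neq \emptyset$ we have $|\overline{S}| \geq 2$, and strong connectivity provides an arc into $y_v$ from within $\overline{S}$. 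All arcs entering an $\overline{Y}$-vertex in $\overline{G}_M$ originate in $\overline{X}$ (matching-reversals point the other way), so the witness has the form $x_u \in \overline{S}$, giving $u \in S_X$ adjacent to $v$ in $G - (\tilde I \cup \tilde H)$.

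For the $S_Y$-saturating matching, I would show that the restriction of $M$ to edges with both endpoints in $\overline{S}$ already saturates $S_Y$. For every $y_v \in \overline{S}$, strong connectivity (again using $|\overline{S}| \geq 2$) forces an outgoing arc inside $\overline{S}$; but the only arcs leaving an $\overline{Y}$-vertex in $\overline{G}_M$ are matching-reversals $y_v \to x_w$, so $y_v$ is $M$-saturated, and its partner $x_w$ must lie in $\overline{S}$ lest the arc $y_v \to x_w$ escape. The degenerate case $|\overline{S}| = 1$ is cleanest to handle separately: combined with $S_X \neq \emptyset$ it forces $\overline{S} = \{x_u\}$, so $S_Y = \emptyset$ and the matching condition is vacuous. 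Overall the proof is a disciplined application of the two dual features of a tail SCC — ``no arcs leave $\overline{S}$'' (used to keep arc-targets and matching partners inside) and ``every vertex of $\overline{S}$ has an in- and out-neighbor inside $\overline{S}$'' (used to locate witnesses in $S_X$ and matching partners of $S_Y$-vertices). The single-vertex corner case is the only step that genuinely needs a separate comment; everything else is a clean translation between the SCC structure of $\overline{G}_M$ and the crown structure of $(S_X, S_Y)$.
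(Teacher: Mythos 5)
Your proof is correct and follows essentially the same route as the paper's: independence of $S_X$ via the no-outgoing-arcs property plus $S_X \cap S_Y = \emptyset$, both inclusions for $S_Y = N(S_X)$ via arc-tracing and strong connectivity, and the saturating matching via the fact that the only out-arcs of $\overline{Y}$-vertices are matching-reversals that must stay inside $\overline{S}$. Your explicit handling of the $|\overline{S}| = 1$ corner case is a small organizational nicety that the paper leaves implicit, but the substance is identical.
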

\begin{proof}
	Since $S_X \subseteq R$ and $(S_X \cup S_Y)  \cap (\tilde{I} \cup \tilde{H}) = \emptyset$ follows by definition, we only need to show that $(S_X, S_Y)$ is a crown. Therefore, we simply go through the three conditions of the crown definition.
	\begin{enumerate}
		\item $S_X$ is a non-empty independent set.\\
		$S_X$ is non-empty by condition of the lemma, and there might be no $u,v \in S_X$ with $\{u,v\} \in E(G)$, as otherwise it would hold that $\{x_u,y_v\} \in \overline{E}$ and thus either $S_X \cap S_Y \neq \emptyset$ or $\overline{S}$ has outgoing edges and is thus not a tail strongly connected component.
		\item $S_Y = N_G(S_X)$.\\
		As $\overline{S}$ is strongly connected, it must hold in particular that $\overline{S} \cap \overline{Y} \subseteq \Nout_{\overline{G}_M}(\overline{S} \cap \overline{X})$, and as $\overline{S}$ has no outgoing edges,  $\Nout_{\overline{G}_M}(\overline{S} \cap \overline{X}) \subseteq \overline{S} \cap \overline{Y}$. Together we get $\Nout_{\overline{G}_M}(\overline{S} \cap \overline{X}) = \overline{S} \cap \overline{Y}$ and, since no vertex outside of $\tilde{I} \cup \tilde{H}$ is adjacent to $\tilde{I} \cup \tilde{H}$, also $N(S_X) = S_Y$.
		\item There is an $S_Y$-saturating matching into $S_X$.\\
		Since $\overline{S}$ is strongly connected, there is an edge from each $y_v \in \overline{S} \cap \overline{Y}$ to some $x_u \in \overline{S} \cap \overline{X}$ in $\overline{G}_M$, which means $\{y_v, x_u\}$ is contained in the matching $M$ of $\overline{G}$ which was used to generate $\overline{G}_M$. Furthermore, by construction it holds that $\overline{G}[\overline{S}]$ is a subgraph of $G[S_X \cup S_Y]$ and thus the edge set $\{\{u,v\} \mid x_u, y_v \in \overline{S} \text{ and } \{x_u, y_v\} \in M\}$ is an $S_Y$-saturating matching into $S_X$. \qedhere
	\end{enumerate}
\end{proof}

\begin{lemma}\label{full:lem:vc_matched}
	Let $(I, H)$ be a crown in $G - (\tilde{I} \cup \tilde{H})$ with $I \subseteq R$, such that  $G[I \cup H]$ is connected. If $\oY_H$ is matched by $M$ into $\oX_I$, then there is a tail strongly connected component $\oS \subseteq \oX_I \cup  \oY_H =: \oQ$ in $\oG_M$, such that $S_X \neq \emptyset$ and $S_X \cap S_Y = \emptyset$.
\end{lemma}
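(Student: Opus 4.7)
The plan is to identify $\oQ = \oX_I \cup \oY_H$ as a set closed under outgoing arcs in $\oG_M$, then locate the required tail strongly connected component inside $\oG_M[\oQ]$ via the standard condensation argument, and finally verify the two side conditions $S_X \cap S_Y = \emptyset$ and $S_X \neq \emptyset$.

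First I would show that $\Nout_{\oG_M}(\oQ) = \emptyset$. The outgoing arcs from $\oX_I$ in $\oG_M$ are oriented copies of the edges of $G - (\tilde{I} \cup \tilde{H})$ incident with $I$, so they land in $\oY_{N(I)} = \oY_H \subseteq \oQ$ (using that $H$ is the neighborhood of $I$ in $G - (\tilde{I} \cup \tilde{H})$). The only outgoing arcs from $\oY$-vertices in $\oG_M$ are the reversed matching arcs, and by the hypothesis on $M$ every $y_v \in \oY_H$ is matched to some $x_u \in \oX_I \subseteq \oQ$. Hence $\oQ$ has no outgoing arc.

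Second, since $I \neq \emptyset$ by the crown definition, $\oG_M[\oQ]$ is a non-empty finite directed graph, so its condensation DAG has at least one sink, yielding a strongly connected set $\oS \subseteq \oQ$ with $\Nout_{\oG_M[\oQ]}(\oS) = \emptyset$. Combined with closure of $\oQ$ this gives $\Nout_{\oG_M}(\oS) = \emptyset$, and the witnessing directed paths between pairs of vertices of $\oS$ in $\oG_M[\oQ]$ are still paths in $\oG_M$, so $\oS$ is a tail strongly connected component of $\oG_M$ in the sense stated before \Cref{full:lem:tscc-crown}.

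It remains to verify the two side conditions. For $S_X \cap S_Y = \emptyset$ note that $\oS \subseteq \oQ$ forces $S_X \subseteq I$ and $S_Y \subseteq H$, while $I \cap H = \emptyset$ since $I$ is independent and $H$ is its open neighborhood. For $S_X \neq \emptyset$, suppose instead that $\oS \subseteq \oY_H$; pick any $y_v \in \oS$, whose matching arc under the hypothesis lands at some $x_u \in \oX_I$, which lies in $\oQ \setminus \oS$, contradicting $\Nout_{\oG_M}(\oS) = \emptyset$. The only mildly delicate point is the bookkeeping when passing between $\oG_M[\oQ]$ and $\oG_M$, namely checking that a sink SCC of the induced subgraph really qualifies as a tail SCC of the full graph, which however is immediate once closure of $\oQ$ has been established; the connectedness of $G[I \cup H]$ is not actually invoked in this particular lemma.
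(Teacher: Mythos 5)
Your proof is correct and takes essentially the same approach as the paper: establish that $\oQ$ has no outgoing arcs in $\oG_M$, take a tail strongly connected component of $\oG_M[\oQ]$ (which is then also a tail strongly connected component of $\oG_M$), and verify the side conditions $S_X \cap S_Y = \emptyset$ and $S_X \neq \emptyset$ by the same arguments. Your parenthetical observation that the connectedness of $G[I \cup H]$ is not actually invoked is also accurate; it appears in the hypothesis only because the lemma is applied under that assumption inside \Cref{full:lem:crown-tscc}.
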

\begin{proof}
	It is well known, that every finite directed graph has at least one tail strongly connected component. Let $\overline{S}$ be any tail strongly connected component of $\oG_M[\oQ]$. Note that $\overline{S}$ remains a tail strongly connected component of $\oG_M$, as $\oQ$, and thus in particular also $\overline{S}$, has no outgoing edges in $\oG_M$: Every vertex in $\oY_H$ is matched to some vertex in $\oX_I$, which implies $\Nout_{\oG_M}(\oY_H) \subseteq \oX_I$, and since $(I, H)$ is a crown, it holds that $N(I) = H$, which implies $\Nout_{\oG_M}(\oX_I) = \oY_H$.
	
	$S_X \cap S_Y = \emptyset$ is easy to see, as it holds by construction, that $S_X \subseteq I$ and $S_Y \subseteq H$, while $I \cap H = \emptyset$, since $I$ is an independent set and $H = N(I)$. Let us show that $S_X \neq \emptyset$. Assume for contradiction that $\overline{S} \subseteq \overline{Y}$ and thus $S_X = \emptyset$. As seen above, every vertex $y_v  \in \oS \cap \oY \subseteq \oY_H$ is matched by $M$ to some $x_u \in \oX_I$, and by construction of $\oG_M$ exists an edge from $y_v$ to $x_u$. As $\oS$ has no outgoing edges, $x_u$ needs to be contained in $\oS$, and we get $S_X \neq \emptyset$.
\end{proof}

\begin{lemma}\label{full:lem:vc_not_matched}
	Let $(I, H)$ be a crown in $G - (\tilde{I} \cup \tilde{H})$ with $I \subseteq R$, such that $G[I \cup H]$ is connected. If $\oY_H$ is not matched by $M$ into $\oX_I$, then each vertex $y_v \in \oY_H$ that is not matched by $M$ to any $\oX_I$-vertex, is instead matched by $M$ to some vertex in $\oX \setminus \oX_I$. Furthermore, for each such $y_v \in \oY_H$ exists some $x_u \in \oX_I$ that is not matched by $M$, and such that there is an $M$-alternating path from $y_v$ to $x_u$, whose first and last edges are not contained in $M$.
\end{lemma}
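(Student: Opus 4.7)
The plan is to compare the maximum matching $M$ with the matching $\overline{N}$ in $\oG$ that corresponds to the $H$-saturating matching between $I$ and $H$ guaranteed by the crown $(I,H)$; so $\overline{N}$ lies entirely between $\oX_I$ and $\oY_H$ and saturates $\oY_H$. Every connected component of $M \triangle \overline{N}$ is an alternating path or an even cycle, and I would derive both parts of the lemma by case-analysing the component through the $y_v \in \oY_H$ under consideration.

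I would first rule out the possibility that some $y_v \in \oY_H$ is $M$-unmatched. In that case, $y_v$ has degree one in $M \triangle \overline{N}$ and is the endpoint of an alternating path $P$ whose edges read $\overline{N}, M, \overline{N}, M, \dots$. A short induction on the edges of $P$ shows that every vertex of $P$ lies in $\oX_I \cup \oY_H$: each $\overline{N}$-step from $\oY_H$ lands in $\oX_I$ by construction of $\overline{N}$, and each $M$-step from a vertex $x_u \in \oX_I$ (so $u \in I$) lands at some $y_{v'}$ with $v' \in N_G(u) \setminus (\tilde I \cup \tilde H) \subseteq N_G(I) \setminus (\tilde I \cup \tilde H) = H$, hence $y_{v'} \in \oY_H$. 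A parity split on $|P|$ now fails on both sides: if $|P|$ is odd, the other endpoint lies in $\oX_I$, its sole path-edge is in $\overline{N}$, and it is therefore $M$-unmatched, turning $P$ into an $M$-augmenting path and contradicting maximality of $M$; if $|P|$ is even, the other endpoint lies in $\oY_H$, its sole path-edge is in $M$, and it is therefore $\overline{N}$-unmatched, contradicting that $\overline{N}$ saturates $\oY_H$. This proves the first claim.

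For the second claim I would fix any $y_v \in \oY_H$ whose $M$-mate is some $x_w \in \oX \setminus \oX_I$. Because $\overline{N}$ only touches $\oX$ inside $\oX_I$, the vertex $x_w$ is $\overline{N}$-unmatched, so it has degree one in $M \triangle \overline{N}$ and is the endpoint of the path $P'$ containing $y_v$. Tracing $P'$ onwards from $x_w$, the same invariant forces every vertex past $x_w$ to lie in $\oX_I \cup \oY_H$, and the same parity dichotomy forces $|P'|$ to be even with far endpoint some $M$-unmatched $x_u \in \oX_I$. The sub-path of $P'$ from $y_v$ to $x_u$ is then $M$-alternating, begins with $y_v$'s $\overline{N}$-edge, and ends with $x_u$'s $\overline{N}$-edge, so both boundary edges lie in $\overline{N} \setminus M$, as the lemma requires. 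The main obstacle throughout is securing the invariant that these paths stay inside $\oX_I \cup \oY_H$; it rests on the crown identity $H = N_G(I)$ in $G - (\tilde I \cup \tilde H)$ together with the independence of $I$, and once it is in place both claims fall out of the same parity-plus-maximality case analysis.
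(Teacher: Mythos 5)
Your proof is correct, and it takes a genuinely different route from the paper. The paper handles the first claim by a terse appeal to maximality of $M$ together with the existence of an $\oY_H$-saturating matching into $\oX_I$, and for the second claim it constructs vertex sets $\oY_1 \subseteq \oY_2 \subseteq \cdots$ and $\oX_1 \subseteq \oX_2 \subseteq \cdots$ by a BFS-like iteration (alternately taking $\oG$-neighborhoods in $\oX_I$ and $M$-mates in $\oY_H$), invokes Hall's condition on the terminal set $\oY_n$ to deduce $|\oX_n| \geq |\oY_n|$, and concludes that $\oX_n$ contains an $M$-unmatched vertex reachable by an alternating path. You instead take the symmetric difference $M \triangle \overline{N}$ with the crown's saturating matching $\overline{N}$ and case-analyse the component through $y_v$; the crown identity $N_{G-(\tilde I \cup \tilde H)}(I)=H$ confines the path to $\oX_I \cup \oY_H$, and parity plus maximality of $M$ (for odd-length paths) and saturation of $\oY_H$ by $\overline{N}$ (for even-length paths) pin down exactly one admissible termination in each of the two claims. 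Both are standard tools from matching theory; your symmetric-difference decomposition has the advantage of yielding both claims from a single structural picture and of making the claimed alternating path an explicit object (a subpath of $P'$), whereas the paper's layer construction is a bit more hands-on and leaves the first claim argued somewhat implicitly. One minor stylistic point: you could shorten the parity dichotomy in the second claim by observing that once the first claim is proved, the endpoint $x_w$ determines the parity, so the far endpoint of $P'$ must lie in $\oX_I$ and be $\overline{N}$-matched but $M$-unmatched; but as written the argument is complete.
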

\begin{proof}
	Fix any $y_v \in \oY_H$ which is not matched by $M$ to any $\oX_I$-vertex. Since $H = N(I)$ has a saturating matching into $I$, and hence $\oY_H = N_{\oG}(\oX_I)$ has one into $\oX_I$, but $M$ does not match $y_v$ to any vertex in $\oX_I$, it must hold, as $M$ is maximum, that $y_v$ is matched by $M$ to some vertex outside of $\oX_I$.
	
	Construct vertex sets $\oY_n \subseteq \oY_H$ and $\oX_n = N_{\oG}(\oY_n) \cap \oX_I$ as follows. With $\oY_1 = \{y_v\}$, repeat for $i = 1$ to $n$ such that $\oY_{n+1} = \oY_n$: let $\oX_i = N_{\oG}(\oY_i) \cap \oX_I$ and $\oY_{i+1} = \oY_i \cup N_M(\oX_i)$. By Hall's theorem, and since $\oY_H$ has a saturating matching into $\oX_I$, it holds that $|\oY_n| \leq |\oX_n|$. At the same time,  $y_v$ is not matched to any vertex in $\oX_n \subseteq \oX_I$, so there is at least one vertex $x_u \in \oX_n$ which is not matched by $M$. By construction of $\oX_n$, there is an $M$-alternating path from $y_v$ to $x_u$, with the edges in both ends not contained in $M$.
\end{proof}

\begin{lemma}[Analogue of {\cite[Lemma 4.6]{DBLP:conf/soda/IwataOY14}}]\label{full:lem:crown-tscc}
	If one cannot apply \Cref{full:rr:VC_iso} and there is a crown in $G - (\tilde{I} \cup \tilde{H})$ with $I \subseteq R$, then there exists a tail strongly connected component $\oS$ of $\oG_M$ with $S_X, S_Y \neq \emptyset$ and $S_X \cap S_Y = \emptyset$.
\end{lemma}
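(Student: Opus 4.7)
The plan is to proceed by case analysis on whether $\oY_H$ is matched by $M$ into $\oX_I$. Before splitting, I would observe that without loss of generality $G[I \cup H]$ is connected: if not, I replace $(I,H)$ with the vertex sets of any connected component of $G[I \cup H]$, which is again a crown in $G - (\tilde{I} \cup \tilde{H})$ with $I \subseteq R$ and is compatible with all subsequent reasoning.

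In the matched case, where $\oY_H$ is $M$-matched into $\oX_I$, I apply \Cref{full:lem:vc_matched} directly to obtain a tail SCC $\oS \subseteq \oX_I \cup \oY_H$ with $S_X \neq \emptyset$. The disjointness $S_X \cap S_Y = \emptyset$ is then immediate from $\oS \subseteq \oX_I \cup \oY_H$ together with $I \cap H = \emptyset$. For $S_Y \neq \emptyset$, I pick any $v \in S_X \subseteq I \subseteq R$; since \Cref{full:rr:VC_iso} is not applicable, $v$ has a $G$-neighbor $u$, and from $N_G(I) = H$ this neighbor satisfies $u \in H$. As neither $u$ nor $v$ lies in $\tilde{I} \cup \tilde{H}$, the forward arc $x_v \to y_u$ is present in $\oG_M$, and the tail-SCC property of $\oS$ forces $y_u \in \oS$, whence $u \in S_Y$.

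In the not-matched case, \Cref{full:lem:vc_not_matched} yields $y_v \in \oY_H$ matched to some $x_w \in \oX \setminus \oX_I$, an $M$-unmatched vertex $x_u \in \oX_I$, and an $M$-alternating path from $y_v$ to $x_u$ whose first and last edges lie outside $M$. (If $y_v$ were $M$-unmatched, this path would be $M$-augmenting, contradicting maximality of $M$; so in this case $y_v$ is necessarily $M$-matched.) The existence of $x_u$ shows $U_{\oX} := \{x \in \oX : x \text{ is $M$-unmatched}\}$ is non-empty, so I would run a König-style alternating BFS from $U_{\oX}$, collecting $Z_{\oX}$ and $Z_{\oY}$, the $\oX$- and $\oY$-vertices reachable via alternating paths of even and odd length, respectively. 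Maximality of $M$ forces every $y \in Z_{\oY}$ to be $M$-matched with its $M$-partner in $Z_{\oX}$, and every $\oG$-neighbor of $x \in Z_{\oX}$ lies in $Z_{\oY}$; hence $Z := Z_{\oX} \cup Z_{\oY}$ is closed under the out-arcs of $\oG_M$, so the condensation DAG of $\oG_M$ has a sink inside $Z$, giving a tail SCC $\oS \subseteq Z$. Vertices of $U_{\oX}$ have no in-arcs in $\oG_M$ and form singleton SCCs with outgoing arcs (using \Cref{full:rr:VC_iso} again), so they cannot lie in any tail SCC; hence $\oS \subseteq (Z_{\oX} \setminus U_{\oX}) \cup Z_{\oY}$. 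The non-emptiness of $S_X$ and $S_Y$ then follows exactly as in the matched case: $\oS \subseteq \oY$ would force $\oS$ to be a singleton unmatched $y$, contradicting $Z_{\oY}$ being $M$-saturated; $\oS \subseteq \oX$ would contradict non-isolation of $\oX$-vertices in $\oG$.

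The hard part will be establishing $S_X \cap S_Y = \emptyset$ in the not-matched case. Here I would exploit that $M$ restricts to a bijection between $Z_{\oY}$ and $Z_{\oX} \setminus U_{\oX}$, combined with the observation that a vertex $v$ with both $x_v, y_v \in \oS$ induces a directed cycle through both copies of $v$ in $\oS$; this cycle corresponds to an $M$-alternating closed walk at $v$ in $G$ using forward $\oG$-arcs from $\oX$ to $\oY$ and $M$-reverse arcs from $\oY$ to $\oX$, which I expect to be incompatible with the even/odd parity of the alternating-BFS layering together with the crown condition $N_G(I) = H$ and the confinement of $\oX_I$'s $M$-partners to $\oY_H$. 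I anticipate this disjointness step to be the main obstacle; the reduction to connected $G[I \cup H]$, the invocation of \Cref{full:lem:vc_matched}, the König-closure argument, and the non-emptiness of $S_X, S_Y$ via \Cref{full:rr:VC_iso} are all routine.
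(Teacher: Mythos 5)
Your matched case is fine and in fact a little more careful than the paper's: \Cref{full:lem:vc_matched} only asserts $S_X \neq \emptyset$ and $S_X \cap S_Y = \emptyset$, and you correctly supply the missing $S_Y \neq \emptyset$ argument (non-isolation of $v \in S_X \subseteq R$ forces an out-arc $x_v \to y_u$ into $\oS$). The WLOG reduction to connected $G[I \cup H]$ also matches the paper.

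The not-matched case, however, has a genuine gap, and it is exactly the one you flag. You replace the paper's construction with the standard K\"onig set $Z = Z_{\oX} \cup Z_{\oY}$ of vertices alternating-reachable from the set $U_{\oX}$ of \emph{all} $M$-unmatched $\oX$-vertices. Your arguments that $Z$ is closed under out-arcs, that it contains a tail SCC $\oS$, and that $S_X, S_Y \neq \emptyset$ are correct. But your claim $S_X \cap S_Y = \emptyset$ is not established, and your parity sketch does not give it: BFS layers in a bipartite graph impose only ``$\oX$-vertices even, $\oY$-vertices odd,'' and a directed cycle in $\oG_M$ visiting both $x_v$ and $y_v$ automatically alternates sides, so there is no parity obstruction. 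Concretely, along the cycle one has $\alpha_{i+1} = \beta_i + 1$ exactly (matching arcs increase the level by one) but only $\beta_i \leq \alpha_{i-1} + 1$ (forward arcs may \emph{decrease} the level), so the levels can close up without contradiction.

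The paper avoids this by taking a different closure: it builds $\oQ$ starting from $\oX_I$ itself (set $\oX_1 = \oX_I$, $\oY_i = N_{\oG}(\oX_i)$, $\oX_{i+1} = \oX_i \cup N_M(\oY_i)$), so that $\oQ \supseteq \oX_I$ by construction and, thanks to the WLOG-connected crown, $\oG_M[\oQ]$ is (weakly) connected. Both facts are used essentially in the disjointness argument: assuming $v \in S_X \cap S_Y$, one traces the $M$-cycle $u_1, \dots, u_k$ inside $\oS$, uses connectedness of $\oG_M[\oQ]$ to force $\oS = \oQ$, hence $\oX_I \subseteq \oS$; strong connectivity of $\oS$ then makes every $\oX_I$-vertex $M$-matched, which contradicts the not-matched hypothesis together with $|\oX_I| \geq |\oY_H|$ and $N_{\oG}(\oX_I) = \oY_H$. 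Your $Z$ need not contain all of $\oX_I$ (it starts from $U_{\oX}$, not $\oX_I$), $\oG_M[Z]$ need not be connected, and so neither step transfers. To close the gap you should replace the K\"onig BFS by the paper's forward closure from $\oX_I$, or else find a genuinely different disjointness argument; the ingredients you list (the bijection $M\colon Z_{\oY} \to Z_{\oX}\setminus U_{\oX}$, the crown conditions) do not on their own yield it.
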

\begin{proof}
	We can assume that $G[I \cup H]$ is connected. Otherwise, choose any connected component $T$ of $G[I \cup  H]$, and let $I' = I \cap T$ and $H' = H \cap T$. It clearly holds that $H' = N(I')$ and that there exists an $H'$-saturating matching into $I'$. Since we cannot apply \Cref{full:rr:VC_iso}, it also holds that $H' \neq \emptyset$. Thus, $(I', H')$ is a crown in $G - (\tilde{I} \cup \tilde{H})$ with $I' \subseteq R$, such that $G[I' \cup H']$ is connected.
	
	If $\oY_H$ is matched by $M$ into $\oX_I$, then we can apply \Cref{full:lem:vc_matched} and are thus done. Hence, assume that $\oY_H$ is not matched by $M$ into $\oX_I$.
	Construct as follows a vertex set $\oQ = \oX_n \cup \oY_n$ with $n \in \N$, such that $\oQ$ induces a (weakly) connected graph in $\oG_M$ and has no outgoing edges outside of itself. Set $\oX_1 = \oX_I$ and repeat for $i = 1$ to $n$, such that $\oX_{n+1} = \oX_n$: let $\oY_i = N_{\oG}(\oX_i)$ and $\oX_{i+1} = \oX_i \cup N_M(\oY_i)$. Observe, that every $y_v \in \oY_n$ is matched by $M$ to some vertex in $\oX_n$: Assume for contradiction that some $y_v \in \oY_n$ is not matched by $M$ to some vertex in $\oX_n$. Then by construction of $\oX_n$, it must hold, that $y_v$ is not matched at all by $M$. By \Cref{full:lem:vc_not_matched} and maximality of $M$ it cannot hold that $y_v \in \oY_H$, so we can assume that $y_v \in \oY_n \setminus \oY_H$. By construction of $\oY_n$ it holds that $y_v$ has an $M$-alternating path to some $y_w \in \oY_1 = \oY_H$ which is matched outside of $\oX_I$. Again by using \Cref{full:lem:vc_not_matched}, it then holds that $y_v$ has an $M$-alternating path,  through $y_w$, to some $x_u \in \oX_I$ with both $y_v$ and $x_u$ not being matched by $M$, which contradicts maximality of $M$.
	Thus, it holds that $\Nout_{\oG_M}(\oY_n) \subseteq \oX_n$. Since it also holds that $\Nout_{\oG_M}(\oX_n) = N_{\oG}(\oX_n) = \oY_n$, we know that $\oQ$ has no outgoing edges in $\oG_M$. Since $\oG[\oX_I \cup \oY_H]$ is connected, and thus, by construction, also $\oG[\oX_n \cup \oY_n]$, it holds that $\oG_M[\oQ]$ is connected.
	
	It is well known that every finite directed graph has a tail strongly connected component. Let $\oS$ be any such component of $\oG_M[\oQ]$. As observed earlier, every $y_v \in \oS \cap \oY_n$ has an outgoing edge to some vertex in $\oX_n$. On the other hand, since we cannot apply \Cref{full:rr:VC_iso} and $\oY_n = \Nout_{\oG_M}(\oX_n)$, every $x_u \in \oS \cap \oX_n$ has an outgoing edge to some vertex in $\oY_n$. This way, $\oS$ contains at least one $\overline{X}$ and at least one $\overline{Y}$ vertex, i.e., it holds that $S_X, S_Y \neq \emptyset$.
	
	It remains to show that $S_X \cap S_Y = \emptyset$. Assume for contradiction that there is some $v \in S_X \cap S_Y$, i.e., it holds that $x_v \in \oS \cap \oX$ and $y_v \in \oS \cap \oY$. Let $u_1 = v$ and repeat from $i=1$ to $k$, such that $u_{k+1} \in \{u_1, \dots, u_k\}$: choose $u_{i+1}$ such that $x_{u_{i+1}}$ is the $M$-neighbor of $y_{u_i}$. Thus, $u_{i+1}$ is a neighbor of $u_i$ in $G$, and as a result, $y_{u_{i+1}}$ is an outgoing neighbor of $x_{u_i}$. As $\oS$ is strongly connected, we have visited all its vertices until $i=k$, so it holds that $\oS = \{x_{u_1}, y_{u_1}, \dots, x_{u_k}, y_{u_k}\}$. 
	Furthermore, it holds that $\oS = \oQ$, as otherwise by connectedness of $\oG_M[\oQ]$ there needs to be some edge $e$ between $\oQ \setminus \oS$ and $\oS$. If $e$ is outgoing from $\oS$, we contradict the fact that $\oS$ is a tail strongly connected component. Else $e$ is an ingoing edge for $\oS$, and since for every vertex in $\oS \cap \oX$ its only ingoing neighbor is also contained in $\oS$, this edge goes from some $x_p \in \oQ \setminus \oS$ to some $y_q \in \oS$. But then there also exists the edge from $x_q \in \oS$ to $y_p \in \oQ \setminus \oS$, which again contradicts the fact that $\oS$ has no outgoing edges. From $\oS = \oQ$ it follows that $\oX_I \subseteq \oS$, which together with $\oS$ being strongly connected implies that all $\oX_I$-vertices are matched by $M$. This leads to a contradiction: By earlier assumption, $M$ does not match $\oY_H$ into $\oX_I$; and at the same time, as $(I, H)$ is a crown, it holds that $|\oX_I| \geq |\oY_H|$ and that $\oX_I$ has no neighbors outside of $\oY_H$, implying that not all $\oX_I$-vertices are matched by $M$.
\end{proof}

\begin{lemma}\label{full:lem:crown_fast}
	\Cref{full:rr:VC_iso,full:rr:VC_crown} can be exhaustively applied in polynomial time.
\end{lemma}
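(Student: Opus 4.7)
The plan is to give an iterative procedure that, in each round, first applies \Cref{full:rr:VC_iso} exhaustively and then tests whether \Cref{full:rr:VC_crown} can be applied, using the tail-strongly-connected-component characterization developed in \Cref{full:lem:tscc-crown,full:lem:crown-tscc}. Concretely, in each round I would remove any vertex of $R$ that has no neighbor in the current graph $G$, then build the bipartite auxiliary graph $\oG$ (on vertex set $\oX \cup \oY$ restricted to $V(G)\setminus(\tilde I\cup\tilde H)$), compute a maximum matching $M$ in $\oG$, orient the edges to form $\oG_M$, and compute its strongly connected components together with the condensation DAG. A tail strongly connected component $\oS$ is then any SCC that is a sink in the condensation, and among all sink SCCs I would search for one with $S_X\neq\emptyset$ and $S_X\cap S_Y=\emptyset$.

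Correctness follows directly from the two structural lemmas. If such a sink SCC exists, \Cref{full:lem:tscc-crown} guarantees that $(S_X,S_Y)$ is a crown of $G-(\tilde I\cup\tilde H)$ with $S_X\subseteq R$, so we can apply \Cref{full:rr:VC_crown} using any $S_Y$-saturating matching extracted from $M$ on $\oG[\oS]$. Conversely, if no such sink SCC exists, then since \Cref{full:rr:VC_iso} has already been applied exhaustively at the start of the round, \Cref{full:lem:crown-tscc} rules out the existence of any crown with independent side in $R$ in $G-(\tilde I\cup\tilde H)$, so \Cref{full:rr:VC_crown} is also inapplicable and the procedure terminates.

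For the runtime, each round is polynomial: building $\oG$ takes linear time, a maximum bipartite matching can be computed in polynomial time (e.g.\ by Hopcroft–Karp), and the SCCs together with the condensation DAG are computable in linear time by Tarjan's algorithm, after which checking the conditions $S_X\neq\emptyset$ and $S_X\cap S_Y=\emptyset$ on each sink SCC is straightforward. Termination in polynomially many rounds is the only point requiring a small measure argument: every application of \Cref{full:rr:VC_iso} strictly decreases $|V(G)|$, and every application of \Cref{full:rr:VC_crown} strictly increases $|\tilde I|+|\tilde H|$ by at least $|I|\geq 1$, while both quantities are bounded by the initial vertex count. Hence the total number of rule applications is $\mathcal{O}(|V(G)|)$, giving an overall polynomial running time.

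The one step that needs a little care is making sure that between consecutive applications of \Cref{full:rr:VC_crown} the hypothesis of \Cref{full:lem:crown-tscc} — namely that \Cref{full:rr:VC_iso} is not applicable — is actually restored, because edges incident to $H$ are being deleted and this may leave vertices of $R$ isolated in $G$; this is why the procedure re-runs \Cref{full:rr:VC_iso} exhaustively at the start of every round rather than only once. Beyond this bookkeeping the proof is mainly a matter of invoking the earlier lemmas and standard polynomial-time subroutines.
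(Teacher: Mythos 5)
Your proposal is correct and follows essentially the same route as the paper: both iterate by building $\oG_M$, locating a tail (sink) strongly connected component with $S_X\neq\emptyset$ and $S_X\cap S_Y=\emptyset$, invoking \Cref{full:lem:tscc-crown} to obtain a crown to feed into \Cref{full:rr:VC_crown}, and using \Cref{full:lem:crown-tscc} to certify termination once no such component remains. The only cosmetic difference is that the paper notes one can avoid recomputing $\oG_M$ by simply deleting $\oS$ from it, whereas you rebuild it each round; both are polynomial, and your explicit observation that \Cref{full:rr:VC_iso} must be re-applied between rounds matches the paper's "remove any new isolated vertices in $R$" step.
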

\begin{proof}
	First, remove all isolated vertices contained in $R$. Then, set $\tilde{I}, \tilde{H} = \emptyset$ and repeatedly apply \Cref{full:rr:VC_crown} by constructing $\overline{G}_M$ and finding a tail strongly connected component $\overline{S}$ in $\overline{G}_M$ with $S_X \cap S_Y = \emptyset$ and $S_X, S_Y \neq \emptyset$, for which, by \Cref{full:lem:tscc-crown}, it holds that $(S_X, S_Y)$ is a crown in $G - (\tilde{I} \cup \tilde{H})$ with $S_X \subseteq R$.  If $E(S_Y, R)$ is more than a matching between $S_Y$ and $S_X$, delete the superfluous edges. After each such step, we set $\tilde{I}= \tilde{I}\cup S_X$, $\tilde{H} = \tilde{H} \cup S_Y$ and remove any new isolated vertices in $R$. Furthermore, the repeated computation of $\overline{G}_M$ can be avoided by simply removing $\overline{S}$ from $\overline{G}_M$, since $S_X$ and $S_Y$ are now part of $\tilde{I}$ and $\tilde{H}$. By \Cref{full:lem:crown-tscc}, if we cannot apply \Cref{full:lem:tscc-crown}, then there are no crowns left in $G - (\tilde{I} \cup \tilde{H})$ with $I \subseteq R$.
\end{proof}

\begin{lemma}\label{full:lem:big_crown_can_reduce}
	If there is a crown $(I, H)$ in $(G,B)$ with $I \subseteq R$ and $|I| > |H|$, then we can apply one among \Cref{full:rr:VC_iso,full:rr:VC_crown}.
\end{lemma}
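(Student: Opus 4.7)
The plan is to show that either \Cref{full:rr:VC_iso} applies to some vertex in $I$, or that from $(I,H)$ one can extract a sub-crown living entirely in $G - (\tilde{I}\cup\tilde{H})$ on which \Cref{full:rr:VC_crown} can be applied. First I would dispose of the easy case: if some $v \in I$ is isolated in the current graph $G$, then since $v \in I \subseteq R$, \Cref{full:rr:VC_iso} applies to $v$. So from now on I assume that every vertex of $I$ has at least one neighbor in $G$.

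The core structural step is to analyze the vertices of $I \cap \tilde{I}$: I would prove that every such $v$ has \emph{exactly one} neighbor in $G$, and that neighbor lies in $H \cap \tilde{H}$. Indeed, $v$ was added to $\tilde{I}$ by some earlier application of \Cref{full:rr:VC_crown} with a crown $(I^*, H^*)$ and $H^*$-saturating matching $M^*$; at that moment $N_G(v) \subseteq H^*$, and the rule removed every $H^*$-incident edge outside of $M^*$, leaving $v$ with at most its $M^*$-partner as a remaining neighbor. Since subsequent applications only delete further edges, $v$ still has at most one neighbor now, and because $v$ is not isolated by assumption this partner survives and lies in $H^* \subseteq \tilde{H}$. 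As $v$ also belongs to the current $I$ with $N_G(I)=H$, this partner is forced into $H \cap \tilde{H}$. Moreover, since each new invocation of \Cref{full:rr:VC_crown} searches in $G - (\tilde{I}\cup\tilde{H})$, the sets $H^*$ produced across applications are pairwise disjoint, so the map sending $v \in I \cap \tilde{I}$ to its current neighbor is injective into $H \cap \tilde{H}$, giving $|I \cap \tilde{I}| \leq |H \cap \tilde{H}| \leq |H|$.

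With this in hand I define $I^\dagger := I \setminus \tilde{I}$ and $H^\dagger := N_{G - (\tilde{I}\cup\tilde{H})}(I^\dagger)$. Combined with $|I| > |H|$, the bound above yields $|I^\dagger| \geq |I| - |H| \geq 1$, so $I^\dagger$ is a non-empty independent subset of $R$. The $H$-saturating matching $M$ of the original crown restricts to an $H^\dagger$-saturating matching into $I^\dagger$: for every $h \in H \setminus \tilde{H}$, its $M$-partner $v \in I$ must lie in $I \setminus \tilde{I} = I^\dagger$, since otherwise the structural observation would force $v$'s unique current neighbor into $\tilde{H}$, contradicting $h \notin \tilde{H}$. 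This forces $H \setminus \tilde{H} \subseteq H^\dagger$, and the reverse inclusion follows immediately from $H^\dagger \subseteq N_G(I^\dagger) \subseteq H$ together with $H^\dagger \cap \tilde{H} = \emptyset$. Hence $H^\dagger = H \setminus \tilde{H}$, the restricted matching is $H^\dagger$-saturating, and $(I^\dagger, H^\dagger)$ is a crown in $G - (\tilde{I}\cup\tilde{H})$ with $I^\dagger \subseteq R$, so \Cref{full:rr:VC_crown} applies.

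The main obstacle I anticipate is precisely the bookkeeping around $\tilde{I}$ and $\tilde{H}$: correctly arguing that each already-marked vertex of $I$ has been reduced to a single neighbor in $\tilde{H}$, and that this partner-assignment is injective across successive applications of \Cref{full:rr:VC_crown}; the rest of the argument is then a straightforward restriction of the original matching.
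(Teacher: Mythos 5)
Your instinct to explicitly exhibit a sub-crown of $(I,H)$ living in $G - (\tilde{I}\cup\tilde{H})$ --- rather than leave it implicit that \Cref{full:rr:VC_crown} ``applies'' --- addresses a real subtlety, since that rule demands a crown inside $G - (\tilde{I}\cup\tilde{H})$ while the lemma's crown lives in $G$. For comparison, the paper's own proof is strikingly short and does not engage with $\tilde{I},\tilde{H}$ at all: it splits on whether $E(H,I)\setminus M$ is empty. If not, it applies \Cref{full:rr:VC_crown}; otherwise $|E(H,I)|=|M|=|H|<|I|$, so (using $N(I)=H$, whence all edges incident with $I$ lie in $E(H,I)$) some $v\in I$ must be isolated and \Cref{full:rr:VC_iso} applies. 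Your case split (isolated $I$-vertex vs.\ not) is logically equivalent to that.

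However, your construction has a genuine gap in handling the cross-terms $I\cap\tilde{H}$ and $H\cap\tilde{I}$: you only subtract $\tilde{I}$ when forming $I^\dagger = I\setminus\tilde{I}$, and only compare against $\tilde{H}$ on the $H$-side. But $\tilde{H}$ may contain $R$-vertices --- the $H$-side of a crown processed by \Cref{full:rr:VC_crown} is $N(I)$, which need not lie in $B$ once $G[R]$ has edges, and the paper relies on these lemmas in exactly that regime (e.g.\ for \vertexcoverfvs, where $G-B$ is a forest). So $I\cap\tilde{H}$ and $H\cap\tilde{I}$ can both be nonempty. Concretely, if $w\in I\cap\tilde{H}$ then its unique remaining neighbor is some $u\in\tilde{I}$, and $N(I)=H$ forces $u\in H\cap\tilde{I}$; then $w\in I^\dagger$ is not even a vertex of $G-(\tilde{I}\cup\tilde{H})$, so $(I^\dagger,H^\dagger)$ is not a crown there, and $u\in H\setminus\tilde{H}$ yet $u\notin H^\dagger$, so your claimed inclusion $H\setminus\tilde{H}\subseteq H^\dagger$ fails. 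The repair is routine: set $I^\dagger := I\setminus(\tilde{I}\cup\tilde{H})$ and $H^\dagger := H\setminus(\tilde{I}\cup\tilde{H})$, and run your degree-one injectivity argument symmetrically, mapping $I\cap\tilde{H}$ injectively into $H\cap\tilde{I}$ as well; together with $\tilde{I}\cap\tilde{H}=\emptyset$ this gives $|I\cap(\tilde{I}\cup\tilde{H})| \le |H\cap(\tilde{I}\cup\tilde{H})| \le |H|$, hence $|I^\dagger| \ge |I|-|H| \ge 1$, and the rest of your argument goes through.
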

\begin{proof}
	From the condition that $(I,H)$ is a crown it follows that $H = N(I)$ and there exists an $H$-saturating matching $M$ between $H$ and $I$. If $E(H,I) \setminus M \neq \emptyset$, we apply \Cref{full:rr:VC_crown}. Otherwise, we get $|I| > |H| = |M| = |E(H, I)|$ and thus at least one vertex in $I \subseteq R$ is isolated, so we apply \Cref{full:rr:VC_iso}.
\end{proof}

Now that we can remove all isolated vertices and crowns $(I, H)$ with $I \subseteq R$ and $|I| >|H|$ in polynomial time, we state that the resulting graphs do not have large independent sets. Since for \vertexcovervc all of $R$ is an independent set, it thus cannot be larger than $B$, which gives us the boundaried kernel. Note that neither the reduction rules, nor the polynomial-time application of them, nor the following lemma expect $G[R]$ to be an independent set, making this lemma also useful for other structural parameterization.

\begin{lemma}\label{full:lem:no_crown_small_IS}
	If $\alpha(G[R]) > \frac{1}{2} |V(G)|$, then we can apply one among \Cref{full:rr:VC_iso,full:rr:VC_crown}.
\end{lemma}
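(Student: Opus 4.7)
The plan is to prove the contrapositive: assuming neither reduction rule applies, show $\alpha(G[R]) \leq \frac{1}{2}|V(G)|$. Equivalently, suppose $I_0 \subseteq R$ is a maximum independent set of $G[R]$ with $|I_0| > \frac{1}{2}|V(G)|$; the goal is to exhibit an applicable rule. Note that $I_0$ is independent in $G$ as well (the edges of $G[R]$ are exactly those of $G$ within $R$), and setting $J := N_G(I_0) \subseteq V(G) \setminus I_0$ gives $|J| \leq |V(G)| - |I_0| < |I_0|$. If some $v \in I_0$ has no $G$-neighbor, then $v$ is isolated in $R$ and \Cref{full:rr:VC_iso} applies, so I assume every $I_0$-vertex has a neighbor in $J$.

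Next I would consider the bipartite graph $B^\star := (I_0, J, E_G(I_0, J))$ and take a maximum matching $M$ in $B^\star$. Because $|M| \leq |J| < |I_0|$, the set $U_I := I_0 \setminus V(M)$ of $M$-unmatched $I_0$-vertices is non-empty. Using the standard König alternating-tree construction, I let $Z$ be the set of vertices of $B^\star$ reachable from $U_I$ via $M$-alternating paths (starting with a non-$M$ edge), and set $I^\star := Z \cap I_0$ and $J^\star := Z \cap J$. The key properties to verify are: every vertex of $J^\star$ is matched by $M$ (else an augmenting path would contradict maximality of $M$), with its $M$-partner in $I^\star$, so that $|I^\star| = |J^\star| + |U_I| > |J^\star|$ and $M|_{J^\star}$ is a $J^\star$-saturating matching into $I^\star$; and the König vertex-cover property ``$u \in Z \cap I_0$ forces every $J$-neighbor of $u$ to lie in $Z \cap J$'', combined with $N_G(I^\star) \subseteq N_G(I_0) = J$ (as $I^\star \subseteq I_0$ is independent in $G$), yields $N_G(I^\star) = J^\star$.

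Finally, $J^\star \neq \emptyset$ follows from $U_I \subseteq I^\star$ and the no-isolated-vertex assumption, so $(I^\star, J^\star)$ is a genuine crown with $I^\star \subseteq R$ and $|I^\star| > |J^\star|$; invoking \Cref{full:lem:big_crown_can_reduce} then delivers an applicable rule among \Cref{full:rr:VC_iso,full:rr:VC_crown}. I expect the main obstacle to be the simultaneous verification of the two König properties above: the alternating-tree structure is needed both to match all of $J^\star$ into $I^\star$ (giving the crown's saturating matching) and to prevent any edge from $I^\star$ escaping $J^\star$ (giving $N_G(I^\star) = J^\star$), and it is precisely this combination that promotes the construction from a mere Hall-defect witness to a proper crown.
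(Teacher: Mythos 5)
Your proposal is correct. Both you and the paper start from the same setup: a maximum independent set $Y$ of $G[R]$ with $|Y| > \tfrac12|V(G)|$, its (small) neighborhood $X = N_G(Y)$, and the observation $|X| < |Y|$; both end by exhibiting a crown $(I,H)$ with $I\subseteq R$ and $|I|>|H|$ and invoking \Cref{full:lem:big_crown_can_reduce}. The step in between is where you diverge. The paper never builds a matching: it checks whether an $X$-saturating matching into $Y$ exists (in which case $(Y,X)$ is already the desired crown), and otherwise takes a \emph{maximal} Hall-violating set $\hat X\subseteq X$ (with $\hat Y = N(\hat X)\cap Y$), deletes it, and argues purely from the maximality of $\hat X$ --- twice using Hall's theorem --- that $(Y\setminus\hat Y,\, X\setminus\hat X)$ is a crown with the required size gap. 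You instead run a maximum matching $M$ between $Y$ and $X$ and take the König alternating-tree region $Z$ rooted at the $M$-unmatched $Y$-vertices; the same standard facts (each vertex in $Z\cap X$ is matched into $Z\cap Y$, and no edge leaves $Z\cap Y$ toward $X\setminus Z$) give $(Z\cap Y,\, Z\cap X)$ as the crown, with $|Z\cap Y| = |Z\cap X| + |U_I| > |Z\cap X|$. The two constructions need not produce the same crown (e.g., when an $X$-saturating matching exists the paper outputs all of $(Y,X)$ while your alternating tree prunes to a strict sub-crown), but either one is a valid input to \Cref{full:lem:big_crown_can_reduce}. Your version is a bit more algorithmic in flavor --- it is essentially what the polynomial-time implementation in \Cref{full:lem:crown_fast} computes anyway --- while the paper's is marginally shorter because it delegates more work to Hall's theorem; both correctly handle the degenerate cases (your explicit isolated-vertex check at the start; the paper's via $X^* = \emptyset$ being caught inside \Cref{full:lem:big_crown_can_reduce}). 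The one spot I'd tighten in your write-up is the sentence ``isolated in $R$'': you mean $v$ has no neighbor in $G$ at all (not merely none in $G[R]$), which is indeed what \Cref{full:rr:VC_iso} requires, but the phrasing could be read the weaker way.
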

\begin{proof}
	Let $Y$ be a maximum independent set of $G[R]$, i.e., such that $|Y| = \alpha(G[R])$. Let $X$ be the neighborhood of $Y$, which is clearly a subset of $V(G) \setminus Y$. Note that since $|Y| = \alpha(G[R]) > \frac{1}{2}|V(G)|$, it thus holds that $|X| \leq |V(G)| - |Y| < \frac{1}{2}|V(G)| < |Y|$.
	
	If there exists an $X$-saturating matching into $Y$, then by \Cref{full:lem:big_crown_can_reduce} we can apply one of the reduction rules.
	Else we choose some maximal $\hat{X} \subseteq X$ such that $|\hat{X}| > |\hat{Y}|$ with $\hat{Y} = N(\hat{X}) \cap Y$, which exists by Hall's Theorem.
	Let $X^* = X \setminus \hat{X}$ and $Y^* = Y \setminus \hat{Y}$.
	Note that even though $X^*$ might have neighbors in $\hat{Y} = Y \setminus Y^*$, it holds that $N(Y^*) \subseteq X^*$, and by maximality of $\hat{X}$, also $N(Y^*) = X^*$.
	Further, as $Y^* \neq \emptyset$, if $(Y^*, X^*)$ is not a crown, it can only be due to absence of an $X^*$-saturating matching into $Y^*$. Then, again by Hall's Theorem, there exists some $\bar{X} \subseteq X^*$ with $|\bar{X}| > |N(\bar{X}) \cap Y^*|$, which contradicts maximality of $\hat{X}$.
	As a result, $(Y^*, X^*)$ must be a crown with $|Y^*| = |Y| - |\hat{Y}| > |Y| - |\hat{X}| > |X| - |\hat{X}| = |X^*|$. Again, by \Cref{full:lem:big_crown_can_reduce} we can apply one of \Cref{full:rr:VC_iso,full:rr:VC_crown}.
\end{proof}

\begin{theorem}\label{full:thm:VC_vc}
	The parameterized problem \vertexcovervc admits a polynomial boundaried kernelization with at most $2(|B|+k)$ vertices.
\end{theorem}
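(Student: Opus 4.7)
The plan is to assemble the pieces developed earlier in the section. By \Cref{full:lem:Bismod} it suffices to handle the restricted case in which the boundary $B$ already serves as the $\Cindependent$-modulator, so $R := V(G) \setminus B$ is an independent set; then the vertex-cover parameter and boundary size coincide at $|B| + k$ (the original $|B|$ plus the size $k$ of the given vertex cover). In this restricted setting it therefore suffices to establish $|V(G')| \leq 2|B|$, since \Cref{full:lem:Bismod} translates this back to the unrestricted bound $2(|B| + k)$.

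Given the restricted input $G_B$, the algorithm exhaustively applies \Cref{full:rr:VC_iso} and \Cref{full:rr:VC_crown}, which runs in polynomial time by \Cref{full:lem:crown_fast}. Both safeness proofs establish equality of optima on arbitrary gluings, so the cumulative offset is $\Delta = 0$. Observe that no vertex of $B$ is ever deleted: \Cref{full:rr:VC_iso} removes only isolated vertices of $R$, while \Cref{full:rr:VC_crown} removes only edges. Consequently the set $B$ remains a valid $\Cindependent$-modulator of the reduced graph $G'$, since $G'[V(G') \setminus B]$ is a subgraph of the originally independent $G[R]$, and can be returned as the output modulator $s' = B$ of size exactly $|B|$.

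To bound the number of vertices, apply the contrapositive of \Cref{full:lem:no_crown_small_IS}: since no reduction rule applies to $G'_B$, we have $\alpha(G'[R']) \leq \tfrac{1}{2}|V(G')|$, where $R' := V(G') \setminus B$. Because $R'$ is still an independent set, $\alpha(G'[R']) = |R'|$, so $|R'| \leq \tfrac{1}{2}(|B| + |R'|)$, which rearranges to $|R'| \leq |B|$ and hence $|V(G')| \leq 2|B|$. Unwinding \Cref{full:lem:Bismod} yields the claimed bound $2(|B|+k)$ in the unrestricted setting.

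The technical heart of the argument is entirely encapsulated in Lemmas \ref{full:lem:crown_solution}--\ref{full:lem:no_crown_small_IS}, so the proof is essentially a composition exercise: there is no substantive obstacle. The only points requiring care are (i) checking that the identification $B = X$ in the restricted case is legitimate (provided by \Cref{full:lem:Bismod} together with hereditariness of $\Cindependent$), and (ii) verifying that the reduction rules neither delete boundary vertices nor introduce edges inside $R$, so that the output boundary continues to witness an independent-set remainder.
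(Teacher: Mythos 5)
Your proof is correct and follows exactly the route the paper takes: invoke \Cref{full:lem:Bismod} to reduce to the case $B = X$, exhaustively apply \Cref{full:rr:VC_iso,full:rr:VC_crown} via \Cref{full:lem:crown_fast}, then use \Cref{full:lem:no_crown_small_IS} together with the fact that $R'$ stays independent to conclude $|R'| \leq |B|$ and hence $|V(G')| \leq 2|B|$, with offset $\Delta = 0$. The extra bookkeeping you spell out (no boundary vertex is ever deleted, $B$ remains a valid modulator, the $\Delta$ values accumulate to zero) is all correct and merely makes explicit what the paper's one-paragraph proof leaves implicit.
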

\begin{proof}
	By \Cref{full:lem:Bismod} assume that we are given boundaried graph $G_B$ with $G-B$ an independent set. Using \Cref{full:lem:crown_fast} apply \Cref{full:rr:VC_iso,full:rr:VC_crown} exhaustively in polynomial time, resulting in some boundaried graph $G'_B$. Observe that $G'$ is a subgraph of $G$, implying that $R' := V(G') \setminus B$ is an independent set, leading to $|R'| = \alpha(G'[R']) \leq \frac{1}{2}|V(G')|$ by \Cref{full:lem:no_crown_small_IS} and thus $|V(G')| = |B| + |R'| \leq 2|B|$. As a result, we can output $G'_B$ with $G'-B$ an independent set and $\Delta = 0$.
\end{proof}

\subsection{Vertex Cover[fvs]}
Using the polynomial kernel for \vertexcoverfvs by Bodlaender and Jansen \cite{DBLP:journals/mst/JansenB13} as a base, we describe our polynomial boundaried kernel for this problem. Main steps of the kernel by Bodlaender and Jansen are to make sure that $G-X$ (with $X$ a forest-modulator) has a perfect matching, to reduce the number of so called conflict structures in $G-X$, and to use that in order to bound the total size of $R$. We use \Cref{full:lem:no_crown_small_IS} for the first step, as it allows us to move vertices that are problematic for a perfect matching in $G[R]$ to $B$, with only a linear blow-up of the latter. Regarding the last two steps, we can mainly apply the same reduction rules as Bodlaender and Jansen. We only need to be careful to not delete boundary-vertices. Instead, we mark them by giving personal leaves and thus forcing them into a solution.

\begin{lemma}[Analogue of {\cite[Lemma 1]{DBLP:journals/mst/JansenB13}}]\label{full:lem:vcfvs_clean}
	Let $G_B \in \graphclass{forest}_B$ such that \Cref{full:rr:VC_iso,full:rr:VC_crown} cannot be applied. In polynomial time one can compute a set $B' \supseteq B$ of size at most $2|B|$, such that $G-B'$ has a perfect matching.
\end{lemma}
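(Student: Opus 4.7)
The plan is to argue via a maximum matching in $G[R]$, together with the independence-number bound that follows from non-applicability of Reduction Rules \ref{full:rr:VC_iso} and \ref{full:rr:VC_crown}.

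First, I would compute a maximum matching $M$ in the forest $G[R]=G-B$; since $G-B$ is a forest this can be done in linear time. Let $U\subseteq R$ denote the set of vertices not covered by $M$, and set $B' := B \cup U$. Because $U\subseteq R$ and $B\cap R=\emptyset$, we have $|B'|=|B|+|U|$, and $G-B' = G[R\setminus U]$. By the choice of $U$, the matching $M$ saturates exactly $R\setminus U$, so $M$ is a perfect matching of $G-B'$. It therefore suffices to show $|U|\le|B|$, which gives $|B'|\le 2|B|$.

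For the size bound I would combine two ingredients. Since Reduction Rules \ref{full:rr:VC_iso} and \ref{full:rr:VC_crown} cannot be applied, Lemma \ref{full:lem:no_crown_small_IS} yields $\alpha(G[R]) \le \tfrac{1}{2}|V(G)| = \tfrac{1}{2}(|B|+|R|)$. On the other hand, $G[R]$ is a forest, hence bipartite; Gallai's identity $\alpha+\tau=n$ together with König's theorem ($\tau=\mu$ for bipartite graphs) gives $\alpha(G[R]) + \mu(G[R]) = |R|$. Substituting $\mu(G[R]) = |M|$ and $|U| = |R| - 2|M|$ yields
\[
|U| = |R| - 2\mu(G[R]) = 2\alpha(G[R]) - |R| \le (|B|+|R|) - |R| = |B|,
\]
as required.

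I do not anticipate a real obstacle here: the argument is a short accounting once Lemma \ref{full:lem:no_crown_small_IS} is in place, and the only subtle point is being careful that vertices in $R$ can be isolated in $G[R]$ (having all neighbors in $B$) without violating Reduction Rule \ref{full:rr:VC_iso}, which only excludes vertices isolated in all of $G$. Such vertices are simply placed into $U$ and hence into $B'$, and are fully accounted for by the identity $\alpha(G[R])+\mu(G[R])=|R|$, so the calculation above goes through unchanged.
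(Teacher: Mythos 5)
Your proof is correct and follows essentially the same route as the paper's: compute a maximum matching $M$ in the forest $G[R]$, take the uncovered vertices into $B'$, and bound their number by $|B|$ via König's theorem combined with Lemma~\ref{full:lem:no_crown_small_IS}. The paper's proof does the same accounting in slightly different notation (writing $|R|=2|M|+|I|$ and $\alpha(G[R])=|M|+|I|$ rather than invoking Gallai's identity explicitly), and your extra remark about vertices isolated in $G[R]$ but not in $G$ is a valid clarification that does not change the argument.
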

\begin{proof}
	With the Hopcroft-Karp algorithm we find a maximum matching $M$ of the forest $G[R]$, and let $I$ be the (independent) set of vertices not covered by $M$. Note that this way it holds that $|R| = 2|M| + |I|$. Since $G[R]$ is a forest and thus bipartite, by König's Theorem it has a minimum vertex cover of size $|M|$, and thus a maximum independent set of size $|R| - |M| = |M|+|I|$. From \Cref{full:lem:no_crown_small_IS} it follows that $|M|+|I| = \alpha(G[R]) \leq 1/2 |V(G)| = 1/2 (|B| + 2|M| + |I|)$, and hence $|I| \leq |B|$. For $B'$ we can thus choose the set $B \cup I$.
\end{proof}

Next, let us use additional definitions, which were first introduced by Bodlaender and Jansen \cite{DBLP:journals/mst/JansenB13} and generalized by Hols et al.~\cite{DBLP:journals/siamdm/HolsKP22}.

\begin{definition}[Blocking sets]
	For a graph $G$, a vertex set $Y \subseteq V(G)$ is called a \textit{blocking set} in $G$, if no minimum vertex cover of $G$ contains $Y$, i.e., for all solutions $S$ with $Y \subseteq S$ it holds that $|S| > \OPT_\VC(G)$. A blocking set $Y$ is called a \textit{minimal blocking set}, if no strict subset of $Y$ is also a blocking set. 
	
	We denote by $\beta(G)$ the size of the largest minimal blocking set in $G$. Further, for graph class $\cC$ let $b_\cC \coloneq \max_{G \in \cC}\beta(G)$ with $b_\cC = \infty$ if the minimal blocking set size of graphs in $\cC$ can be arbitrarily large.
\end{definition}

\begin{definition}[Chunks and conflicts]
	Let $G$ be a graph, $\cC$ a graph class and $X$ a $\cC$-modulator for $G$. A \textit{chunk} in $X$ is an independent vertex set $Z \subseteq X$ of size at most $b_\cC$. If $X$ is clear from context, we denote the set of chunks in $X$ as $\cX$. 
	
	With $F \subseteq V(G) \setminus X$, and $Z \in \cX$, let $\conf{F}{Z} \coloneq \OPT_\VC(G[F] - N_F(X)) + |N_F(X)| - \OPT_\VC(G[F])$ be the \textit{conflict caused by $Z$ on $F$}.
\end{definition}

Note that Bodlaender and Jansen have shown that $2$ is the maximum size of a minimal blocking set in a forest, i.e., it holds that $b_{\Cforest} = 2$ \cite[Lemma 3]{DBLP:journals/mst/JansenB13}. Thus, let $\cX \coloneq \{Z \subseteq B \mid G[Z] \text{ is an independent set and } |Z| \leq 2\}$ be the chunk-set for $G_B$. 
The following reduction rule is easily seen to be gluing safe, as it works with components in $G[R]$ for which we can take the locally optimum solution, without needing to worry about possible conflicts after gluing. This rule can be seen as a special case of \Cref{full:rr:vctdd}.

\begin{redrule}\label{full:rr:vcfvs_component}
	If $G[R]$ contains a connected component $T$ such that for every chunk $Z \in \cX$ it holds that $\conf{T}{Z} = 0$, then delete $T$ and increase $\Delta$ by $\OPT(T)$.
\end{redrule}

Next, let us bound the total number of conflicts that chunks in $\cX$ can cause on $R$.

\begin{lemma}\label{full:lem:vcfvs:badchunk}
	If for some chunk $Z \in \cX$ it holds that $\conf{G[R]}{Z} \geq |B|$, then for every $H_B$ there exists an optimum vertex cover $S$ of $G_B \oplus H_B$ such that $Z \cap S \neq \emptyset$.
\end{lemma}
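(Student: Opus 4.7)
The plan is a direct exchange argument. Fix an arbitrary $B$-boundaried graph $H_B$, set $Q := G_B \oplus H_B$, and pick an optimum vertex cover $S$ of $Q$. If $Z \cap S \neq \emptyset$ we are done, so assume $Z \cap S = \emptyset$. Then every $Q$-neighbor of $Z$ must lie in $S$, and since $Z \subseteq B$ this yields $N_R(Z) \subseteq S_R$, where $S_R := S \cap R$.

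The first substantive step is to lower-bound $|S_R|$. Since $S_R$ is a vertex cover of $G[R]$ containing $N_R(Z)$, the set $S_R \setminus N_R(Z)$ is a vertex cover of $G[R] - N_R(Z)$, so
\[ |S_R| \;\geq\; |N_R(Z)| + \OPT_{\VC}(G[R] - N_R(Z)) \;=\; \OPT_{\VC}(G[R]) + \conf{R}{Z} \;\geq\; \OPT_{\VC}(G[R]) + |B|, \]
by unfolding the definition of conflict and applying the hypothesis. Avoiding $Z$ has thus forced $S_R$ to overshoot $\OPT_{\VC}(G[R])$ by at least $|B|$.

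The second step constructs the replacement $S^* := (S \setminus R) \cup B \cup C$, where $C$ is any optimum vertex cover of $G[R]$. Clearly $Z \subseteq B \subseteq S^*$. Verifying that $S^*$ covers every edge of $Q$ is a short case split that uses $B$ as a separator of $V(H) \setminus B$ from $R$ in $Q$: edges inside $V(H) \cup B$ are covered by $(S \setminus R) \cup B$ just as they were by $S$, every $B$--$R$ edge is covered by $B$, and every edge of $G[R]$ is covered by $C$. For the size,
\[ |S^*| \;\leq\; |S \setminus R| + (|B| - |B \cap S|) + \OPT_{\VC}(G[R]) \;\leq\; |S \setminus R| + |S_R| \;=\; |S|, \]
where the middle inequality plugs in the lower bound from the previous step. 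Hence $S^*$ is an optimum vertex cover of $Q$ intersecting $Z$.

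The main subtlety is the step $|S_R| \geq \OPT_{\VC}(G[R]) + \conf{R}{Z}$: it hinges on having $N_R(Z) \subseteq S_R$ (not merely $\subseteq S$), which is what lets the argument stay inside the local graph $G[R]$ rather than the unknown glued graph $Q$. Once that is in place, the construction of $S^*$ and the size comparison are routine.
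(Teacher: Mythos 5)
Your proof is correct and follows essentially the same exchange argument as the paper: assume $Z\cap S=\emptyset$, replace $S\cap R$ by an optimum vertex cover of $G[R]$, add all of $B$, and use the large conflict to pay for the added $B$-vertices. The paper packages the replacement cover as the union of componentwise optima (which equals $\OPT_{\VC}(G[R])$ anyway) and leaves the bound $|S\cap R|\ge \OPT_{\VC}(G[R])+\conf{R}{Z}$ implicit, whereas you spell it out explicitly via $N_R(Z)\subseteq S\cap R$; that is a welcome clarification but not a different approach.
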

\begin{proof}
	Let $H_B$ be fixed and let $S$ be a minimum vertex cover of $G_B \oplus H_B$. If it holds that $Z \cap S \neq \emptyset$, then we are done. So assume the opposite, i.e., that $Z \cap S = \emptyset$. For each connected component $T$ of $G[R]$, fix some minimum vertex cover $S_T$ of $T$. Let $\hat{S}$ be the union of all such $S_T$. Construct the set $S' \coloneq (S \setminus R) \cup B \cup \hat{S}$. It holds that $S'$ is a vertex cover of $G_B \oplus H_B$: (i) all edges in $H$ are covered, since $S' \cap V(H) \supseteq S \cap V(H)$; (ii) all edges in $G[R]$ are covered, since $S'$ contains all vertices in $\hat{S}$; (iii) all edges between $R$ and $B$ are covered, as $B \subseteq S'$. Since we assumed that $Z \cap S = \emptyset$ and it holds that $\conf{G[R]}{Z} \geq |B|$, it follows that $|S'| \leq |S| - |B| + |B| = |S|$, and thus $S'$ is also a minimum vertex cover of $G_B \oplus H_B$.
\end{proof}

Essentially, Bodlaender and Jansen used \Cref{full:lem:vcfvs:badchunk} to remove any chunk of size one with conflict at least $|B|$ in $R$, and, if the chunk has size two instead, to add an edge between the two vertices of $Z$. In our case, however, we cannot simply remove vertices in the boundary $B$, as they would get reintroduced after gluing. Instead, we mark such vertices by giving them special leaves. This is safe due to the well known reduction rule for \vertexcover, that neighbors of leaves can be safely taken to the solution.

Let $L = \{l_x \in V(G) \mid x \in B\}$ be the set of special leaves of boundary-vertices. At the beginning we set $L = \emptyset$. From now on, we set $R = V(G) \setminus (B \cup L)$. Note that at any point it will hold that $|L| \leq |B|$.

\begin{redrule}\label{full:rr:vcfvs_chunk1}
	If there is a vertex $x \in B$ without leaf $l_x \in L$, and such that $\conf{R}{x} \geq |B|$, then (i) add $l_x$ to $L$ with the edge $\{x, l_x\}$; and (ii) delete all edges between $x$ and $R$.
\end{redrule}
\begin{proof}[Proof of gluing safeness]
	Let $G'_B$ be the resulting graph. Fix any $B$-boundaried graph $H_B$. Let $S'$ be a minimum vertex cover for $G'_B \oplus H_B$. Since $x$ is adjacent to a leaf $l_x$ in $G'_B \oplus H_B$, we can assume that $x$ is contained in $S'$ (else $l_x \in S'$ and we can exchange it with $x$). Thus, all edges which exist in $G_B \oplus H_B$, but not in $G'_B \oplus H_B$ (namely, those between $x$ and $R$), are all covered by $S'$. 
	
	Now let $S$ be a minimum vertex cover for $G_B \oplus H_B$. By \Cref{full:lem:vcfvs:badchunk} we can assume that $S$ contains $x$. Thus, we know that $S$ also covers the only edge in $G'_B \oplus H_B$ that is not contained in $G_B \oplus H_B$ (namely $\{x, l_x\}$), and hence $S$ is also a minimum vertex cover for $G'_B \oplus H_B$.
\end{proof}

\begin{redrule}\label{full:rr:vcfvs_chunk2}
	If there is a chunk $\{x, y\} \in \cX$ with $\conf{R}{\{x, y\}} \geq |B|$, then add an edge between $x$ and $y$, and remove $\{x, y\}$ from the chunk set $\cX$.
\end{redrule}
\begin{proof}[Proof of gluing safeness]
	Let $G'_B$ be the resulting graph. For any $H_B$ it holds that $G_B \oplus H_B$ is a subgraph of $G'_B \oplus H_B$, so $\OPT_\VC(G_B \oplus H_B) \leq \OPT_\VC(G'_B \oplus H_B)$ is obvious.
	
	Now let $S$ be a minimum vertex cover for $G_B \oplus H_B$. By \Cref{full:lem:vcfvs:badchunk} we can assume that $S$ contains $x$ or $y$. Thus, we know that $S$ also covers the only edge in $G'_B \oplus H_B$  that is not contained in $G_B \oplus H_B$ (namely $\{x, y\}$), and hence $S$ is also a minimum vertex cover for $G'_B \oplus H_B$.
\end{proof}

As a last step, we need to bound the number of such local structures, on which the chunks do not cause any conflict, i.e., we know that any minimum vertex cover for the global graph (even after gluing) is also optimal locally.
For this, we use the notion of blockability which was also introduced by Bodlaender and Jansen.

\begin{definition}
	The vertex-pair $u,v \in R$ is called \textit{blockable}, if there is a chunk $Z \in \cX$ such that $u, v \in N(Z)$.
\end{definition}

Vertices $u,v$ being blockable by some chunk $Z$ can be seen in the sense that forbidding $Z$ from a vertex cover solution actually implies that we are forced to take $u$ or $v$ into the solution. It follows from the definition, that if $u,v$ are not blockable, then for any two $x \in N(u) \cap B$ and $y \in N(v) \cap B$ it holds that $x \neq y$ and $\{x, y\} \in E(G)$, i.e., no chunk $Z \in \cX$ is adjacent to both $x$ and $y$.

\begin{redrule}\label{full:rr:vcfvs_conf1}
	If $u, v \in R$ are not blockable and $\deg_{G[R]}(u), \deg_{G[R]}(v) \leq 2$, then do the following: (i) delete $u$ and $v$, and increase $\Delta$ by $1$; (ii) if $u$ has a neighbor $t \neq v$ in $R$, then make it adjacent to the old $B$-neighborhood of $v$; (iii) if $v$ has a neighbor $w \neq u$ in $R$, then make it adjacent to the old $B$-neighborhood of $u$; (iv) if both $t$ and $w$ exist, then add the edge $\{t, w\}$.
\end{redrule}
\begin{proof}[Proof of gluing safeness]
	Let $G'$ be the modified graph and fix some $H_B$. We show that $\OPT(G_B \oplus H_B) = \OPT(G'_B \oplus H_B) + 1$. Since we work with two graphs at the same time, namely those are $G_B \oplus H_B$ and $G'_B \oplus H_B$, let us fix that $N_B(u)$, resp., $N_B(v)$, is used for $N_{G_B \oplus H_B}(u) \cap B = N_G(u) \cap B$, resp., $N_{G_B \oplus H_B}(v) \cap B = N_G(v) \cap B$.
	
	($\OPT(G_B \oplus H_B) \geq \OPT(G'_B \oplus H_B) + 1$) Let us first note the differences in conditions that a vertex cover $S'$ for $G'_B \oplus H_B$ needs to fulfill, compared to what conditions a minimum vertex cover $S$ for $G_B \oplus H_B$ fulfills. That is, $S'$ does not need to cover any edges incident with $u$ or $v$, but it needs to cover: (i) the edges between $t$ and $N_B(v)$, if $t$ exists; (ii) the edges between $w$ and $N_B(u)$, if $w$ exists; and (iii) the edge $\{t, w\}$, if they both exist.
	
	Next, let us show that we can assume without loss of generality that $N_B(u) \cup \{v\} \subseteq S$. There is an edge between $u$ and $v$, so $S \cap \{u, v\} \neq \emptyset$. As $u, v$ are not blockable, there are no common vertices in $N_B(u)$ and $N_B(v)$, and every vertex in $N_B(u)$ is adjacent to every vertex in $N_B(v)$, so either one needs to be fully contained in $S$. The case $N_B(v) \cup \{u\} \subseteq S$ is symmetrical to $N_B(u) \cup \{v\} \subseteq S$. The cases where neither $N_B(u) \subseteq S$ nor $u \in S$, resp., neither $N_B(v) \subseteq S$ nor $v \in S$, trivially lead to uncovered edges.
	
	Now, if $t$ exists, we can assume w.l.o.g.\ that $t \in S$. Otherwise it holds that $u \in S$ and thus $(S \setminus \{u\}) \cup \{t\}$ is also a minimum vertex cover of $G_B \oplus H_B$. Hence, all new edges incident with $t$ are automatically covered by $S' \coloneq S \setminus \{v\}$. If $w$ exists, then either $w \in S$ or $N_{G_B \oplus H_B}(w) \subseteq S$. In the first case, all new edges incident with $w$ are automatically covered by $S'$. And the same also holds in the second case, since we assumed w.l.o.g.\ that $N_B(u) \subseteq S$ and $t \in S$.
	
	($\OPT(G_B \oplus H_B) \leq \OPT(G'_B \oplus H_B) + 1$) Again, let us first observe the new constraints on a vertex cover $S$ for $G_B \oplus H_B$, as compared to the constraints on a minimum vertex cover $S'$ for $G'_B \oplus H_B$. Those are that $S$ additionally needs to cover the edge $\{u, v\}$, the edges between $u$ and $N_B(u)$ and between $v$ and $N_B(v)$, and the edges $\{t, u\}, \{v,  w\}$ if $t$, resp., $w$, exist.
	
	Since $N_B(u) \cup N_B(v)$ induces a complete bipartite graph on $G'_B \oplus H_B$, we can assume w.l.o.g.\ that $N_B(u) \subseteq S'$. Let $S = S' \cup \{v\}$ and note that $S$ covers $\{u, v\}$ and the edges between $N_B(u)$ and $u$, and between $N_B(v)$ and $v$. If neither $t$ nor $w$ exist, we are done by choosing $S$. If both $t$ and $w$ exist, we can assume w.l.o.g.\ that $t \in S'$ (note that else it holds that $w \in S'$ and $N_B(v) \subseteq S'$, which is symmetrical to $t \in S'$ and $N_B(u) \subseteq S'$, so we choose $S' \cup \{u\}$), and thus $S$ covers also the new edges $\{t, u\}$ and $\{v, w\}$. If $w$ exists, but $t$ does not, then again $S$ covers the new edge $\{v, w\}$. Else it holds that $t$ exists and $w$ does not. If $t \in S'$, we are done by choosing $S$. Else it holds that $N_{G'_B \oplus H_B}(t) \subseteq S'$. Hence, in particular, it holds that $N_B(v) \subseteq S'$. In that case we choose $S' \cup \{u\}$ as the vertex cover for $G_B \oplus H_B$, and note that it indeed covers all new edges of $t$, $u$ and $v$.
\end{proof}

\begin{redrule}\label{full:rr:vcfvs_conf2}
	If there are distinct vertices $t, u, v, w$  in $R$ which satisfy $\deg_{G[R]}(u) = \deg_{G[R]}(v) = 3$, $N_R(t) = \{u\}$, $N_R(w) = \{v\}$ and $\{u, v\} \in E(G)$ such that the pairs $\{u, t\}, \{v, w\}$ and $\{t, w\}$ are not blockable, then do the following: (i) delete $t, u, v$ and $w$, and increase $\Delta$ by $2$; (ii) with $p$ being the third neighbor of $u$ in $R$, make $p$ adjacent to the old $B$-neighborhood of $t$; (iii) with $q$ being the third neighbor of $v$ in $R$, make $q$ adjacent to the old $B$-neighborhood of $w$.
\end{redrule}
\begin{proof}[Proof of gluing safeness]
	Let $G'$ be the modified graph and fix some $H_B$. We show that $\OPT(G_B \oplus H_B) = \OPT(G'_B \oplus H_B) + 2$.
	
	($\OPT(G_B \oplus H_B) \geq \OPT(G'_B \oplus H_B) + 2$) Let $S$ be a minimum vertex cover of $G_B \oplus H_B$. We do a case distinction. Let the first case be that both $u, v \in S$. If it also holds that $N_B(t), N_B(w) \subseteq S$, then $S \setminus \{u, v\}$ automatically covers all new edges in $G'_B \oplus H_B$. Otherwise assume w.l.o.g.\ that $N_B(w) \nsubseteq S$. Then it must hold that $w \in S$ and $N_B(v) \subseteq S$, as $\{v, w\}$ is not blockable. We then observe that $(S \setminus \{v\}) \cup \{q\}$ is also a vertex cover of same size as $S$ for $G_B \oplus H_B$, and this corresponds to the next case.
	
	Our second case is, w.l.o.g., that $u \in S$ and $v \not\in S$. Since $v \not\in S$, it follows that $w, q \in S$ and $N_B(v) \subseteq S$. If additionally it holds that $N_B(t) \subseteq S$, then $S \setminus \{u, w\}$ contains $q$ and covers the new edges of $p$. Otherwise it needs to hold that $t \in S$ and $N_B(u), N_B(w) \subseteq S$, as $\{u,t\}$ and $\{t, w\}$ are not blockable. Then we also know that $(S \setminus \{u, w\}) \cup \{p, v\}$ is a vertex cover of the same size as $S$ for $G_B \oplus H_B$, and this corresponds to the case that $v \in S$, $u \not\in S$ and $N_B(w) \subseteq S$, which is symmetrical to $u \in S, v \not\in S$ and $N_B(t) \subseteq S$. Thus we are done.
	
	($\OPT(G_B \oplus H_B) \leq \OPT(G'_B \oplus H_B) + 2$) Let $S'$ be a minimum vertex cover of $G'_B \oplus H_B$. Since $\{t, w\}$ are not blockable in $G$, it holds that $N_B(t) \cup N_B(w)$ induce a complete bipartite graph in $G$, thus also in $G'$ and $G'_B \oplus H_B$. This way, we determined that it must hold that $N_B(t) \subseteq S'$ or that $N_B(w) \subseteq S'$. W.l.o.g.\ assume the former. If the latter also holds, then $S' \cup \{u, v\}$ covers all new edges in $G_B \oplus H_B$ which are not contained in $G'_B \oplus H_B$. Namely, those are $\{p, u\}, \{t, u\}, \{u, v\}, \{v, w\}, \{v, q\}$, the edges between $u$ and $N_B(u)$, between $v$ and $N_B(v)$, between $t$ and $N_B(t)$, and between $w$ and $N_B(w)$.
	
	Otherwise it holds that $N_B(w) \nsubseteq S'$, and thus, since $q$ is adjacent to $N_B(w)$ in $G'$, and since $w$ and $v$ are not blockable in $G$, it must be the case that $q \in S'$ and that $N_B(v) \subseteq S'$. In such case it suffices to choose $S' \cup \{u, w\}$.
\end{proof}

Note that the number of conflicts that a chunk $Z \in \cX$ induces on a component in $G[R]$, as well as what structures are allowed to remain in $G[R]$, is exactly the same as in the kernel of Bodlaender and Jansen, since \Cref{full:rr:vcfvs_chunk1,full:rr:vcfvs_chunk2} still bound the sum of possible conflicts induced by $\cX$, and \Cref{full:rr:vcfvs_component,full:rr:vcfvs_conf1,full:rr:vcfvs_conf2} are exactly the same (neglecting that we increase the solution size offshift $\Delta$ instead of decreasing the sought solution size $k$).

\begin{definition}
	Define the number of \textit{active conflicts} induced on the forest $F$ by the chunks $\cX$ as $\confact{F}{\cX} \coloneq \Sigma_{Z\in \cX} \conf{F}{Z}$.
\end{definition}

\begin{lemma}[\cite{DBLP:journals/mst/JansenB13}, Observation 4]\label{full:lem:BoJa_active}
	Let $G_B$ be a boundaried graph such that $G-B$ is a forest. Let $L$ be a set of degree-$1$ vertices of $G-B$ that are adjacent to $B$. Assume that $G-(B \cup L)$ has a perfect matching, and that one cannot apply  \Crefrange{full:rr:vcfvs_component}{full:rr:vcfvs_conf2}. Let $R = V(G) \setminus \{B \cup L\}$. Then it holds that $\confact{R}{\cX} \leq |B|^2 + \binom{|B|}{2}|B|$. 
\end{lemma}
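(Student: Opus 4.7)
The plan is to bound $\confact{R}{\cX}$ as the product of the number of chunks and a uniform per-chunk conflict bound. The chunk count is immediate: $\cX$ contains at most $|B|$ singletons $\{x\}$ with $x \in B$ and at most $\binom{|B|}{2}$ pairs $\{x,y\}$ obtained from independent pairs in $B$, so $|\cX| \leq |B| + \binom{|B|}{2}$.

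The core step is to argue $\conf{R}{Z} < |B|$ for every $Z \in \cX$, invoking exhaustion of Reduction Rules~\ref{full:rr:vcfvs_chunk1} and~\ref{full:rr:vcfvs_chunk2}. For a pair chunk $\{x,y\}$, if $\conf{R}{\{x,y\}} \geq |B|$ then Rule~\ref{full:rr:vcfvs_chunk2} would add the edge $\{x,y\}$ and remove it from $\cX$; inapplicability of that rule thus forces the bound. For a singleton $\{x\}$ where $x$ has no personal leaf $l_x \in L$, Rule~\ref{full:rr:vcfvs_chunk1} being inapplicable likewise yields $\conf{R}{\{x\}} < |B|$. The delicate sub-case is a singleton $\{x\}$ whose vertex already carries a leaf in $L$: here I would appeal to the invariant that such a leaf can only be present because an earlier application of Rule~\ref{full:rr:vcfvs_chunk1} introduced $l_x$ while simultaneously deleting all edges between $x$ and $R$, and no subsequent rule reintroduces $B$-to-$R$ edges on $x$. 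Hence $N_R(x) = \emptyset$ and $\conf{R}{\{x\}} = 0 < |B|$.

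Combining the two ingredients then yields
\[
\confact{R}{\cX} \;=\; \sum_{Z \in \cX} \conf{R}{Z} \;\leq\; |\cX| \cdot |B| \;\leq\; \left(|B| + \binom{|B|}{2}\right) |B| \;=\; |B|^2 + \binom{|B|}{2}|B|,
\]
as claimed. The main obstacle I anticipate is the singleton-with-leaf case: unlike the other two sub-cases, it rests on a global invariant about the rule-application history rather than on the inapplicability of a single rule. One therefore has to verify that the hypothesis of the lemma (which only stipulates that $L$ be \emph{some} set of $B$-adjacent leaves of $G-B$) is compatible with this invariant — which is automatic when $L$ arises from the exhaustive rule application preceding the lemma's use.
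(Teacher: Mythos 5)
Your proof is correct and follows the intended line of reasoning: the paper does not reprove this lemma (it is cited from Jansen and Bodlaender's Observation~4), but the surrounding text makes clear that \Cref{full:rr:vcfvs_chunk1,full:rr:vcfvs_chunk2} are precisely what bound the per-chunk conflict, and your decomposition into a chunk count times a uniform per-chunk bound of $|B|$ is the natural way to get the stated quantity. Your treatment of the three sub-cases (pair chunk, leafless singleton, singleton with a leaf) is the right case split, and you are right to flag the third as the only place where inapplicability of a rule alone does not suffice. The invariant you invoke does hold: Rule~\ref{full:rr:vcfvs_chunk1} sets $N_R(x)=\emptyset$ when creating $l_x$, and inspection of \Cref{full:rr:vcfvs_component,full:rr:vcfvs_chunk1,full:rr:vcfvs_chunk2,full:rr:vcfvs_conf1,full:rr:vcfvs_conf2} shows that the only rules which ever add a $B$-to-$R$ edge (\Cref{full:rr:vcfvs_conf1,full:rr:vcfvs_conf2}) do so only toward $B$-vertices that are already $R$-adjacent; so once $x$ is isolated from $R$ it stays isolated, giving $\conf{R}{\{x\}}=0$. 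Your caveat about the loose phrasing of $L$ in the lemma statement is also a fair observation — the statement is implicitly tied to $L$ being the set of special leaves produced by \Cref{full:rr:vcfvs_chunk1}, as is indeed the case in the one place the lemma is used.
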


\begin{definition}[Conflict structures]
	Let $F$ be a forest with a perfect matching $M$.\\
	A \textit{conflict structure of type $A$} in $F$ is a pair of distinct vertices $\{v_1, v_2\}$ such that  $\{v_1, v_2\} \in M$ and $\deg_F(v_1) , \deg_F(v_2) \leq 2$.\\
	A \textit{conflict structure of type $B$} in $F$ is a path on four vertices $(v_1, v_2, v_3, v_4)$ such that $v_1$ and $v_4$ are leaves of $F$, and $\deg_F(v_2) = \deg_F(v_3) = 3$.
\end{definition}

\begin{lemma}[\cite{DBLP:journals/mst/JansenB13}, Lemma 7]\label{full:lem:BoJa_active_conf}
	Let $G_B$ be a boundaried graph such that $G-B$ is a forest. Let $L$ be a set of degree-$1$ vertices of $G-B$ that are adjacent to $B$. Assume that $G-(B \cup L)$ has a perfect matching, and that one cannot apply  \Crefrange{full:rr:vcfvs_component}{full:rr:vcfvs_conf2}. Let $R = V(G) \setminus \{B \cup L\}$. Let $\mathcal{S}$ be a set of vertex-disjoint conflict structures in $R$. Then $\confact{R}{\cX} \geq |\mathcal{S}|$.
\end{lemma}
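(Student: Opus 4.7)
The plan is to mirror the proof strategy of Bodlaender and Jansen (their Lemma~7) and argue that each conflict structure in $\mathcal{S}$ contributes at least one unit to the active conflict count. Since the structures are vertex-disjoint and each lies in some tree of the forest $G[R]$, my goal is to derive an additive lower bound $\confact{R}{\cX} \geq |\mathcal{S}|$ by combining local contributions.

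First I would associate each conflict structure with a blocking chunk. For a type A structure $\{v_1, v_2\}$: both endpoints have degree at most $2$ in $G[R]$, so the fact that \Cref{full:rr:vcfvs_conf1} cannot be applied forces $\{v_1,v_2\}$ to be blockable, yielding a chunk $Z \in \cX$ with $v_1, v_2 \in N(Z)$. For a type B structure $(v_1,v_2,v_3,v_4)$: matching the path vertices to the quadruple $(t,u,v,w)$ in \Cref{full:rr:vcfvs_conf2}, the non-applicability of that rule implies that at least one of the three pairs $\{v_1,v_2\}$, $\{v_3,v_4\}$, $\{v_1,v_4\}$ is blockable, so some $Z \in \cX$ is adjacent to both endpoints of one such pair.

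Next I would show that every associated chunk contributes at least one to $\conf{T}{Z}$, where $T$ is the tree of $G[R]$ containing the structure. For type A: any minimum vertex cover of $T$ covers the matched edge $\{v_1,v_2\}$ using only one of its endpoints, but once $N_T(Z) \supseteq \{v_1,v_2\}$ is removed, both are paid for in $|N_T(Z)|$ while the residual cost on $G[T]-N_T(Z)$ shrinks by at most one compared to the original, giving a strict gap of at least one. For type B: the two matched edges on the four-vertex path admit a joint minimum cover of size two, whereas blocking the relevant pair forces at least three of these four vertices to be effectively charged, again producing a gap of at least one. In either case, a careful case distinction on which vertex of the pair is taken into a minimum VC, combined with the degree conditions built into the definition of the conflict structures, yields the claim.

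Finally, because $G[F]$ decomposes into its trees, $\conf{F}{Z}$ decomposes as $\sum_T \conf{T}{Z}$, so local contributions sum across trees. Vertex-disjointness of $\mathcal{S}$ lets me add contributions of distinct structures within the same tree as well, yielding $\confact{R}{\cX} = \sum_{Z\in\cX} \conf{R}{Z} \geq |\mathcal{S}|$. I expect the main obstacle to be the within-tree additivity when a single chunk $Z$ blocks several structures simultaneously: there one must verify that the local VC analysis still produces an independent unit of conflict per blocked structure, which requires leveraging vertex-disjointness of structures and the fact that the forced extra cost near each structure is concentrated on a different pair of matched vertices.
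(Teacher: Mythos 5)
You are reconstructing a proof that the paper does not spell out: it cites~\cite{DBLP:journals/mst/JansenB13} directly, so there is no in-paper argument to compare against. Judged on its own merits, your high-level plan is the right one (and is the one Bodlaender and Jansen take): use non-applicability of \Cref{full:rr:vcfvs_conf1} and \Cref{full:rr:vcfvs_conf2} to extract, for each conflict structure, a chunk that blocks a suitable pair of its vertices, and then lower-bound the resulting conflict. Your association step is correct in both cases, and the decomposition $\conf{R}{Z}=\sum_T \conf{T}{Z}$ across trees is also fine.

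However, there is a genuine gap, and you name it yourself at the end: you never actually prove that when a single chunk $Z$ blocks several vertex-disjoint structures inside the same tree $T$, you get $\conf{T}{Z}\geq$ (number of such structures), rather than merely $\conf{T}{Z}\geq 1$ from each association separately. Your local argument for type~A is also not a clean argument: ``the residual cost on $G[T]-N_T(Z)$ shrinks by at most one compared to the original'' is not a well-posed claim when $N_T(Z)$ is large, and it only tries to certify a gap of one rather than one per blocked structure.

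The missing step is per-matching-edge accounting, which simultaneously cleans up the type~A case and yields the additivity. Fix a tree $T$ with perfect matching $M$, so $\OPT_\VC(T)=|M|$ and every vertex cover $C$ of $T$ satisfies $|C|=\sum_{e\in M}|C\cap e|$ with each summand at least $1$. Let $W=N_T(Z)$ and let $C$ be any vertex cover of $T$ containing $W$. Assign each structure of $\mathcal{S}$ lying in $T$ and associated to $Z$ to its matching edges: a type~A structure $\{v_1,v_2\}$ owns the single edge $\{v_1,v_2\}\in M$, and a type~B structure $(v_1,v_2,v_3,v_4)$ owns $\{v_1,v_2\},\{v_3,v_4\}\in M$ (leaves must be matched to their unique neighbor). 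For type~A, $\{v_1,v_2\}\subseteq W\subseteq C$ gives $|C\cap\{v_1,v_2\}|=2$, an excess of $1$ over the single matching edge. For type~B, the blocked pair is one of $\{v_1,v_2\}$, $\{v_3,v_4\}$, $\{v_1,v_4\}$; in the first two cases one matching edge contributes $2$ and the other at least $1$, and in the third case $v_1,v_4\in C$ and the internal edge $\{v_2,v_3\}$ forces a third vertex, so $|C\cap\{v_1,v_2,v_3,v_4\}|\geq 3$, again an excess of $1$ over the two matching edges. Since the structures are vertex-disjoint, the matching edges they own are disjoint, so these excesses add, and every remaining matching edge still contributes at least $1$. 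Hence $|C|\geq |M| + |\mathcal{S}_{T,Z}|$, i.e.\ $\conf{T}{Z}\geq |\mathcal{S}_{T,Z}|$. Summing over chunks and trees gives $\confact{R}{\cX}\geq |\mathcal{S}|$. This is the piece your sketch leaves out; without it the proof does not go through when a chunk blocks more than one structure.
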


\begin{lemma}[\cite{DBLP:journals/mst/JansenB13}, Theorem 1]\label{full:lem:BoJa_conf}
	Let $T$ be a tree with a perfect matching. There is a set $\mathcal{S}$ of mutually vertex-disjoint conflict structures in $T$ with $|\mathcal{S}| \geq |V(T)|/14$.
\end{lemma}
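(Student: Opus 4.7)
My plan is to prove the claim by strong induction on $|V(T)|$ via a bottom-up greedy procedure: at each step, identify a conflict structure in a local neighborhood of a deepest leaf, delete at most $14$ vertices (including those of the chosen structure) so that the remaining forest still has a perfect matching, and recurse. Since $T$ has a perfect matching, $|V(T)|$ is even, so the base cases $|V(T)| \le 14$ are easy: we need only $\lceil |V(T)|/14 \rceil \le 1$ structure, and for any tree with $|V(T)| \ge 2$ a short case analysis around any leaf yields a type~A or type~B structure directly.

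For the inductive step I would root $T$ at a non-leaf vertex and pick a deepest leaf $v_1$, so its unique neighbor $v_2$ is its parent and $\{v_1,v_2\}\in M$. In the easy case $\deg_T(v_2)\le 2$, the pair $\{v_1,v_2\}$ is itself a type~A conflict structure; I would delete these two vertices, observe that every component of the remaining forest is a tree with a perfect matching (the restriction of $M$), and apply the inductive hypothesis to each such component. The arithmetic $(|V(T)|-2)/14 + 1 \ge |V(T)|/14$ closes this branch with room to spare. When $\deg_T(v_2)\ge 3$, the key observation is that no sibling of $v_1$ in the rooted view can be a leaf of $T$, since such a leaf would have no available partner in $M$ (its only neighbor $v_2$ is already used). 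Hence each sibling of $v_1$ has its own subtree containing another matched leaf--parent pair, and by inspecting such a sibling subtree one obtains either a nearby type~A structure (reducing back to the easy case on a smaller subgraph), or a configuration $v_1$--$v_2$--$v_3$--$v_4$ with $v_2,v_3$ of degree~$3$ in $T$ and $v_4$ a leaf, which is precisely a type~B structure; in either case I would remove the structure together with a bounded number of additional matched pairs needed to keep each remaining component a tree with perfect matching, and then recurse.

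The main obstacle is the case analysis in the high-degree branch: one has to verify that for every configuration (large $\deg_T(v_2)$, complicated sibling subtrees, alternating matched/unmatched edges propagating upward) there is some bounded local patch whose excision both produces a conflict structure and preserves a perfect matching on every remaining component. The constant $14$ provides the slack needed for the worst such patch; with the invariant of at least one structure per at most $14$ deleted vertices, the induction gives $|\mathcal{S}|\ge |V(T)|/14$ as claimed.
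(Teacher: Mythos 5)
Your delete-and-recurse induction has a soundness flaw that the sketch does not address: a conflict structure found by the inductive hypothesis inside a component $T'$ of the remaining forest is measured with respect to degrees in $T'$, but the lemma requires conflict structures in $T$. Deleting the local patch decreases the degree of whichever vertex was adjacent to it, so a type-A pair returned by the IH (both endpoints of degree $\le 2$ in $T'$) may have an endpoint of degree $3$ in $T$, and a type-B path in $T'$ may use a vertex whose $T'$-leaf is not a $T$-leaf or whose degree is $3$ in $T'$ but larger in $T$. A concrete instance lives even in your ``easy case'': take the path $p_1\,\text{--}\,p_2\,\text{--}\,\dots\,\text{--}\,p_6$ with pendant leaves $q_3$ at $p_3$ and $q_4$ at $p_4$, matched by $\{p_1,p_2\},\{p_3,q_3\},\{p_4,q_4\},\{p_5,p_6\}$. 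Starting at the deepest leaf $p_1$ you delete $\{p_1,p_2\}$ (type A, since $\deg_T(p_2)=2$) and recurse on $T'=T-p_1-p_2$. Now $\deg_{T'}(p_3)=2$, so the IH may return $\{p_3,q_3\}$ as a type-A structure, yet $\deg_T(p_3)=3$, so $\{p_3,q_3\}$ is \emph{not} a conflict structure in $T$. Unless the inductive hypothesis is strengthened (e.g.\ to require the packing to avoid the boundary vertex of the cut, amortizing a small amount of waste) or the recursion is replaced by a single bottom-up traversal that keeps all degrees measured in $T$, the union you form is not a valid packing in $T$.

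Separately, the heart of the argument --- that for every configuration around a deepest leaf with $\deg_T(v_2)\ge 3$ there exists a bounded patch whose removal yields a conflict structure while preserving a perfect matching on every remaining piece --- is only asserted; you explicitly flag it as ``the main obstacle.'' Without that case analysis the constant $14$ is unjustified. Note also that this lemma is imported verbatim from Jansen and Bodlaender and the present paper gives no proof of it, so there is no in-paper argument to compare against; but as written your sketch is not yet a proof, for the two reasons above.
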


\begin{lemma}[\cite{DBLP:journals/mst/JansenB13}, Lemma 9]\label{full:lem:vcfvs_fast}
	We can exhaustively apply \Crefrange{full:rr:vcfvs_component}{full:rr:vcfvs_conf2} on $G_B$ in polynomial time.
\end{lemma}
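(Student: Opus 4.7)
I would show (i) that each reduction rule can be tested and applied in polynomial time per round, and (ii) that only polynomially many rounds occur before none of the rules fires. The structure mirrors the analogous argument for the (non-boundaried) kernel of Bodlaender and Jansen~\cite{DBLP:journals/mst/JansenB13}; the only genuinely new piece is handling the leaf set $L$ introduced by Rule~\ref{full:rr:vcfvs_chunk1} in place of deletion.

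\textbf{Per-round cost.} My first step would be to verify that $G[R]$ remains a forest throughout the procedure: Rules~\ref{full:rr:vcfvs_chunk1} and~\ref{full:rr:vcfvs_chunk2} do not touch any edges inside $R$; Rule~\ref{full:rr:vcfvs_conf2} only removes vertices from $R$; and Rule~\ref{full:rr:vcfvs_conf1} adds at most the single edge $\{t,w\}$ along the previously existing forest path from $t$ to $w$ through $u$ and $v$, whose internal vertices $u,v$ are simultaneously deleted, so no cycle is created. With this in hand, for any $F \subseteq R$ and any chunk $Z$ the values $\OPT_\VC(G[F])$ and $\OPT_\VC(G[F] - N_F(Z))$ are minimum vertex covers on a forest, computable in linear time by standard tree dynamic programming. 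Since $|\cX| = O(|B|^2)$, this lets one evaluate the preconditions of Rules~\ref{full:rr:vcfvs_component}--\ref{full:rr:vcfvs_chunk2} in polynomial time. Rules~\ref{full:rr:vcfvs_conf1} and~\ref{full:rr:vcfvs_conf2} are purely structural; one enumerates the $O(|V(G)|^4)$ candidate vertex tuples in $R$ and checks degree conditions, leaf-adjacency, and blockability, the last reducing to a local test on the $B$-neighborhoods of the involved $R$-vertices.

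\textbf{Bounding the number of rounds.} Each firing of Rule~\ref{full:rr:vcfvs_component}, \ref{full:rr:vcfvs_conf1} or~\ref{full:rr:vcfvs_conf2} strictly decreases $|R|$, and no rule ever increases $|R|$, so together they fire at most $|R| \leq |V(G)|$ times. Rule~\ref{full:rr:vcfvs_chunk1} attaches a fresh private leaf $l_x$ to some $x \in B$, after which its precondition $l_x \notin L$ is permanently violated (no rule ever removes a vertex from $L$), so it fires at most $|B|$ times. Rule~\ref{full:rr:vcfvs_chunk2} makes the two endpoints of a size-$2$ chunk adjacent, permanently removing that pair from $\cX$ (chunks are independent by definition), and no rule reintroduces a chunk; hence it fires at most $\binom{|B|}{2}$ times. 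Summing, exhaustive application terminates after $O(|V(G)| + |B|^2)$ rule applications, giving a polynomial total running time.

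\textbf{Main obstacle.} The delicate point is the forest-preservation argument in the second paragraph: without it, neither the tree-DP used for conflict computation nor the structural preconditions of Rules~\ref{full:rr:vcfvs_conf1} and~\ref{full:rr:vcfvs_conf2} (which inspect degrees within $G[R]$) remain well-posed. Once one observes that the edge added by Rule~\ref{full:rr:vcfvs_conf1} simply reconnects the two components split off by deleting $u$ and $v$, the rest is routine bookkeeping.
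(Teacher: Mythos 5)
Your proof is correct and follows the argument one would expect; the paper itself gives no proof here, instead citing Lemma~9 of Jansen and Bodlaender directly, so your reconstruction faithfully fills in exactly what that citation suppresses. The two pieces that genuinely need care in the boundaried setting are handled: (a) forest preservation under \Cref{full:rr:vcfvs_conf1} — since $u,v$ are the (unique, because $G[R]$ is a forest) internal vertices of the $t$–$w$ path, deleting them and reintroducing $\{t,w\}$ leaves a forest — and (b) the termination bound for \Cref{full:rr:vcfvs_chunk1,full:rr:vcfvs_chunk2}, which in this variant mark rather than delete, so one must argue monotonicity of $L$ and of the edge set on $B$ rather than just a shrinking vertex count; both are correctly observed. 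The per-round cost via tree-DP on $G[R]$ and enumeration of $O(n^4)$ tuples, together with the $O(|V(G)| + |B|^2)$ bound on the number of firings, gives the claimed polynomial bound.
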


\begin{lemma}\label{full:lem:vcfvs_size}
	Let $G_B$ be a boundaried graph and $L \subseteq V(G) \setminus B$ a set of degree-1 vertices that are adjacent to $B$, such that for $R \coloneq V(G) \setminus (B \cup L)$ it holds that $G[R]$ has a perfect matching. If one cannot apply any of \Crefrange{full:rr:vcfvs_component}{full:rr:vcfvs_conf2}, then it holds that $|R| \leq 14(|B|^2 + |B|^3)$.
\end{lemma}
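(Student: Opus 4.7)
The plan is to chain together three of the cited Bodlaender--Jansen results: \Cref{full:lem:BoJa_conf} lower bounds the number of disjoint conflict structures in any tree with a perfect matching by a constant fraction of its vertex count, \Cref{full:lem:BoJa_active_conf} relates the number of disjoint conflict structures to the number of active conflicts on $R$ (in the absence of applicable rules), and \Cref{full:lem:BoJa_active} upper bounds the total number of active conflicts in terms of $|B|$. Combining these yields the desired polynomial bound on $|R|$.

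First I would decompose $G[R]$ into its connected components $T_1, \dots, T_m$. Since $G-B$ is a forest and $G[R]$ is obtained by discarding the leaves in $L$ and then restricting to $V(G)\setminus(B\cup L)$, each $T_i$ is a tree, and by the hypothesis $G[R]$ has a perfect matching $M$; in particular, $M$ restricted to each $T_i$ is a perfect matching of $T_i$. Applying \Cref{full:lem:BoJa_conf} to each $T_i$ produces a set $\mathcal{S}_i$ of vertex-disjoint conflict structures in $T_i$ with $|\mathcal{S}_i|\geq |V(T_i)|/14$. Since the $T_i$ are pairwise vertex-disjoint, $\mathcal{S} := \bigcup_i \mathcal{S}_i$ is again vertex-disjoint, and $|\mathcal{S}| = \sum_i |\mathcal{S}_i| \geq \sum_i |V(T_i)|/14 = |R|/14$.

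Next, because by hypothesis none of \Crefrange{full:rr:vcfvs_component}{full:rr:vcfvs_conf2} can be applied and $G[R]$ has a perfect matching, \Cref{full:lem:BoJa_active_conf} applies and gives $\confact{R}{\cX} \geq |\mathcal{S}| \geq |R|/14$. On the other hand, the same hypothesis together with \Cref{full:lem:BoJa_active} yields the upper bound $\confact{R}{\cX} \leq |B|^2 + \binom{|B|}{2}|B|$.

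Chaining the two inequalities gives
\[
\frac{|R|}{14} \;\leq\; \confact{R}{\cX} \;\leq\; |B|^2 + \binom{|B|}{2}|B| \;\leq\; |B|^2 + |B|^3,
\]
and hence $|R| \leq 14(|B|^2 + |B|^3)$, as required. The only nontrivial step is verifying that the hypotheses of the borrowed lemmas transfer verbatim to our setting, which holds because we have explicitly arranged $L$, the perfect matching in $G[R]$, and the exhaustive application of the relevant reduction rules.
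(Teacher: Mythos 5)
Your proposal follows exactly the same chain as the paper's proof: apply \Cref{full:lem:BoJa_conf} componentwise in $G[R]$ to extract $|R|/14$ vertex-disjoint conflict structures, then sandwich $\confact{R}{\cX}$ between $|R|/14$ (via \Cref{full:lem:BoJa_active_conf}) and $|B|^2 + \binom{|B|}{2}|B| \le |B|^2 + |B|^3$ (via \Cref{full:lem:BoJa_active}). The argument is correct and only more verbose than the paper's; there is no meaningful divergence.
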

\begin{proof}
	Let $\mathcal{S}$ be the set of conflict structures in $G[R]$ which is obtained by repeated use of \Cref{full:lem:BoJa_conf} on each component of $G[R]$. It holds that $|\mathcal{S}| \geq |R|/14$. By \Cref{full:lem:BoJa_active,full:lem:BoJa_active_conf} we get $|B'|^2 + \binom{|B|}{2}|B| \geq \confact{R}{\cX} \geq |R|/14$, i.e., it holds that $|R| \leq 14(|B|^2 + |B|^3)$.
\end{proof}

	\begin{theorem}
	The parameterized problem \vertexcoverfvs admits a polynomial boundaried kernelization with at most $\Oh((|B| + k)^3)$ vertices.
\end{theorem}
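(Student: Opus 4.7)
The plan is to assemble the reduction rules of this subsection on top of the general reduction principle of \Cref{full:lem:Bismod}. By that lemma it suffices to give a kernelization for the restricted variant in which the forest-modulator coincides with the boundary $B$. Applied to the original instance $G_B$ with forest-modulator $X$ of size at most $k$, this amounts to running the restricted kernel on $G_{B \cup X}$, whose boundary has size at most $|B|+k$; so a restricted kernel of size $\Oh(n^3)$ in the boundary size $n$ immediately gives the desired $\Oh((|B|+k)^3)$ bound.

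For the restricted case I would first exhaust \Cref{full:rr:VC_iso,full:rr:VC_crown} in polynomial time via \Cref{full:lem:crown_fast}; both are gluing-safe with offset $0$. I would then invoke \Cref{full:lem:vcfvs_clean} to compute a set $B_0 \supseteq B$ with $|B_0| \leq 2|B|$ such that $G - B_0$ admits a perfect matching, and I would treat $B_0$ as the working boundary from this point onwards. This enlargement is free thanks to \Cref{full:lem:BD}: any rule shown gluing-safe for $B_0$ transfers to $B$ with the same offset. Next, starting from $L = \emptyset$, I would exhaust \Cref{full:rr:vcfvs_component,full:rr:vcfvs_chunk1,full:rr:vcfvs_chunk2,full:rr:vcfvs_conf1,full:rr:vcfvs_conf2} using \Cref{full:lem:vcfvs_fast}, accumulating the offset $\Delta$ output by each rule. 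Since only \Cref{full:rr:vcfvs_chunk1} modifies $B_0$ — and only by attaching a fresh leaf in $L$ — the graph $G - (B_0 \cup L)$ remains a forest with a perfect matching throughout, and $|L| \leq |B_0|$.

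After exhaustion, \Cref{full:lem:vcfvs_size} gives $|R| \leq 14(|B_0|^2 + |B_0|^3)$ for $R = V(G') \setminus (B_0 \cup L)$, so $|V(G')| \leq |B_0| + |L| + |R| = \Oh(|B|^3)$. I would output $G'_B$ together with modulator $B$ and the accumulated offset $\Delta$; equivalence with respect to $B$ follows from the gluing-safeness of each applied rule (for boundary $B_0$) composed with \Cref{full:lem:BD}. The main subtlety I expect to watch for is that the one-time enlargement $B \to B_0$ does not get undone or re-triggered by the second batch of rules — which is why it matters that none of those rules deletes a $B_0$-vertex, so that $B_0$ remains a valid boundary up to and including the final invocation of \Cref{full:lem:vcfvs_size}. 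Beyond this single bookkeeping point, the theorem is a direct composition of the lemmas already established.
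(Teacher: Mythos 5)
Your proposal is correct and follows essentially the same route as the paper's proof: exhaust the crown rules, apply \Cref{full:lem:vcfvs_clean} to obtain an enlarged boundary $B' \supseteq B\cup X$ of size $\Oh(|B|+k)$ with a perfect matching on the rest, exhaust \Crefrange{full:rr:vcfvs_component}{full:rr:vcfvs_conf2}, bound $|R|$ via \Cref{full:lem:vcfvs_size}, and pass the result back through \Cref{full:lem:BD} with the accumulated offset. The only superficial difference is that you route the $B\to B\cup X$ enlargement explicitly through \Cref{full:lem:Bismod} before calling \Cref{full:lem:vcfvs_clean}, whereas the paper folds both enlargements into a single step defining $B'\supset B\cup X$ directly — this is bookkeeping, not a different argument.
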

\begin{proof}
		Let boundaried graph $G_B$ and feedback vertex set $X$ of $G$ be given. Using \Cref{full:rr:VC_iso,full:rr:VC_crown} and \Cref{full:lem:vcfvs_clean}, compute a set $B' \supset B \cup X$ of size at most $2(|B| + k)$, such that $G - B'$ has a perfect matching. Note that $G - B'$ also remains a forest. Using \Cref{full:lem:vcfvs_fast}, exhaustively apply \Crefrange{full:rr:vcfvs_component}{full:rr:vcfvs_conf2} on $G_{B'}$ in polynomial time. Let $G'_{B'}$ be the resulting graph, $L$ the set of vertices introduced by the repeated use of \Cref{full:rr:vcfvs_chunk1}, $\Delta$ the resulting offset and $R = V(G') \setminus (B' \cup L)$. By \Cref{full:lem:vcfvs_size} it  holds that $|R| \leq 14(|B'|^2 + |B'|^3)$. By gluing safeness of the applied reduction rules, it holds that $G_{B'} \equiv_{\VC, B'} G'_{B'}$ and since $B \subseteq B'$, we know by \Cref{full:lem:BD} that for all $H_B$: $\OPT(G_B \oplus H_B) = \OPT(G'_B \oplus H_B) + \Delta$. As a result, our boundaried kernelization algorithm outputs $G'_B$ with at most $\Oh((|B|+k)^3)$ vertices, its feedback vertex set $B'$ of size $2(|B|+k)$ and offset $\Delta$.
\end{proof}

\subsection{Feedback Vertex Set[fvs]}
The state of the art vertex-quadratic kernelization for $\feedbackvertexset[k]$ was given by Thomassé \cite{DBLP:journals/talg/Thomasse10}. It mainly works through handling vertices of low degrees, as well as certain flowers, i.e., sets of cycles intersecting at exactly one vertex, which are shown to exist at vertices of high degree, and recognizing that this bounds the total number of vertices by a quadratic function on the sought solution size, unless certainly given a NO-instance. We have no sought solution size at hand, but we have a nice structure with $G[R]$ only consisting of trees as components. Using similar reduction rules to those given by Thomassé, we force the low-degree vertices of $R$ (i.e., leaves and non-branching vertices) to be adjacent to $B$ on one hand, and limit the maximum degree of the $B$-vertices by $\Oh(|B|)$ on the other hand. Together, this bounds the allowed number of vertices in $R$ by $\Oh(|B|^2)$.

Throughout this section we allow loops, i.e., an edge from a vertex to itself, and double edges, i.e., that an edge $\{u, v\}$ exists up to twice. If an edge already exists twice and we would try to add a third copy, we instead discard the new copy.

First, let us handle $R$-vertices of low degree, $B$-vertices with loops and vertices $x \in B$ that admit an $x$-flower of order greater than $|B|$, i.e., $x$ is the intersection point of at least $|B|+1$ many otherwise disjoint cycles. The gluing safeness of these rules is easy to see, since $R$-vertices of degree $0$ and $1$ are not contained in any cycle, even after gluing; while $B$-vertices with loops and high-order flowers need to be contained in any solution, hence additional edges at these vertices are of no interest.

\begin{redrule}\label{full:rr:fvs_deg01}
	If exists vertex $v \in R$ with degree $0$ or $1$, then delete $v$.
\end{redrule}
\begin{redrule}\label{full:rr:fvs_deg2}
	If some vertex $v \in R$ is incident with exactly two edges $\{v, u\}$ and $\{v, w\}$ (possibly  with $u = w$), then delete $v$ and add the edge $\{u, w\}$. 
\end{redrule}
\begin{redrule}\label{full:rr:fvs_loop_B}
	If for some vertex $x \in B$ there is a loop at $x$ and $x$ is incident with further edges, then delete all those other edges incident with $x$.
\end{redrule}
\begin{redrule}\label{full:rr:fvs_flower}
	If for some vertex $x \in B$ there is an $x$-flower $F$ in $G$ of order $p > |B|$, then add a loop at $x$ and delete all other edges incident with $x$.
\end{redrule}

Next, we work with vertices $x \in B$ that admit a special structure, to which we first want to give a little intuition: Assume that for $x$ there exists a vertex set $X \in V \setminus \{x\}$ and set $\cC$ of connected components in $G-(X \cup\{x\})$ not intersecting $B$, such that there is exactly one edge between $x$ and any component $C \in \cC$, and for any subset $Z \subseteq X$ there are at least $2|Z|$ components in $\cC$ that are adjacent to $Z$. The last point implies, by Hall's theorem, that we can assign to each vertex in $X$ its distinct two adjacent components of $\cC$. This way, we obtain an $x$-flower of order $|X|$, in which each cycle goes from $x$ through some component in $\cC$ to some vertex in $X$, and back to $x$ through another component in $\cC$, with each cycle visiting its distinct two components. Since each component of $\cC$ has no neighbors outside of $X \cup \{x\}$ (even after gluing, as it does not intersect $B$) and is itself a tree, there exists an optimum solution which contains either $x$ or all of $X$. We refer to the work of Thomassé \cite{DBLP:journals/talg/Thomasse10} for the formal proof of (gluing) safety, since by choice of $V(C) \cap B = \emptyset$ for every $C \in \cC$, we make sure that this structure preserves even after gluing.

\begin{redrule}\label{full:rr:fvs_gallai}
	Assume there is a vertex $x \in V$, a vertex set $X \subseteq V \setminus \{x\}$ and a set $\cC$ of connected components of $G - (X \cup \{x\})$ (not necessarily all of the connected components) such that: (i) for every $C \in \cC$ it holds that $V(C) \cap B = \emptyset$; (ii) there is exactly one edge between $x$ and every $C \in \cC$; (iii) every $C \in \cC$ induces a tree; (iv) for every subset $Z \subseteq X$, the number of components in $\cC$ having some neighbor in $Z$ is at least $2|Z|$. Then add a double edge between $x$ and every vertex in $X$, and delete the edges between $x$ and the components in $\cC$.
\end{redrule}

As a last step, we show that as long as some $B$-vertex has at least $5|B|$ neighbors in $R$, we can find a reduction rule to apply. Both finding the rule and applying it can be done in polynomial time.

\begin{lemma}[\cite{DBLP:journals/talg/Thomasse10}, Corollary 2.2]\label{full:lem:max_flower}
	Let $G$ be a graph and $x$ be a vertex of $G$ which is not incident with a loop. The maximum order of an $x$-flower is equal to the minimum of $|X| + \Sigma_{C \in \cC}\lfloor \frac{e(x, C)}{2} \rfloor$, where $X$ is a subset of vertices, $\cC$ is the set of components of $G - (X \cup \{x\})$, and $e(x, C)$ is the number of edges between $x$ and vertices in $C$.
	Moreover, the set $X$ can be computed in time polynomial in the size of $G$.
\end{lemma}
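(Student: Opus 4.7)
The plan is to establish this as a min--max duality, of a type classical in structural graph theory (a Gallai/Mader-flavoured theorem), in two directions plus an algorithmic step.

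\textbf{Step 1 (weak direction, max $\leq$ min).} Fix any $x$-flower $\{C_1,\dots,C_p\}$ and any pair $(X,\cC)$ where $\cC$ is the set of components of $G-(X\cup\{x\})$. For each cycle $C_i$, consider whether it contains a vertex of $X$. If yes, charge $C_i$ to one such vertex; since the cycles of the flower pairwise share only $x \notin X$, each vertex of $X$ is charged at most once, accounting for at most $|X|$ cycles. Otherwise $C_i$ avoids $X$, so after removing $x$ the rest of $C_i$ lies entirely in one component $C\in\cC$, and $C_i$ enters and leaves $C$ via two distinct edges at $x$. The edge-pairs used at $x$ by different such cycles are disjoint (again since the cycles share only $x$), so at most $\lfloor e(x,C)/2\rfloor$ cycles are charged to component $C$. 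Summing gives $p \leq |X|+\sum_{C\in\cC}\lfloor e(x,C)/2\rfloor$.

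\textbf{Step 2 (strong direction, min $\leq$ max).} This is the nontrivial direction. I would reduce to Mader's theorem on openly-vertex-disjoint $A$-paths. Construct $G'$ by subdividing every edge at $x$: replace each edge $xy$ by a new pendant vertex $y_{xy}$ attached to $y$, and then delete $x$ from $G'$. Let $A$ be the set of these new pendant vertices. Then an $x$-flower in $G$ of order $p$ corresponds bijectively to a collection of $p$ pairwise internally vertex-disjoint $A$-paths in $G'$ whose endpoints are paired so that each uses two distinct $A$-vertices. Mader's theorem expresses the maximum number of such $A$-paths as the minimum over covers $(X',\cC')$ of $|X'|+\sum_{C'\in\cC'}\lfloor |A\cap V(C')|/2\rfloor$. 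Translating the cover back across the subdivision (identifying $A$-vertices sitting on component $C'$ with edges from $x$ into the corresponding component $C$ in $G$) yields a pair $(X,\cC)$ in $G$ with exactly the claimed quantity $|X|+\sum_{C}\lfloor e(x,C)/2\rfloor$ equal to the flower number. One has to check two small bookkeeping points: (i) the assumption that $x$ has no loop is precisely what ensures each $A$-path uses two distinct elements of $A$; and (ii) that it is safe to take $\cC$ to be the full set of components of $G-(X\cup\{x\})$, since adding extra components $C$ with $e(x,C)\leq 1$ does not change the bound.

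\textbf{Step 3 (polynomial-time computation of the minimizer).} Mader's $A$-paths problem admits polynomial-time algorithms via Lov\'asz's matroid matching or via direct augmenting-path techniques (see, e.g., Schrijver). Running such an algorithm on $G'$ produces a maximum family of $A$-paths together with a tight cover $(X',\cC')$; translating $X'$ back to $G$ yields $X$ in time polynomial in $|G|$.

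\textbf{Main obstacle.} The substantive content is Step 2: the weak direction is a routine counting argument and the algorithmic step is standard once a min--max theorem is in hand, but the strong duality itself is the heart of the statement and is what Thomass\'e invokes from Gallai/Mader. A fully self-contained proof would require reproducing Mader's ear-decomposition argument (or an equivalent matroid-matching argument), which is where all the real work lies; in the paper itself it is cleaner simply to cite the result.
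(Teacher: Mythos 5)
The paper itself does not prove this lemma; it is cited verbatim from Thomass\'e~\cite{DBLP:journals/talg/Thomasse10}, so there is no in-paper proof to compare against. Your reconstruction is essentially the argument Thomass\'e gives in the cited source: the weak direction is the charging argument you describe, and the strong direction is obtained by subdividing the edges at $x$, deleting $x$, and invoking the Gallai min--max theorem on vertex-disjoint $T$-paths (Mader's $\mathcal S$-paths theorem, which you cite, is a strict generalization, but Gallai's theorem is the precise tool and is what Thomass\'e uses). The bijection between $x$-flowers and disjoint $A$-path families, the role of the no-loop hypothesis in keeping endpoints distinct, and the observation that components with $e(x,C)\le 1$ contribute nothing are all correct. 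The algorithmic step is also as you say: a maximum family of disjoint $T$-paths together with a tight cover can be found in polynomial time (Gallai's reduction to matching, or Lov\'asz's matroid-matching machinery), and the cover pulls back to the desired $X$. The one place to be slightly careful, which you implicitly handle, is parallel edges: a cycle $x$--$y$--$x$ of length two using two parallel edges corresponds to a genuine $A$-path of length two through $y$, so multigraphs pose no problem. Since your proposal only sketches the Gallai/Mader core rather than reproving it, it is a sound outline but not a self-contained proof; that is appropriate here, as the paper deliberately cites this result rather than reproving it.
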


\begin{lemma}[\cite{DBLP:journals/talg/Thomasse10}, Theorem 2.4]\label{full:lem:expansion}
	Let $G$ be a nonempty bipartite graph with bipartition $V = X \dot\cup Y$ with $|Y| \geq 2|X|$ and such that every vertex of $Y$ has at least one neighbor in $X$. Then there exist nonempty sets $X' \subseteq X$ and $Y' \subseteq Y$ such that $N(Y') = X'$ and such that every $Z \subseteq X'$ has at least $2|Z|$ neighbors in $Y'$. In addition, $X'$ and $Y'$ can be computed in time polynomial in the size of $G$.
\end{lemma}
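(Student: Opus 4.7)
The plan is to prove this by induction on $|X|$, using Hall's marriage theorem as the main engine. The base case $|X|=0$ is degenerate: combined with the assumption that every $Y$-vertex has a neighbor in $X$, it forces $Y=\emptyset$ and hence $G$ empty, contradicting nonemptiness, so we may assume $|X|\geq 1$.

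For the inductive step, the key reformulation is to form the auxiliary bipartite graph $G''$ by duplicating every $X$-vertex, so that the $X$-side becomes a multiset $X''$ of size $2|X|$ with the same neighborhoods into $Y$. Then Hall's condition for an $X''$-saturating matching into $Y$ in $G''$ is exactly the statement that for every $Z\subseteq X$, $|N_G(Z)|\geq 2|Z|$. If Hall's condition holds, then choosing $X':=X$ and $Y':=N_G(X)$ already gives the desired conclusion: Hall's condition applied to singletons shows that no vertex of $X$ is isolated, so $N(Y')=X'$, and the required $2|Z|$-expansion into $Y'$ is precisely what Hall's condition supplies.

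If Hall's condition fails, I would identify a nonempty $Z\subseteq X$ witnessing the violation, i.e., $|N_G(Z)|<2|Z|$; algorithmically this can be read off in polynomial time from a maximum bipartite matching together with the König min-cover obtained from augmenting-path analysis. Observe that $Z\neq X$: otherwise $|N_G(X)|=|Y|\geq 2|X|$ since every $Y$-vertex has a neighbor in $X$, contradicting the violation. I would then recurse on the induced bipartite graph on $(X\setminus Z,\ Y\setminus N_G(Z))$, checking that all preconditions are inherited: the set $X\setminus Z$ is still nonempty; the size condition is preserved by
\[
|Y\setminus N_G(Z)|\;\geq\;2|X|-|N_G(Z)|\;>\;2|X|-2|Z|\;=\;2|X\setminus Z|;
\]
and each surviving $Y$-vertex, having no neighbor in $Z$ but at least one neighbor in $X$, still has a neighbor in $X\setminus Z$. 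The recursion returns $X'\subseteq X\setminus Z$ and $Y'\subseteq Y\setminus N_G(Z)$, which are already pairwise disjoint from $Z$ and $N_G(Z)$ respectively and therefore satisfy the claim in the original graph as well.

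The main obstacle I anticipate is not correctness but rather giving a clean polynomial-time implementation. Each recursive call strictly shrinks $|X|$ by at least $|Z|\geq 1$, so the recursion depth is at most $|X|$, and each level performs one bipartite matching computation (e.g.\ via Hopcroft--Karp) plus a reachability search to extract a violating $Z$. A small subtlety to verify is that the extracted $Z$ is truly nonempty whenever Hall's condition fails; this follows because the minimum vertex cover in $G''$ has size strictly less than $|X''|=2|X|$ in that case, which forces the "$X$-side" of the König partition to be a nonempty $Z$ with $|N_G(Z)|<2|Z|$.
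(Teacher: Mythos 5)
The paper does not prove this lemma---it cites it as Thomass\'e's Theorem~2.4---so there is no in-paper argument to compare against, and your proof must stand on its own. It does. You reduce $2$-expansion to Hall's theorem via the standard vertex-duplication trick and then recursively peel off a Hall-violating set $Z$ together with $N(Z)$; all the bookkeeping checks out: $Z\neq X$ because $N(X)=Y$, the size bound $|Y\setminus N_G(Z)|>2|X\setminus Z|$ is preserved, every surviving $Y$-vertex still has a neighbor in $X\setminus Z$, and the pair $(X',Y')$ returned by the recursion is valid in $G$ itself because $Y'\subseteq Y\setminus N_G(Z)$ has no edges into $Z$, so $N_G(Y')=N_{G'}(Y')=X'$ and the expansion counts agree. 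For comparison, Thomass\'e's original argument is non-recursive: take a maximal deficient set $X_1\subseteq X$ with $|N(X_1)|\leq 2|X_1|-1$ (or $X_1=\emptyset$), set $X'=X\setminus X_1$ and $Y'=Y\setminus N(X_1)$, and read off both conclusions directly from the maximality of $X_1$. Your recursion assembles such an $X_1$ one violator at a time---one can check that the union of the successively removed $Z$'s is itself deficient in $G$ and that what remains satisfies Hall's condition---so the two proofs are the same idea in different clothing, with yours paying up to $|X|$ matching computations where one suffices, still comfortably polynomial. The one step you flag but leave terse, extracting a nonempty violator $Z$ via K\H{o}nig, does work: a minimum cover $C$ of $G''$ of size $<2|X|$ leaves $X''\setminus C$ nonempty, $N_{G''}(X''\setminus C)\subseteq C\cap Y$ has size $<|X''\setminus C|$, and letting $Z$ be the vertices of $X$ having at least one copy in $X''\setminus C$ gives $|N_G(Z)|=|N_{G''}(X''\setminus C)|<|X''\setminus C|\leq 2|Z|$; spelling this out would be worthwhile since you identify it as the subtle point.
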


\begin{lemma}\label{full:lem:fvs_deg}
	If for some $x \in B$ it holds that $|N_R(x)| \geq 5|B|$, then one can recognize applicability of one among \Crefrange{full:rr:fvs_deg01}{full:rr:fvs_gallai} in time polynomial in the size of $G$.
\end{lemma}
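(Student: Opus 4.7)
The plan is to adapt Thomass\'e's flower argument: use the Gallai-type min-max identity (\Cref{full:lem:max_flower}) to find a witness set $X$, and then the Hall-type expansion (\Cref{full:lem:expansion}) either to set up \Cref{full:rr:fvs_gallai} or to expose a low-degree vertex triggering one of the earlier rules.

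First, I would dispatch the easy cases. If $x$ carries a loop, then since $|N_R(x)| \geq 5|B| \geq 5 > 0$, $x$ has other incident edges and \Cref{full:rr:fvs_loop_B} applies. Otherwise, invoke \Cref{full:lem:max_flower} in polynomial time to obtain $X \subseteq V \setminus \{x\}$ whose associated component set $\cC$ of $G - (X \cup \{x\})$ minimizes $a + b$, where $a := |X|$ and $b := \sum_{C \in \cC} \lfloor e(x,C)/2 \rfloor$. If $a + b > |B|$, then the maximum $x$-flower has order exceeding $|B|$, so \Cref{full:rr:fvs_flower} applies; otherwise $a + b \leq |B|$.

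Now count neighbours of $x$ in $R$: at most $a$ lie in $X$; at most $3b$ lie in components with $e(x,C) \geq 2$, using $e \leq 3\lfloor e/2 \rfloor$ whenever $e \geq 2$; and the remaining ones, at most one per component, lie in components with $e(x,C) = 1$. Hence the number $N$ of components with $e(x,C) = 1$ satisfies $N \geq 5|B| - a - 3b \geq 2|B| + 2a$. Since at most $|B| - 1$ components meet $B \setminus \{x\}$, the set $\cD := \{C : e(x,C) = 1,\ V(C) \cap B = \emptyset\}$ has $|\cD| \geq |B| + 2a + 1$, and each such $C$ is a tree as a connected subgraph of the forest $G - B$. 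Split $\cD$ into $\cD_0 := \{C \in \cD : N_G(C) = \{x\}\}$ and $\cD \setminus \cD_0$. If $\cD_0 \neq \emptyset$, any $C \in \cD_0$ is a tree attached to the rest of $G$ by the single edge to $x$, and so contains a vertex of $G$-degree exactly one (the sole vertex of $C$ if $|V(C)| = 1$, otherwise a leaf of $C$ distinct from the unique $x$-neighbour), whence \Cref{full:rr:fvs_deg01} applies.

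Otherwise every $C \in \cD$ has a neighbour in $X$. The bipartite graph with parts $Y := \cD$ and $X$, where $C \in Y$ is joined to $v \in X$ iff $v \in N_G(C)$, satisfies $|Y| \geq |B| + 2a + 1 > 2|X|$ with no isolated vertex on the $Y$-side. \Cref{full:lem:expansion} then produces in polynomial time nonempty $X' \subseteq X$ and $Y' \subseteq Y$ with $N(Y') = X'$ and the $2|Z|$-neighbourhood expansion. Because $N_G(C) \subseteq X' \cup \{x\}$ for every $C \in Y'$, the family $Y'$ is actually a set of connected components of $G - (X' \cup \{x\})$, so \Cref{full:rr:fvs_gallai} applies with $X^* := X'$ and $\cC^* := Y'$. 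The delicate step, and the one I expect to be the main obstacle, is the counting argument converting $a + b \leq |B|$ into the surplus $|Y| > 2|X|$ required by the expansion lemma; once this is in place each remaining step is a standard polynomial-time subroutine (flower computation, bipartite expansion, or local degree inspection), so the whole detection runs in polynomial time.
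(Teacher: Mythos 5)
Your proposal is correct and follows essentially the same route as the paper's proof: dispatch the loop, invoke \Cref{full:lem:max_flower} to get the witness set $X$, branch on whether the maximum flower order exceeds $|B|$, use $|X| + \sum_C \lfloor e(x,C)/2 \rfloor \le |B|$ to count a surplus of $B$-disjoint components with a single edge to $x$, and feed these into \Cref{full:lem:expansion} to set up \Cref{full:rr:fvs_gallai}.

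The one place you deviate is minor but worth noting. The paper first assumes that \Cref{full:rr:fvs_deg01,full:rr:fvs_deg2} are inapplicable and uses this globally to deduce that every vertex of a candidate component has degree $\ge 3$, hence every leaf has an $X$-neighbour. You instead never invoke \Cref{full:rr:fvs_deg2} at all: you split the candidate components $\cD$ into those with an $X$-neighbour and those without ($\cD_0$), and observe directly that a component with $N_G(C) = \{x\}$ and a single $x$-edge must contain a degree-1 vertex, triggering \Cref{full:rr:fvs_deg01}. This is a slightly cleaner localisation of the same idea and is perfectly valid, since the lemma only asks to recognise that \emph{some} rule in the range applies. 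Your counting ($N \ge 5|B| - a - 3b$, then subtracting $|B|-1$ for $B$-meeting components) is also a small variant of the paper's ($c \ge 5|B| - |X| - e' - 2|B|$), and both correctly yield the required surplus $|\cD| > 2|X|$ for the expansion lemma.
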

\begin{proof}
	Recognizing \Crefrange{full:rr:fvs_deg01}{full:rr:fvs_loop_B} is simple to do, so assume that none of them can be applied.
	In particular, there is no loop at $x$, and thus we make use of \Cref{full:lem:max_flower} in order to find a set of vertices $X \subseteq V \setminus \{x\}$, such that the maximum order $p$ of an $x$-flower is equal to $|X| + \Sigma_{C \in \cC} \lfloor \frac{e(x, C)}{2} \rfloor$, where $\cC$ is the set of components of $G - (X \cup \{x\})$, and $e(x, C)$ is the number of edges between $x$ and $C$. If $p > |B|$, then apply \Cref{full:rr:fvs_flower}. Otherwise, let $\cC' \subseteq \cC$ be the set of  components of $\cC$ that have more than one edge to $x$. We denote by $e'$ the total number of edges between $x$ and the components of $\cC'$. Note that by choice of $X$, it holds that $|X| + e'/3$ is at most $p$. Since we could not apply \Cref{full:rr:fvs_flower}, there is no $x$-flower of order $|B|+1$, so we get $3|X| + e' \leq 3p \leq 3|B|$ and thus also $|X| \leq |B|$. Furthermore, there are at most $|B|$-many components in $\cC \setminus \cC'$ that contain $B$-vertices.
	
	Together, the number $c$ of components in $\cC$ that are disjoint from $B$ and linked to $x$ with exactly one edge is at least as high as the minimum degree $5|B|$ of $x$, minus $|X|$ for the neighbors of $x$ in $X$, minus $e'$, minus the number of components that contain $B$-vertices, and minus the number of $X$-vertices that are adjacent by double edges to $x$. Hence, it holds that $c \geq 5|B| - |X| - e' - 2|B| = 3|B| - |X| - e'$. Together with $3|B| \geq 3|X| + e'$ this gives us $c \geq 2|X|$.
	Let $Y$ be the set of these at least $2|X|$-many components in $\cC$, which induce trees, do not intersect $B$ and are incident with exactly one edge to $x$ each.
	
	Let $A$ be a bipartite graph with vertex bipartition $V(A) = X \dot\cup Y$. Add an edge between $v \in X$ and $C \in Y$ if and only if there exists an edge between $v$ and $C$ in $G$. Let us fix any $C \in Y$ and show that it is adjacent to some $v \in X$ in $A$. Since $C$ is a subgraph of $R$, and because we cannot apply \Cref{full:rr:fvs_deg01} or \Cref{full:rr:fvs_deg2}, every vertex in $C$ has degree at least $3$ in $G$, including the leaves of $C$. Since every leaf of $C$ is incident with at most one edge to $x$, and has at most one edge inside of $C$, it needs to be also adjacent to $X$, and thus $C$ is adjacent to some vertex $v \in X$ in $A$. Together with the fact that $|Y| \geq 2|X|$, we can apply \Cref{full:lem:expansion} and find nonempty sets $X' \subseteq X$ and $Y' \subseteq Y$, such that $N_A(Y') = X'$, and such that every subset $Z \subseteq X'$ has at least $2|Z|$ neighbors in $Y'$. Summing up, we have that every element $C$ of $Y'$ is a connected component of the graph $G \setminus (X' \cup \{x\})$, and $C$ is adjacent to $x$ through exactly one edge, $C$ does not intersect $B$ and induces a tree on $G$, and that every subset $Z$ of $X'$ is adjacent to at least $2|Z|$ components in $Y'$. Hence, we can apply \Cref{full:rr:fvs_gallai}.
\end{proof}

\begin{theorem}\label{full:thm:PBK_FVS_fvs}
	The parameterized problem \feedbackvertexsetfvs admits a polynomial boundaried kernelization with at most $\Oh((|B|+k)^2)$ vertices.
\end{theorem}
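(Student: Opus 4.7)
The plan is to mirror the earlier results in this section. First, apply \Cref{full:lem:Bismod} to reduce to the restricted setting where the boundary is itself a feedback vertex set: given $(G_B, X)$, enlarge the boundary to $B' := B \cup X$ of size at most $|B|+k$, so that $G - B'$ is a forest. We will build a boundaried kernelization of size $\Oh(|B'|^2)$ for this restricted variant, which by \Cref{full:lem:Bismod} yields a kernelization of size $\Oh((|B|+k)^2)$ in general.

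Next, exhaustively apply \Crefrange{full:rr:fvs_deg01}{full:rr:fvs_gallai}. Each rule is gluing safe, and standard arguments following Thomassé \cite{DBLP:journals/talg/Thomasse10} show polynomial-time termination: each application strictly shrinks the graph, or replaces a rich structure by a loop or by double edges at a boundary vertex that cannot be undone. Observe that $G[R]$ remains a forest throughout, since \Cref{full:rr:fvs_deg2} simply smooths a degree-two vertex (which preserves acyclicity of $G[R]$) and no other rule introduces new edges inside $R$. Let $G'_{B'}$ be the resulting graph and $\Delta$ the accumulated offset.

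Once no rule applies, two facts hold about $R := V(G') \setminus B'$. First, every $v \in R$ satisfies $\deg_{G'}(v) \geq 3$, since otherwise \Cref{full:rr:fvs_deg01} or \Cref{full:rr:fvs_deg2} would apply. Second, by the contrapositive of \Cref{full:lem:fvs_deg}, every $x \in B'$ has $|N_R(x)| < 5|B'|$, hence the number of $R$--$B'$ edges is strictly less than $5|B'|^2$.

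Finally, bound $|R|$ by a counting argument. Partition $R$ by degree in the forest $G'[R]$ into sets $I, L, D_2, D_{\geq 3}$ of $G'[R]$-degrees $0,1,2,\geq 3$ respectively. Since every $v \in R$ has $\deg_{G'}(v) \geq 3$, each vertex in $I, L, D_2$ has at least $3, 2, 1$ neighbors in $B'$, so the number of $R$--$B'$ edges is at least $3|I| + 2|L| + |D_2| \leq 5|B'|^2$. A handshake argument for the forest $G'[R]$ yields $|D_{\geq 3}| \leq |L|$. Therefore
\[
|R| = |I| + |L| + |D_2| + |D_{\geq 3}| \leq |I| + 2|L| + |D_2| \leq 3|I| + 2|L| + |D_2| \leq 5|B'|^2,
\]
so $|V(G')| \leq |B'| + 5|B'|^2 = \Oh((|B|+k)^2)$. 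Outputting $G'_{B'}$, the feedback vertex set $B'$, and the offset $\Delta$ completes the kernelization. The main obstacle is the polynomial-time termination of the exhaustive rule application and the correctness of \Cref{full:rr:fvs_gallai} in the boundaried setting, but both are closely modeled on Thomassé's analysis: the proviso $V(C) \cap B = \emptyset$ in the statement of \Cref{full:rr:fvs_gallai} is precisely what guarantees that the flower structure (and hence the soundness of the rule) persists under gluing.
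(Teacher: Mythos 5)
Your proof is essentially the paper's own argument: reduce via \Cref{full:lem:Bismod} to the case where the boundary is the forest modulator, exhaustively apply \Crefrange{full:rr:fvs_deg01}{full:rr:fvs_gallai}, use the contrapositive of \Cref{full:lem:fvs_deg} to bound the number of $B'$--$R$ edges by $\Oh(|B'|^2)$, observe every $R$-vertex has total degree at least three, and then bound $|R|$ by noting that leaves, isolated, and degree-2 vertices of the forest $G'[R]$ must each have $B'$-neighbors while the branching vertices are at most the number of leaves. The only substantive difference is that the paper justifies polynomial termination with the explicit potential $n+s$ (vertices plus simple edges), which cleanly covers \Cref{full:rr:fvs_flower,full:rr:fvs_gallai} as well, whereas your appeal to ``standard arguments following Thomassé'' is a bit looser but gestures at the same thing; your end-game counting is also somewhat more spelled out than the paper's one-liner, but the underlying inequalities coincide.
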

\begin{proof}
	Assume by \Cref{full:lem:Bismod} that we are given a boundaried graph $G_B$ such that $G-B$ is a forest.
	Note that each among \Crefrange{full:rr:fvs_deg01}{full:rr:fvs_gallai} decreases the value $n+s$, where $n$ and $s$ are the number of vertices and simple edges of $G$, respectively. This gives a bound linear on $|G|$ for the number of times we apply these reduction rules. 
	
	Let $G'_B$ be the result after exhaustively applying \Crefrange{full:rr:fvs_deg01}{full:rr:fvs_flower} and \Cref{full:lem:fvs_deg}, and let $R' = V(G') \setminus B$. Observe, that there are less than $5|B|^2$ many edges between $B$ and $|R'|$. Since we cannot apply \Cref{full:rr:fvs_deg01} and \Cref{full:rr:fvs_deg2}, every vertex in $R'$ has degree at least three in $G'$. Hence, each leaf in $R'$, as well as each non-branching inner vertex of $R'$, has at least one $B$-neighbor, which means that there are less than $5|B|^2$-many leaves and non-branching vertices in $R'$. Further, the number of branching vertices in a forest is upper bounded by the number of leaves. Together, this yields a size bound of at most $\Oh(|B|^2)$ for $R'$, and thus for the size of $V(G') = B \cup R'$. The output is thus $G'_B$ with $B$ being a modulator to forest, and $\Delta = 0$.
\end{proof}

\subsection{Long Cycle[vc] and related problems}
Our next positive results regarding polynomial boundaried kernelization are for problems that are not formulated in terms of vertex-deletion to some graph class $\cC$, namely \longcyclevc, \longpathvc, \hamcyclevc, \hampathvc, \hamcycledeg, \hampathdeg. These problems admit polynomial kernelization, as was shown by Bodlaender et al.~\cite{DBLP:journals/tcs/BodlaenderJK13}. Mainly, our argumentation will work very similarly to theirs, however, we are not able to easily reduce \longpathvc to \longcyclevc by simply adding a global vertex which is adjacent to all original vertices, as we can only  modify $G_B$, and thus such a vertex would not be adjacent to all vertices of $H$ in the glued instance $G_B \oplus H_B$. Still, a very similar argumentation to the kernelization for \longcyclevc is able to help us out.

We first introduce a result of Bodlaender et al.\ which will be our main tool.

\begin{lemma}[\cite{DBLP:journals/tcs/BodlaenderJK13}, Theorem 2]\label{full:lem:lc_matching}
	Let $G = (X \dot\cup Y, E)$ be a bipartite graph. Let $M \subseteq E(G)$ be a maximum matching in $G$. Let $X_M \subseteq X$ be the set of vertices in $X$ that are endpoints of an edge in $M$. Then, for each $Y' \subseteq Y$, if there exists a matching $M'$ in $G$ that saturates $Y'$, then there exists a matching $M''$ in $G[X_M \cup Y]$ that saturates $Y'$.
\end{lemma}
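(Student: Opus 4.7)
The plan is to prove this by a symmetric-difference/augmenting-path argument. Set $D := M \triangle M'$, whose connected components are alternating paths and even cycles in $M \cup M'$. Because $M$ is maximum, no component of $D$ is an $M$-augmenting path, i.e., an alternating path whose two endpoints are both unmatched by $M$ (equivalently, whose boundary edges both lie in $M'$).

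Call an edge $e = \{x, y\} \in M'$ \emph{bad} if $x \in X \setminus X_M$; these are exactly the $M'$-edges that are not in $G[X_M \cup Y]$. For such an $e$, the vertex $x$ is unmatched by $M$, so $e$ is the unique $D$-edge at $x$ and $x$ is an endpoint of its $D$-component $P_x$. By maximality of $M$, the other endpoint $z$ of $P_x$ must be matched by $M$; a parity count along the alternation $M', M, M', \ldots$ on $P_x$ then shows $P_x$ has an even number of edges, so $z$ lies on the same side of the bipartition as $x$, namely $z \in X$, giving $z \in X_M$. Similarly, every \emph{interior} $X$-vertex on $P_x$ is incident to an $M$-edge on $P_x$ and hence lies in $X_M$. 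Note also that each component of $D$ contains at most one bad endpoint, since two bad endpoints would yield an $M$-augmenting path.

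I would then construct $M''$ by starting from $M'$ and, for every bad edge, swapping along the corresponding $D$-component $P_x$: delete all $M'$-edges of $P_x$ and insert all $M$-edges of $P_x$. The bad components are pairwise vertex-disjoint (at most one bad endpoint per component), so these swaps do not interfere. After processing, every surviving $M'$-edge is non-bad, and every inserted $M$-edge has both endpoints in $X_M \cup Y$ by the observation above, so $M'' \subseteq E(G[X_M \cup Y])$.

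The remaining step, and the part that needs the most care, is to verify that $M''$ still saturates $Y'$. Fix $y \in Y'$ and let $e = \{x, y\} \in M'$ be its $M'$-match. If $e$ is not bad, it survives in $M''$ and we are done. If $e$ is bad, then $y$ sits right next to the bad endpoint $x$ on $P_x$; since $|E(P_x)|$ is even and at least $2$, the vertex $y$ is interior to $P_x$ and the next edge along $P_x$ is an $M$-edge incident to $y$, which is inserted into $M''$ during the swap. Hence $y$ remains saturated. The main obstacle I expect is exactly this case analysis on $D$-components together with the parity argument forcing $z \in X_M$; once those are in place, bookkeeping gives the lemma.
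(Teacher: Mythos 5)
This lemma is cited by the paper (attributed to Bodlaender, Jansen, and Kratsch), so there is no in-paper proof to compare against; I assess your argument on its own terms. Working with the alternating components of $D = M \triangle M'$ and swapping along the components containing a bad endpoint is the right strategy, and most of the bookkeeping is sound: each bad endpoint is an $M$-exposed vertex and hence the terminal vertex of a distinct path component of $D$; the other endpoint of such a component is $M$-matched and, by your parity argument, lies in $X$ (so in $X_M$); all interior $X$-vertices on the component lie in $X_M$; and every $M$-edge the swap inserts lies inside $G[X_M \cup Y]$.

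The gap is in the final saturation check, where you assert that if $e=\{x,y\}\in M'$ is not bad then it survives into $M''$. That implication is false: your construction deletes \emph{all} $M'$-edges of a bad component $P_x$, not only the terminal bad one. For a bad component of length at least $4$, say $P_x=(v_0,v_1,v_2,v_3,v_4)$ with $v_0\notin X_M$, the $M'$-edge $\{v_2,v_3\}$ is non-bad (since $v_2$ is an interior vertex, hence $M$-matched and in $X_M$), yet it is removed by the swap. The lemma still holds, but the case split should be on whether $e$ lies in a bad component, not on whether $e$ itself is bad. The repair is short: if $e$ lies in a bad component $P_x$, then since both endpoints of $P_x$ are $X$-vertices, $y\in Y$ must be an interior vertex of $P_x$ and therefore is incident to an $M$-edge of $P_x$, which the swap inserts into $M''$; if $e$ lies in no bad component (e.g.\ because $e\in M\cap M'$ so $e\notin D$, or because the $D$-component containing $e$ has no bad endpoint), then $e$ survives. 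With that fix the argument is complete.
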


For $G_B$, construct the bipartite graph $A$, with one side of the vertex set of $A$ being $R$, and the other consisting of pairs of distinct vertices in $B$. There is an edge between $v \in R$ and $\{p, q\} \subseteq B$, if and only if $v$ is adjacent to both $p$ and $q$. Let $M_A$ be a maximum matching in $A$ and let $J_A \subseteq R$ be the set of those vertices that are matched by $M_A$.
If $G$ contains cycles of size $4$ with exactly two vertices in $R$, then fix one such cycle $C'$.
If such $C'$ exists, then let $K$ contain the two $R$-vertices of $C'$, and $B_K$ the two $B$-vertices of $C'$. If no such $C'$ exists, then let $K = B_K = \emptyset$.

\begin{redrule}[Analogue of {\cite[Reduction Rule 1]{DBLP:journals/tcs/BodlaenderJK13}}]\label{full:rr:lc}
	Delete all vertices in $R \setminus (J_A \cup K)$ from $G_B$.
\end{redrule}
\begin{proof}[Proof of gluing safeness]
	Let $G'_B$ be the resulting boundaried graph. Let $H_B$ be any boundaried graph to be glued to $G_B$, resp., to $G'_B$. Since $G'_B \oplus H_B$ is a subgraph of $G_B \oplus H_B$, it clearly holds that $\OPT_\LC(G_B \oplus H_B) \geq \OPT_\LC(G'_B \oplus H_B)$. Let us also show that the reverse is true.
	
	Let $C$ be a cycle in $G_B \oplus H_B$. Clearly, whenever a vertex $v \in V(R)$ is visited by $C$, it holds that $C$ visited some vertices in $B$ before and after $v$. Let $v_1, \dots, v_r$ be all the vertices of $R$ contained in $C$, and let $p_i, q_i$ be the predecessor and successor of $v_i$ in $C$, respectively. 
	
	Let us first inspect the case that $C$ has length $4$ and $r = 2$, i.e., $C = (p_1, v_1, q_1, v_2, p_1)$.
	Since $C$ witnesses the existence of a size-$4$ cycle with exactly two vertices in $R$, it holds true that exists a cycle $C'$ of size $4$, and since its vertices remain in $G'_B$, we know that $C'$ exists in $G'_B \oplus H_B$. 
	
	Now we can assume that there are no $i,j \in [r]$ with $i \neq j$ and $\{p_i, q_i\} = \{p_j, q_j\}$. 
	Let $W =  \{\{p_i, q_i\} \mid i \in [r]\}$. Note that there exists a $W$-saturating matching in $A$, namely matching each $\{p_i, q_i\}$ with $v_i$. By \Cref{full:lem:lc_matching}, there also exists a $W$-saturating matching $M'$ in $A[W \cup J_A]$.
	
	For every $i \in [r]$, let $v'_i$ denote the vertex to which $\{p_i, q_i\}$ is matched by $M'$. It is easy to see that we may replace each $v_i$ on $C$ by $v'_i$ since $v'_i$ is adjacent to $p_i$ and $q_i$ in $G$, obtaining a cycle $C^*$ which intersects $I$ only at $J_A$. As all pairs $\{p_i, q_i\}$ are different, no vertex $v'_i$ is required twice. Hence $C^*$ is also a cycle of $G'_B \oplus H_B$ of same size as $C$.
\end{proof}

Constructing $A$ and computing $M_A$, as well as applying \Cref{full:rr:lc} clearly can all be done in polynomial time. Moreover, $J_A$ has size at most $|B|^2$, and thus after application of \Cref{full:rr:lc}, $R$ contains at most $|B|^2 + 2 \in \Oh(|B|^2)$ vertices. Together with \Cref{full:lem:Bismod} we get the following result.

\begin{theorem}
	The parameterized problem \longcyclevc admits a polynomial boundaried kernelization with at most $\Oh((|B|+k)^2)$ vertices.
\end{theorem}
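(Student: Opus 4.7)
The plan is to follow the template already laid out by the preceding Reduction Rule and the bipartite-matching lemma, combined with the reduction to the case of a modulator boundary from \Cref{full:lem:Bismod}. First, since \longcycle is a pure graph problem and a vertex cover is a hereditary-modulator (to the class of independent sets), \Cref{full:lem:Bismod} lets me assume without loss of generality that the given vertex cover $X$ is merged into the boundary, so that I work with a new boundary $B' \supseteq B$ of size at most $|B|+k$ for which $G-B'$ is an independent set, i.e., $R' := V(G)\setminus B'$ is independent.

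Next, I would construct the auxiliary bipartite graph $A$ exactly as in the preamble to the theorem: one side is $R'$ and the other side is the set of (unordered) pairs of distinct vertices in $B'$, with $v\in R'$ adjacent to $\{p,q\}$ iff $\{v,p\},\{v,q\}\in E(G)$. I would compute a maximum matching $M_A$ in $A$ in polynomial time and let $J_A\subseteq R'$ be the set of $R'$-vertices saturated by $M_A$; separately I would check in polynomial time whether $G$ contains a $4$-cycle with exactly two vertices in $R'$, and, if so, fix one and let $K$ be its two $R'$-vertices (otherwise $K=\emptyset$). Then I apply \Cref{full:rr:lc} to obtain $G'_{B'}$ by deleting every vertex of $R'\setminus(J_A\cup K)$. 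Gluing safeness with offset $\Delta=0$ is already established in the proof of \Cref{full:rr:lc}.

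For the size analysis, the pair-side of $A$ has exactly $\binom{|B'|}{2}$ vertices, so any matching saturates at most $\binom{|B'|}{2}\le |B'|^2$ vertices of $R'$, giving $|J_A|\le |B'|^2$, and $|K|\le 2$. Hence the new $R'$ has at most $|B'|^2+2$ vertices and $|V(G')|\le |B'|+|B'|^2+2 = \Oh((|B|+k)^2)$. To conclude, since the kernelization was applied with boundary $B'\supseteq B$ and yielded $G_{B'}\equiv_{\LC,B'}G'_{B'}$ with offset $0$, \Cref{full:lem:BD} transfers this to $G_B\equiv_{\LC,B}G'_B$ with the same offset $0$, so outputting $G'_B$ together with the vertex cover $B'$ and $\Delta=0$ is a valid polynomial boundaried kernelization of the claimed size.

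I do not anticipate any real obstacle: all the non-trivial work (the correctness of the reduction rule via the swap argument based on \Cref{full:lem:lc_matching}, and the hereditary-modulator reduction) has already been established. The only subtlety to be careful about is that the $K$-exception is genuinely needed, because for cycles of length exactly $4$ through two $R'$-vertices the matching-swap argument can produce a cycle that reuses a vertex; keeping one such witness $4$-cycle suffices to preserve the existence of cycles of every achievable length in the glued graph, and the $+2$ it contributes to the vertex bound is absorbed into the $\Oh((|B|+k)^2)$ estimate.
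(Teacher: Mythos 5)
Your proposal is correct and follows essentially the same route as the paper: apply \Cref{full:lem:Bismod} to reduce to the case $B'\supseteq B$ with $G-B'$ independent and $|B'|\le|B|+k$, build the auxiliary bipartite graph $A$ with $R'$ on one side and unordered $B'$-pairs on the other, apply \Cref{full:rr:lc} using a maximum matching $M_A$ (whose correctness via the swap argument of \Cref{full:lem:lc_matching} and the $4$-cycle exception $K$ is already established), and bound $|J_A|\le\binom{|B'|}{2}$, yielding $\Oh((|B|+k)^2)$ vertices total. Your closing remark about why $K$ is needed matches the case analysis in the reduction rule's safeness proof; the invocation of \Cref{full:lem:BD} at the end is slightly redundant since \Cref{full:lem:Bismod} already handles the boundary transfer, but it is not wrong.
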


For \parameterizedproblem{Long Path}{vc} we construct a different auxiliary bipartite graph, let it be denoted by $D$. Let the vertex set of $D$ consist of $R$ on one side and let the other side of $V(D)$ be the union of $B$ and the set of pairs of distinct vertices in $B$. We compute a maximum matching $M_D$ in $D$ and denote by $J_D \subseteq I$ the set  of those vertices that are matched by $M_D$.

\begin{redrule}\label{full:rr:lp}
	Delete all vertices in $R \setminus J_D$ from $G_B$.
\end{redrule}
\begin{proof}[Proof of gluing safeness]
	Let $G'_B$ be the resulting boundaried graph. Let $H_B$ be any boundaried graph to be glued to $G_B$, resp., to $G'_B$. Since $G'_B \oplus H_B$ is a subgraph of $G_B \oplus H_B$, it clearly holds that $\OPT_\LP(G_B \oplus H_B) \geq \OPT_\LP(G'_B \oplus H_B)$. Let us also show that the reverse is true.
	
	Let $P$ be a path in $G_B \oplus H_B$. Let $R'$ be the set of $R$-vertices visited by $P$. We denote the vertices of $R'$ in the order they appear in $P$, i.e., $R' = \{v_1, \dots, v_r\}$, and for every $i, j \in [r]$ with $i < j$ it holds that $P$ visits $v_i$ before $v_j$.
	
	Assume that every $R'$-vertex $v_i$ has a predecessor $p_i$ and a successor $q_i$ in $P$, both of which are obviously contained in $B$. Let $W =  \{\{p_i, q_i\} \mid i \in [r]\}$. Note that there exists a $W$-saturating matching in $D$, namely matching each $\{p_i, q_i\}$ with $v_i$. By \Cref{full:lem:lc_matching}, there also exists a $W$-saturating matching $M'$ in $D[W \cup J_D]$. For every $i \in [r]$, let $v'_i$ denote the vertex to which $\{p_i, q_i\}$ is matched by $M'$. It is easy to see that we may replace each $v_i$ on $P$ by $v'_i$ since $v'_i$ is adjacent to $p_i$ and $q_i$ in $G$, obtaining a path $P'$ which intersects $I$ only at $J_D$. As all pairs $\{p_i, q_i\}$ are different, no vertex $v'_i$ is required twice. Hence $P'$ is also a path of $G'_B \oplus H_B$ of same size as $P$.
	
	Now assume that $P$ starts with $v_1$ and ends with $v_r$, i.e., $v_1$ has only a successor $q_1$ in $P$, and $v_r$ only a predecessor $p_r$. Then let $W = \{q_1\} \cup \{p_r\} \cup \{\{p_i, q_i\} \mid i \in \{2, \dots, r-1\}\}$. Again, $W$ has a saturating matching in $D$, namely matching $v_1$ with $q_1$, and $v_r$ with $p_r$, and for each $i \in \{2, \dots, r-1\}$, matching $\{p_i, q_i\}$ with $v_i$. Using \Cref{full:lem:lc_matching}, we get a $W$-saturating matching $M'$ in $D[W \cup J_D]$. Denote by $v'_1$ the vertex to which $q_1$ is matched by $M'$, by $v'_r$ the vertex to which $p_r$ is matched by $M'$, and for each $i \in \{2, \dots, r-1\}$, denote by $v'_i$ the vertex to which $\{p_i, q_i\}$ is matched by $M'$. Again, it is easy to see that we may replace each $v_i$ on $P$ by $v'_i$ and obtain a path $P'$, which intersects $I$ only at $J_D$ and is thus also a path of $G' \oplus H_B$.
	
	There are two other cases: (i) $P$ starts with $v_1$ but does not end with $v_r$; and (ii) $P$ does not start with $v_1$ but ends with $v_r$. It is straightforward to handle these two cases in a similar manner as the first two cases were handled.
\end{proof}

Again, constructing $D$ and computing $M_D$, as well as applying \Cref{full:rr:lp} clearly can all be done in polynomial time. Moreover, $J_D$ has size at most $|B|^2 + |B|$, and thus after application of \Cref{full:rr:lp}, $R$ contains at most $|B|^2 + |B| \in \Oh(|B|^2)$ vertices. Together with \Cref{full:lem:Bismod} we get the following result.

\begin{theorem}
	The parameterized problem \longpathvc admits a polynomial boundaried kernelization with at most $\Oh((|B|+k)^2)$ vertices.
\end{theorem}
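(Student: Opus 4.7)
The result is essentially a direct assembly of the machinery already developed in this subsection, so the proof should be short. The plan is, first, to invoke Lemma~\ref{full:lem:Bismod} to reduce to the restricted setting where the boundary itself is the vertex cover: given input $(G_B, X)$ with $X$ a vertex cover of $G$ of size $k$, absorb $X$ into the boundary and work with $G_{B \cup X}$ having modulator $B \cup X$. This inflates the boundary to at most $|B|+k$ vertices, and since $\Cindependent$ is hereditary, $R := V(G) \setminus (B \cup X)$ remains an independent set.

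Second, I would run Reduction Rule~\ref{full:rr:lp} on $G_{B \cup X}$: build the bipartite auxiliary graph $D$ with parts $R$ and $(B \cup X) \cup \binom{B \cup X}{2}$, compute a maximum matching $M_D$ (e.g.\ via Hopcroft--Karp in polynomial time), and delete every vertex of $R$ not saturated by $M_D$. Gluing safeness was already established, so the resulting boundaried graph $G'_{B \cup X}$ is $\equiv_{\LP, B \cup X}$-equivalent to $G_{B \cup X}$. Because no vertex is ever contracted into or forced out of a hypothetical solution---only redundant $R$-vertices are discarded---the offset is $\Delta = 0$, whether one treats \longpath as a decision or as an optimization problem. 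By Lemma~\ref{full:lem:BD}, $\equiv_{\LP,B\cup X}$-equivalence implies $\equiv_{\LP,B}$-equivalence, so we may return $G'_B$ with the vertex cover $B \cup X$.

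Third, for the size bound: after reduction, $R \subseteq J_D$, and $|J_D| \leq |B \cup X| + \binom{|B \cup X|}{2} = \Oh((|B|+k)^2)$, giving $|V(G')| = |B \cup X| + |R| = \Oh((|B|+k)^2)$ vertices in total. The only conceptual subtlety I anticipate is the treatment of path endpoints, since a terminal $R$-vertex of a path has only one $B$-neighbor on that path rather than two. This is precisely why $D$ carries both singletons and pairs of boundary vertices on its right side; the gluing-safeness proof already splits into cases on whether the path starts or ends in $R$ and handles this uniformly through Lemma~\ref{full:lem:lc_matching}, so no additional obstacle remains.
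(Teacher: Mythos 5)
Your proposal is correct and matches the paper's argument: apply \Cref{full:lem:Bismod} to absorb the vertex cover into the boundary, run \Cref{full:rr:lp} (whose gluing safeness was already proved via \Cref{full:lem:lc_matching}), note $\Delta=0$ since only $R$-vertices are deleted and $\OPT$ is preserved, transfer the equivalence back to $B$ via \Cref{full:lem:BD}, and bound $|J_D|\leq |B\cup X|+\binom{|B\cup X|}{2}=\Oh((|B|+k)^2)$. Your remark on why $D$ must carry both singletons and pairs (to handle path endpoints in $R$) correctly identifies the one point where \longpath diverges from \longcycle.
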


\begin{theorem}\label{full:thm:HC/P_vc}
	The parameterized problems \hamcyclevc, \hampathvc each admit a polynomial boundaried kernelization with at most $\Oh(|B| + k)$ vertices.
\end{theorem}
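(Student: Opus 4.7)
The plan is to exploit the fact that, once we have applied \Cref{full:lem:Bismod} and may assume the boundary itself is a vertex cover, every vertex of $R = V(G) \setminus B$ is an independent set vertex whose entire neighborhood lies inside $B$. I would first set $B' := B \cup X$ with $|B'| \leq |B| + k$, and observe that this property of $R$ is preserved under gluing: for every $v \in R$ and every $H_{B'}$, the neighbors of $v$ in $G_{B'} \oplus H_{B'}$ still all lie in $B'$, since $R$ is disjoint from $V(H)$.

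The key structural step is a simple degree/edge counting argument. In any Hamiltonian cycle of $G_{B'} \oplus H_{B'}$, each vertex of $R$ contributes exactly two edges to the cycle, both of which go to $B'$. Hence the number of cycle edges with one endpoint in $R$ and one in $B'$ equals $2|R|$. On the other hand, each vertex of $B'$ is incident with exactly two cycle edges in total, so the same quantity is at most $2|B'|$. This yields $|R| \leq |B'|$ whenever a Hamiltonian cycle exists in any glued graph. For Hamiltonian paths the argument is almost identical, except that up to two path endpoints may lie in $R$ and contribute only one edge each, giving $2|R| - 2 \leq 2|B'|$, i.e., $|R| \leq |B'| + 1$.

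From here the boundaried kernelization is immediate. For \hamcyclevc, if $|R| \leq |B'|$ we simply output $G_{B'}$ itself with $|V(G)| \leq 2|B'| + 1 \in \Oh(|B|+k)$ vertices and offset $\Delta = 0$; otherwise we know that $G_{B'} \oplus H_{B'}$ has no Hamiltonian cycle for any $H_{B'}$, so we output the graph consisting of $B'$ together with a single extra vertex of degree zero outside $B'$. Since an isolated vertex in the output stays isolated after gluing, the glued graph never admits a Hamiltonian cycle, as required. For \hampathvc we do the analogous case split at threshold $|B'| + 1$; in the negative branch we output $B'$ together with \emph{two} isolated vertices outside $B'$, so that the glued graph always contains two isolated vertices and therefore cannot contain a Hamiltonian path. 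In both cases the output satisfies $\equiv_{\Pi, B'}$-equivalence with $G_{B'}$, has $\Oh(|B| + k)$ vertices, and comes with the obvious vertex cover $B'$, so \Cref{full:lem:Bismod} yields the claimed polynomial boundaried kernelization.

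The main subtlety, rather than the counting, is being careful about the degenerate cases of the ``no''-gadget: for Hamiltonian cycle a single isolated vertex suffices because no cycle contains an isolated vertex; but for Hamiltonian path a lone isolated vertex is itself a Hamiltonian path when $B' = \emptyset$ and $H$ is empty, so two isolated vertices are needed to guarantee a \textbf{no}-answer regardless of what $H_{B'}$ is attached.
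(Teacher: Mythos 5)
Your proof is correct and follows essentially the same high-level plan as the paper (apply \Cref{full:lem:Bismod}, argue that if $|R|$ is too large then every glued instance is a NO-instance, and otherwise keep the graph as is), but the details differ in two small and instructive ways. The paper's counting argument looks at the $|R|-1$ ``gaps'' between consecutive $R$-vertices along the cycle/path and notes that each gap must pass through a distinct $B$-vertex, yielding $|R| \leq |B|+1$ uniformly for both problems; your argument instead double-counts cycle/path edges by degree at $R$ and at $B'$, which is arguably cleaner, gives the slightly tighter threshold $|R| \leq |B'|$ for Hamiltonian Cycle, and makes it transparent that edges from $B'$ into $H$ or within $B'$ are harmlessly absorbed into the budget $2|B'|$. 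More substantively, you are explicit about what the kernelization outputs in the negative branch — a small constant-size gadget with one (for cycles) or two (for paths) isolated non-boundary vertices — and you correctly flag the degenerate $B'=\emptyset$, $H$-empty case that forces two isolated vertices for Hamiltonian Path. The paper's write-up stops at ``we are done'' / ``there is no Hamiltonian cycle or path'' without spelling out the NO-gadget, so your version actually closes a small presentational gap. One tiny slip: after \Cref{full:lem:Bismod} the full-size branch gives $|V(G)| = |R| + |B'| \leq 2|B'|$ (for cycles) or $2|B'|+1$ (for paths), not uniformly $2|B'|+1$, but this has no effect on the $\Oh(|B|+k)$ claim.
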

\begin{proof}
	Assume that $G-B$ is an independent set and apply \Cref{full:lem:Bismod} to lift this assumption and get a polynomial boundaried kernelization with $\Oh(|B|+k)$ vertices.
	
	If $|V(G)| \leq 2|B|+1$, we are done. Else for any $H_B$ there is no Hamiltonian cycle and no Hamiltonian path on $G_B \oplus H_B$. Assume for contradiction that such a cycle or path exists, and denote it by $P$. In particular, $P$ should visit all vertices in $R$. Let us be given the ordering of the $R$-vertices in which they are visited by $P$, i.e., let $R = \{v_1, \dots, v_r\}$ with $r=|B|$. If $P$ is a cycle, then simply choose any of the $R$-vertices as the first one. By definition, for each $i \in [r-1]$ there needs to be a path from $v_i$ to $v_{i+1}$, and since $R$ is an independent set, this path needs to visit at least one $B$-vertex. As the inner vertices of these paths need to be disjoint for any $i_1, i_2 \in [r-1]$ with $i_1 \neq i_2$, it follows that $P$ needs to visit at least $|R|-1$ different $B$-vertices. By assumption, it holds that $|R| > |B|+1$, i.e., $|B| < |R|-1$, and as a result, our assumed path/cycle $P$ cannot exist.
\end{proof}

\begin{theorem}
	Each among the parameterized problems \parameterizedproblem{Hamiltonian Cycle}{$\#v, \text{deg}(v) \neq 2$} and \parameterizedproblem{Hamiltonian Path}{$\#v, \text{deg}(v) \neq 2$} admits a polynomial boundaried kernelization with at most $\Oh(|B| + k)$ vertices.
\end{theorem}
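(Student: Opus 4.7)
The plan is to mimic the argument for \hamcyclevc and \hampathvc but exploit the degree-$2$ structure of internal vertices in place of the independent-set structure used there. The key observation is that for any $v\in R$ one has $\deg_{G_B\oplus H_B}(v)=\deg_G(v)$ independently of $H_B$, so every $v\in R$ with $\deg_G(v)=2$ must use both of its incident edges in any Hamiltonian cycle of $G_B\oplus H_B$ (and both of them, or exactly one if $v$ happens to be a path endpoint, in any Hamiltonian path). Let $X\subseteq V(G)$ be the given set of vertices with $\deg_G\ne 2$, so $|X|\le k$. The degree-$2$ vertices of $R$ then partition into \emph{chains}, i.e., maximal paths in $G$ whose internal vertices are all degree-$2$ vertices of $R$, and each such chain must appear as a contiguous subpath of any Hamiltonian solution of $G_B\oplus H_B$. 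The two \emph{anchors} of a chain (the vertices adjacent to its endpoints but not themselves in the chain) always lie in $B\cup X$.

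First I compute all maximal chains in polynomial time and check whether any vertex $v\in V(G)$ is an anchor of strictly more than $2$ chains (for \hamcycledeg) or more than $4$ chains (for \hampathdeg). Each chain at $v$ is forced to consume one edge at $v$ in any Hamiltonian solution, and in a Hamiltonian cycle at most two edges are used at any vertex; for Hamiltonian path each of the two path endpoints can ``save'' at most one such chain-edge by being placed inside a chain adjacent to $v$, which yields the stated thresholds. If such a bad vertex exists, no Hamiltonian solution can exist in $G_B\oplus H_B$ for any $H_B$, so I output as $G'_B$ a boundaried graph consisting of two non-boundary vertices joined by a single edge, which for every $H_B$ contains degree-$1$ non-boundary vertices in $G'_B\oplus H_B$ and is therefore a correct NO-equivalent.

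Otherwise, for each remaining chain $u-c_1-\cdots-c_\ell-v$ of length $\ell\ge 2$, I delete its $\ell$ internal vertices and insert a single fresh vertex $c'$ with edges $\{u,c'\}$ and $\{c',v\}$, thereby replacing the chain by a length-$1$ chain. Gluing-safety follows by a straightforward exchange argument in the Hamiltonian cycle case: a chain-subpath in a Hamiltonian cycle of $G_B\oplus H_B$ is replaced by $u-c'-v$ to obtain a Hamiltonian cycle of $G'_B\oplus H_B$, and conversely the length-$1$ chain is expanded back. For Hamiltonian path I expect to need a short case analysis: using the degree-$2$ constraint one shows that a path endpoint lying inside a chain must be either $c_1$ or $c_\ell$, and each of those two configurations is faithfully represented by the endpoint being at $c'$ (using its edge to $v$ or to $u$, respectively).

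Since the bad-anchor check did not trigger, the number of chains is $\Oh(|B|+|X|)=\Oh(|B|+k)$, and after exhaustive chain-shortening we have exactly one chain-vertex per chain, giving $|V(G')|\le |B|+|X|+\Oh(|B|+k)=\Oh(|B|+k)$. The set $X$ remains a valid $\rho$-solution for $G'$ because the reduction preserves degrees at every vertex of $B\cup X$ and each fresh $c'$ has degree exactly $2$. The main obstacle I anticipate is the case analysis verifying gluing-safety of chain-shortening for Hamiltonian path, specifically ensuring that path endpoints placed inside long chains correspond bijectively to path endpoints at the replacement vertex $c'$ in the shortened chain.
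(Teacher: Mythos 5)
Your core idea — reduce maximal degree-$2$ chains of $R$-vertices to constant length because those chains are forced subpaths of any Hamiltonian solution of $G_B\oplus H_B$ — is exactly the paper's. The paper, however, avoids your entire bad-anchor / chain-counting argument: it simply contracts adjacent degree-$2$ pairs in $R_X$ until $B\cup X$ is a vertex cover of the remaining graph, and then invokes the already-established boundaried kernelization for \hamcyclevc\ and \hampathvc\ (\Cref{full:thm:HC/P_vc}), whose built-in size check (output a NO-gadget when $|V|>2|B\cup X|+1$) does the job your bad-anchor check is doing. Your route is more self-contained but also more delicate, and it currently contains two concrete errors.

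First, the claim you intend to use in the Hamiltonian Path case — ``a path endpoint lying inside a chain must be either $c_1$ or $c_\ell$'' — is false. If one endpoint is at $c_i$ with $1<i<\ell$, the degree-$2$ constraint forces the \emph{other} endpoint to be at $c_{i+1}$ (or $c_{i-1}$), both interior; the Hamiltonian path then enters the chain from both anchors and terminates in the middle. Gluing-safety of chain-shortening still holds in this configuration (such a path corresponds to a Hamiltonian path of $G'_B\oplus H_B$ that has $c'$ as one of its endpoints), but the case analysis you describe would miss it, and it also matters for your threshold of $4$: that bound is right, but the reason is that at most $2$ chains incident to $v$ can dispense with their $v$-edge, \emph{not} that endpoints sit at $c_1$ or $c_\ell$. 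Second, your proposed NO-equivalent output — two non-boundary vertices joined by a single edge — is not glue-equivalent to a NO instance of Hamiltonian Path: take $B=\emptyset$ (which is allowed, and under which a bad anchor in $X$ can certainly occur) and $H_B$ the empty graph, and $G'_B\oplus H_B = K_2$ has a Hamiltonian path. You need a gadget that is a NO for \emph{all} $H_B$, e.g.\ three isolated non-boundary vertices. Finally, both you and the paper implicitly assume every degree-$2$ chain has two anchors in $B\cup X$, which fails for a connected component lying entirely in $R_X$ (a cycle of degree-$2$ $R$-vertices has no anchors and is invisible to your chain count); this corner case also deserves an explicit rule.
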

\begin{proof}
	Let us be given a boundaried graph $G_B$ and let $X$ be the set of vertices in $G$ with degree different than two. Let $R_X = V(G - (B \cup X))$. It is easy to see the gluing safeness of contracting any two adjacent vertices $u, v$ in $R_X$ to some vertex $w_{u,v}$: Computing a solution for $G_B \oplus H_B$ from one for $G'_B \oplus H_B$ can be done by replacing $w_{u,v}$ by $u, v$ in the correct order; for the other direction we can assume without loss of generality, that $v$ appears directly after $u$ in the solution, which lets us replace $u, v$ by $w_{u,v}$. Exhaustive application of this reduction rule leaves us with $B \cup X$ being a vertex cover. Hence apply the polynomial boundaried kernelization from \Cref{full:thm:HC/P_vc} to obtain gluing-equivalent graph $\hat{G}_B$ and offset $\Delta$. Further, it still holds that $X$ contains all vertices in $\hat{G}$ with degree different than two.
\end{proof}


\section{Lower bounds}\label{full:section:lowerbound}
Despite the relatively good behavior of the polynomial kernels used to obtain boundaried counterparts in \Cref{full:section:upperbound}, one cannot expect this to be the case for all problems. Among such, we show \clustereditingparamce and \treedeletiontds to not admit any boundaried kernelization, although polynomial kernels are known for $\clusterediting[k]$ and $\treedeletion[k]$ \cite{DBLP:journals/mst/GrammGHN05,DBLP:journals/algorithmica/CaoC12,DBLP:journals/siamdm/GiannopoulouLSS16}. Note that (polynomial) kernelization for $\Pi[k]$ implies such also for $\Pi[\Pi]$ with $\Pi$ being a pure graph minimization problem: given $(G, k, s)$, if $\Pi(G, s) \leq k$, then answer YES, else a (polynomial) bound on $k$ also implies such on $\Pi(G, s)$.

Most of our lower bound results are obtained through excluding finite integer index, which was used as a sufficient condition for effective protrusion replacement earlier \cite{DBLP:journals/jacm/BodlaenderFLPST16}. Furthermore, we also show a ``too large'' integer index in order to unconditionally rule out a polynomial boundaried kernelization for the problem \dominatingsetvc, which is consistent with \Cref{full:lem:PBK-PK} and an existing conditional exclusion of a polynomial kernel for the problem \cite{DBLP:conf/icalp/DomLS09}. A similar result is obtained by Jansen and Wulms~\cite{DBLP:journals/dam/JansenW20} who showed that \vertexcover and \dominatingset do not admit a single-exponential number of gluing-equivalence classes, even with the restriction to planar graphs with treewidth at most $|B| + \Oh(1)$. Together with \Cref{full:lem:BK-index} their result excludes polynomial boundaried kernelization for $\vertexcover[\text{tw} + \modto{\text{planar}}]$ and $\dominatingset[\text{tw} + \modto{\text{planar}}]$.

\begin{figure}[t]
	\centering
	\quad
	\begin{subfigure}[t]{0.39\textwidth}
		\centering
		\includegraphics[width=\textwidth]{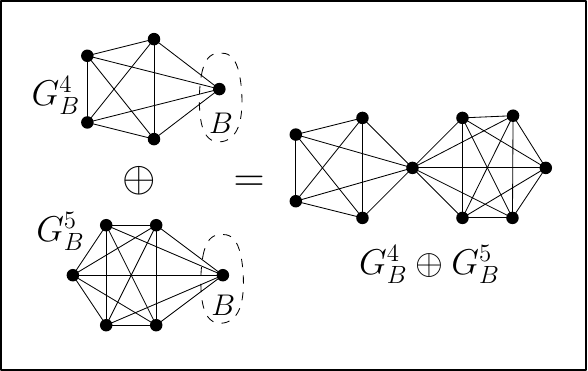}
	\end{subfigure}\quad
	\hfill
	\begin{subfigure}[t]{0.51\textwidth}
		\centering
		\includegraphics[width=\textwidth]{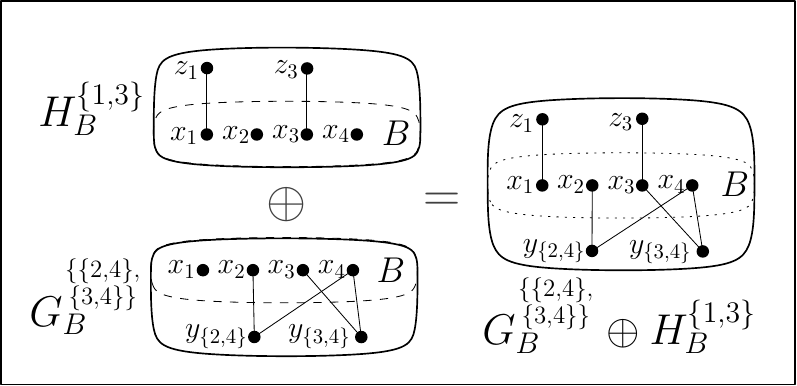}
	\end{subfigure}\quad \quad
	\caption{Examples of graphs defined in proofs for Lemmas \ref{full:lem:noFII:CE_complete} (left) and \ref{full:lem:noSEII:DS_vc} (right).}\label{full:fig:ds_ce}
\end{figure}

\subsection{Cluster Editing}

\begin{lemma}\label{full:lem:noFII:CE_complete}
	\clusterediting does not have finite integer index, even on the class of complete graphs.
\end{lemma}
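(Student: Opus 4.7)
The plan is to exhibit, for the single-vertex boundary $B = \{b\}$, an infinite family of boundaried complete graphs $G^n_B$ that are pairwise non-equivalent under $\equiv_{\CE, B}^{\Ccomplete}$, ruling out any bound $f(|B|)$ on the number of equivalence classes. Concretely, for $n \in \N$ let $G^n$ be the complete graph on vertex set $B \cup \{w_1, \dots, w_n\}$, so $G^n = K_{n+1} \in \Ccomplete$, and I claim the infinitely many $G^n_B$ fall in pairwise distinct classes of $\equiv_{\CE, B}^{\Ccomplete}$.

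Assume for contradiction that $G^n_B \equiv_{\CE, B}^{\Ccomplete} G^m_B$ with offset $\Delta$ for some $n \neq m$. First, I would glue with the trivial $B$-boundaried graph $H_B = (B, \emptyset)$; since $G^i_B \oplus H_B = K_{i+1}$ is already a cluster graph, $\OPT_{\CE} = 0$ on both sides, forcing $\Delta = 0$. Second, as separating gluing partner take $H^t_B = (K_{t+1}, B)$ on vertex set $B \cup \{v_1, \dots, v_t\}$ for $t \geq \max(n, m)$. The glued graph $G^n_B \oplus H^t_B$ is the ``bowtie'' consisting of two cliques of sizes $n+1$ and $t+1$ sharing only the vertex $b$.

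The crux of the argument is $\OPT_{\CE}(G^n_B \oplus H^t_B) = n$ whenever $t \geq n$. The upper bound is immediate via the partition $\{\, V(H^t),\ \{w_1, \dots, w_n\}\,\}$, which only deletes the $n$ edges incident to $b$ on the $G^n$-side. For the lower bound I would case split on where $b$'s cluster $C$ lies in an optimal partition: if $C \subseteq V(G^n)$, then all $t$ edges $\{b, v_j\}$ must be deleted (cost $\geq t \geq n$); if $C \subseteq V(H^t)$, then all $n$ edges $\{b, w_i\}$ must be deleted (cost $\geq n$); in the remaining ``bridging'' sub-case, where $C$ contains $\alpha \geq 1$ of the $w$'s and $\beta \geq 1$ of the $v$'s, one needs $\alpha\beta$ additions inside $C$ plus $(n - \alpha) + (t - \beta)$ deletions at $b$, which rearranges to $(\alpha - 1)(\beta - 1) + (n + t - 1) \geq n + t - 1 \geq n$. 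Combining this with $\Delta = 0$ yields $n = \OPT_{\CE}(G^n_B \oplus H^t_B) = \OPT_{\CE}(G^m_B \oplus H^t_B) + \Delta = m$, contradicting $n \neq m$.

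The main technical care goes into the bridging sub-case of the lower bound: one must verify that no non-obvious partition (splitting $K_{t+1}$ into several pieces, or introducing bridging clusters not containing $b$) undercuts cost $n$. Any split of the dense clique $K_{t+1}$ already costs at least $t$ edge deletions, since removing a single vertex from its cluster severs its $t$ remaining clique-edges; and any cluster $C'$ not containing $b$ but still bridging the two sides incurs at least the missing cross-edges in $C'$ plus deletions for each of its vertices' clique-edges, which is strictly more expensive than deleting the $n$ edges at $b$ directly. Both observations follow from simple counting, so the lower bound goes through and the infinitely many equivalence classes follow, establishing the lemma.
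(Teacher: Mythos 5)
Your proposal is correct and uses essentially the same construction as the paper: a single-vertex boundary, boundaried cliques $K_{n+1}$, and separating gluing partners that are cliques as well, reducing to computing $\OPT_\CE$ of two cliques glued at one vertex. The only difference is cosmetic---you prove the lower bound $\OPT_\CE(G^n_B \oplus H^t_B) \geq n$ by case analysis on the cluster containing $b$, while the paper argues by contradiction with a hypothetical smaller solution via induced $P_3$'s; both establish the same quantity by the same counting idea.
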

\begin{proof}
	Choosing our boundary $B$ to consist of exactly one vertex $x$, we show that the equivalence relation $\equiv_{\CE, B}^{\graphclass{complete}}$, from now on denoted simply $\equiv$ throughout this proof, has infinitely many equivalence classes, by giving an infinite set of boundaried graphs $G_B^i, i \in \mathbb{N}$ with $G^i$ being complete, such that for any distinct $i, j \in \mathbb{N}$ it holds that $G_B^i \not\equiv G_B^j$. Namely, we define $G_i$ as a complete graph on $i+1$ vertices, with the boundary vertex $x$ being any one of the vertices. It is easy to see that these graphs are all non-isomorphic. The boundaried graph $G_B^0$ consists solely of the vertex $x$.
	
	For any $i \in \mathbb{N}$ it holds that $G^i_B \oplus G^0_B = G^i$, which is a clique on $i+1$ vertices, and hence the optimum \clusterediting solution for this graph has size zero. Similarly, we show that for any $i, j \in \mathbb{N}$ with $i > j$ it holds that $\OPT_{\CE}(G^i_B \oplus G^j_B) = j$. Without loss of generality we will assume that the vertex sets of $G^i$ and $G^j$ intersect at $x$ only. Let $G = G^i_B \oplus G^j_B$ and denote $E(G)$ by $E$. Observe that $G$ consists of two cliques, one of size $i+1$ and the other of size $j+1$, intersecting exactly at $x$ only. Let $C^i = V(G^i - x)$ and $C^j  = V(G^j - x)$. We can choose $S = E(x, C^j) = \{\{x, v\} \mid v \in C^j\}$ as a solution for $G$ of size $j$. One easily sees, that $G \triangle S$ consists of two cliques, $C^i \cup  \{x\}$ and $C^j$. Assume that there is some better solution $S'$ of size less than $j$. It is clear that $E(x, C^j), E(x, C^i) \nsubseteq S'$, as otherwise we would have $|S'| \geq j$. Thus in $G \triangle S'$, vertex $x$ is still adjacent to some vertices in both $C^i$ and $C^j$. Let $v \in C^j$ be a vertex in $C^j$ that remains adjacent to $x$ in $G \triangle S'$. Similarly, let $U \subseteq C^i$ be the (maximal) set of all vertices in $C^i$ that remain adjacent to $x$ in $G \triangle S'$. As $(u, b, v)$ with  $u \in U$ forms an induced $P_3$ in $G$, it follows that we need the edge $\{u, v\}$ as well in $S'$, to make it a triangle. By maximality of $U$, any vertex in $C^i \setminus U$ needs to be disconnected from $x$ in $G \triangle S'$, and thus all edges in $E(x, C^i \setminus U)$ need to be in $S'$. Counting the number of edges already found to be in $S'$, we find that $S$ was actually smaller: $|S'| \geq |E(U, v)| + |E(x, C^i \setminus U)| \geq |C^i| = i > j$.
	
	Now, let us be given some fixed $i, j \in \N$ with $i \neq j$, let $h$ be a natural number larger than $i$ and $j$. By previous argumentation, we know that $\OPT_\CE(G^i_B \oplus G^0_B) = \OPT_\CE(G^j_B \oplus G^0_B) = 0$, but at the same time, we know that $\OPT_\CE(G^i_B \oplus G^h_B) = i$, while $\OPT_\CE(G^j_B \oplus G^h_B) = j \neq i$. This way, there exists no constant $\Delta$ to witness gluing equality of $G^i_B$ and $G^j_B$ and we get $G^i_B \not\equiv G^j_B$.
\end{proof}

By \Cref{full:lem:BK-index} and the fact that graphs in $\Ccomplete$ have $\emptyset$ as a solution with value $0$ for both \clusterediting and \myproblem{Cluster Vertex Deletion}, it follows from \Cref{full:lem:noFII:CE_complete}:

\begin{theorem}
	The parameterized problems \clustereditingmodcvd and \clustereditingparamce do not admit boundaried kernelization.
\end{theorem}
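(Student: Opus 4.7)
The plan is to derive the non-existence of boundaried kernelization for both parameterizations by combining \Cref{full:lem:noFII:CE_complete} with the general lower-bound machinery of \Cref{full:lem:BK-index}. Concretely, I would invoke \Cref{full:lem:BK-index} with $\Pi = \clusterediting$, with $\cC = \Ccomplete$, and with $\rho$ being either \myproblem{Cluster Vertex Deletion} or $\clusterediting$ itself (covering the two parameterizations of interest).

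The key observation is that for either choice of $\rho$, the empty set is a valid $\rho$-solution of value $0$ on every complete graph: a clique is already a cluster graph, so it needs no cluster vertex deletions and no cluster editings. Hence in the statement of \Cref{full:lem:BK-index} we may take $g(|B|) = 0$ identically. If a boundaried kernelization of some size $f$ existed, that lemma would therefore bound the number of $\equiv^{\Ccomplete}_{\CE,B}$-equivalence classes by $\Oh(2^{f(|B|)^2})$, i.e.\ by a finite number for every fixed $B$.

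This directly contradicts \Cref{full:lem:noFII:CE_complete}, which already exhibits, for a boundary consisting of a single vertex, an infinite family of complete boundaried graphs pairwise inequivalent under $\equiv^{\Ccomplete}_{\CE,B}$. Hence no boundaried kernelization (let alone a polynomial one) of any size can exist for either \clustereditingmodcvd or \clustereditingparamce. The argument is essentially routine bookkeeping once \Cref{full:lem:noFII:CE_complete} is in hand; the only point demanding care is to verify that the complete graph indeed witnesses a constant-size (in fact, empty) solution to both candidate $\rho$'s so that \Cref{full:lem:BK-index} applies uniformly to both parameterizations, which is immediate from the definitions.
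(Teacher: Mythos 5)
Your proposal is correct and follows essentially the same route as the paper: combine \Cref{full:lem:noFII:CE_complete} (infinitely many $\equiv^{\Ccomplete}_{\CE,B}$-classes for a singleton boundary) with \Cref{full:lem:BK-index}, using the observation that $\emptyset$ is a value-$0$ solution for both \myproblem{Cluster Vertex Deletion} and \clusterediting on every complete graph so that $g \equiv 0$ works for both parameterizations. The only difference is that you spell out the bookkeeping that the paper leaves implicit in a single sentence.
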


\subsection{Maximum Cut}
\begin{lemma}\label{full:lem:noFII:MC_independent}
	\maximumcut does not have finite integer index, even on boundaried graphs with $G-B$ being an independent set.
\end{lemma}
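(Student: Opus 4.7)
The plan is to mimic the construction of \Cref{full:lem:noFII:CE_complete}: fix a two-vertex boundary $B = \{x,y\}$ and exhibit an infinite family $(G^i_B)_{i \in \N}$ of boundaried graphs with $G^i - B$ independent, such that any alleged gluing-equivalence $G^i_B \equiv_{\MC, B} G^j_B$ would force two incompatible offset values $\Delta$ depending on the partner $H_B$ one glues onto.

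I would take $G^i_B$ to be the complete bipartite graph $K_{2,i}$ with $B = \{x,y\}$ being the side of size two, so that the $i$ middle vertices form $G^i - B$ and are independent. The crucial observation is that, in any cut of $G^i_B \oplus H_B$, the $K_{2,i}$ part contributes its full $2i$ edges exactly when $x$ and $y$ lie on the same side of the cut (place all middle vertices on the other side) and contributes exactly $i$ otherwise (each middle vertex then cuts exactly one of its two incident edges, no matter on which side it lies). Writing $\alpha_H$, resp.\ $\beta_H$, for the maximum number of edges of $H$ that can be cut by a cut of $H$ placing $x,y$ on the same side, resp.\ on different sides, this decoupling yields
\[
\OPT_\MC(G^i_B \oplus H_B) \;=\; \max\bigl(2i + \alpha_H,\; i + \beta_H\bigr).
\]

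Two well-chosen partners then force incompatible offsets. First, let $H^{(0)}_B$ be the empty $B$-boundaried graph on $\{x,y\}$: here $\alpha_{H^{(0)}} = \beta_{H^{(0)}} = 0$, giving $\OPT_\MC(G^i_B \oplus H^{(0)}_B) = 2i$, so any valid $\Delta$ must equal $2(i-j)$. Second, let $H^{(M)}_B$ consist of $M$ internally vertex-disjoint length-three paths $(x, a_k, b_k, y)$ between $x$ and $y$; a short case check on a single such path gives $\alpha = 2$ (two of its three edges can be cut when $x,y$ share a side) and $\beta = 3$ (all three when they are separated), hence $\alpha_{H^{(M)}} = 2M$ and $\beta_{H^{(M)}} = 3M$. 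Choosing any $M > \max(i,j)$ makes the term $i + 3M$ strictly dominate $2i + 2M$ and pins down $\Delta = i - j$. Since $i - j \neq 2(i-j)$ for distinct $i,j \in \N$, no single offset can serve both gluings, and therefore $G^i_B \not\equiv_{\MC, B} G^j_B$.

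The main obstacle — really the only non-routine ingredient — is justifying the decoupling identity above: because the $K_{2,i}$ part and $H$ share vertices only in the boundary $\{x,y\}$, once $(\mathrm{side}(x), \mathrm{side}(y))$ is fixed the cut in $G^i_B \oplus H_B$ splits as independent choices inside each part, so the optimum over all cuts factorizes as a maximum over the four side-assignments of $\{x,y\}$ of the sum of optima per part. After this, the per-path contributions $\alpha = 2$ and $\beta = 3$ are a routine two-vertex case check, and the incompatibility of offsets follows immediately.
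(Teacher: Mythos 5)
Your proof is correct, and while it reuses the paper's family $G^i_B = K_{2,i}$ with $B$ the size-two side, the remainder takes a genuinely different and arguably cleaner route. The paper glues complete bipartite graphs $K_{j,j}$ attached at $x$ and $y$, and then computes $\OPT_\MC(G^i_B\oplus H^j_B)=j^2+2j+i$ via a case analysis over partial placements of the two sides of $K_{j,j}$; your version instead glues $M$ disjoint length-three $(x,y)$-paths and organizes the whole computation through the decoupling identity
\[
\OPT_\MC(G^i_B \oplus H_B) \;=\; \max\bigl(2i + \alpha_H,\; i + \beta_H\bigr),
\]
where $\alpha_H,\beta_H$ are the best same-side, resp.\ opposite-side, cut values inside $H$. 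This identity is a useful structural observation in its own right: since $G^i$ and $H$ share only $\{x,y\}$, the maximum cut factorizes over the four placements of $\{x,y\}$, and it reduces all partner-graph analysis to the trivial per-path check $(\alpha,\beta)=(2,3)$. Your offsets $2(i-j)$ versus $i-j$ are visibly incompatible, whereas the paper gets the same contradiction from $2i-2j$ versus $i-j$; both are correct, but your framing makes the ``two incompatible offsets'' mechanism transparent and would apply unchanged to the paper's $K_{j,j}$ partner as well.
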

\begin{proof}
	We fix the boundary $B$ to contain exactly two vertices, $x$ and $y$, and define an infinite series of boundaried graphs $G^i_B$ with $i \in \N$ such that $i \neq j$ leads to $G^i_B \not\equiv G^j_B$. Namely, $G^i_B$ consists of $x$, $y$ and an independent set of size $i$, in which all vertices are adjacent to both $x$ and $y$.
	In order to show that these graphs are not gluing-equivalent with respect to \maximumcut, we also define the graph $H^i_B$, which consists of a complete bipartite graph with $i$-vertices on each side. In one of the sides, every vertex is adjacent to $x$, and in the other one, every vertex is adjacent to $y$.
	
	Note that for any fixed $i \in \N$, the graph $G^i$ is constructed in such a way, that its whole edge set is a cut of size $2i$ between the boundary and non-boundary vertices. This leads to $\OPT_\MC(G^i_B \oplus H^0_B) = 2i$, as $G^i_B \oplus H^0_B = G^i$.
	
	On the other hand, for $i < j$ it holds that $\OPT_\MC(G^i_B \oplus H^j_B) = j^2 + 2j + i$: With $V \coloneq V(G^i) \setminus B$ and $U \coloneq N(x) \cap (V(H^j) \setminus B)$ and $W \coloneq N(y) \cap (V(H^j) \setminus B)$ we get a corresponding solution by using the cut between $U \cup \{y\} \cup V$ and $W \cup \{x\}$ of size $|U| \cdot |W| + |U| + |W| + |V| = j^2 + 2j + i$. Let us argument why no better solution can be found. As we soon will show, to find a cut at least as large, we would still need $U$ entirely  in one half and $W$ in the other, but then we for sure lose the edges between $x$ and either $U$ or $W$, hence we still get a solution of size at most  $j^2 + 2j + i$. Say, the hypothetical solution is a cut between vertex sets $X$ and $Y$. Let $U_X = U \cap X$ and define $U_Y, W_X$ and $W_Y$ analogically. Let $a = |U_X|$ and $b = |W_X|$. Since the total number of edges in $G \coloneq G^i_B \oplus H^j_B$ is $2i + 2j + j^2$, and since edges between $U_X$ and $W_X$ (respectively, between $U_Y$ and $W_Y$, of which there are $j-a$ and $j-b$) are not part of the cut, it holds that $|E_G(X, Y)| \leq j^2 + 2j +2i - (ab + (j-a)(j-b))$. Let $f(a, b) = ab+(j-a)(j-b)$. In order to obtain $|E_G(X, Y)| \geq j^2 + 2j + i$, it thus must hold that $f(a, b) \leq i$. We do a case distinction: If $a = 0$ and $b = j$ or the other way around, i.e., it holds that $U$ and $W$ are completely contained in the opposing sides, we get $f(a, b) = 0$. If $a$ and $b$ are both zero or both $t$, then we have $f(a,b) = j^2 + 0 > i$. Else assume w.l.o.g.\ that it holds $1 \leq a \leq j-1$, which then implies that $ab \geq b$ and $(j-a)(j-b) \geq j-b$, hence $f(a, b) \geq b + j -b = j > i$.
	
	To conclude this proof, assume we are given arbitrary, unequal $i, j, h \in \N$ and let $h$ be larger than both $i$ and $j$. Then $h$ witnesses that $G^i_B$ and $G^j_B$ are not gluing equivalent with respect to \maximumcut, as we can see by previous argumentation, that $\OPT_\MC(G^i_B \oplus H^0_B) - \OPT_\MC(G^j_B \oplus H^0_B) = 2i - 2j$ is not equal to $\OPT_\MC(G^i_B \oplus H^h_B) - \OPT_\MC(G^j_B \oplus H^h_B) = i-j$.
\end{proof}

From \Cref{full:lem:noFII:MC_independent} and \Cref{full:lem:BK-index} directly follows:

\begin{theorem}
	The parameterized problem \maximumcutmodvc does not admit boundaried kernelization.
\end{theorem}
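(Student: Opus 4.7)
The plan is to obtain the theorem as an immediate consequence of the two preceding statements: \Cref{full:lem:noFII:MC_independent} (which already performs the heavy combinatorial work) and the general criterion \Cref{full:lem:BK-index}. I would argue by contradiction: assume that \maximumcutmodvc admits a boundaried kernelization of some size $f$, and derive that the index of the equivalence relation considered in \Cref{full:lem:noFII:MC_independent} must be finite for every fixed boundary.

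Concretely, I would set $\Pi = \maximumcut$ and $\rho = \modto{\Cindependent}$ (i.e., \vertexcover), and take the graph class $\cC$ to consist of all graphs $G$ with $B \subseteq V(G)$ for which $G-B$ is an independent set. Under this choice, $B$ itself is a vertex cover of $G$ of size $|B|$, so $\rho(G, B) \leq |B|$ and the hypothesis of \Cref{full:lem:BK-index} is met with $g(|B|) = |B|$. The lemma then bounds the number of equivalence classes of $\equiv_{\MC,B}^{\cC}$ by $\Oh(2^{f(2|B|)^2})$, which is finite for every fixed $B$.

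On the other hand, the infinite family $\{G^i_B : i \in \N\}$ constructed in the proof of \Cref{full:lem:noFII:MC_independent} lies precisely in this class $\cC$, since each $G^i - B$ is an independent set by construction. Those graphs were shown to be pairwise non-equivalent under $\equiv_{\MC,B}$ already for a boundary of size two, witnessing an infinite index. This contradicts the finite bound obtained from the assumed boundaried kernelization, ruling out its existence.

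There is no serious obstacle here: the theorem is essentially a corollary of \Cref{full:lem:noFII:MC_independent}, and the only minor subtlety is aligning the framing of \Cref{full:lem:BK-index}, which quantifies over graph classes $\cC$ admitting a bounded-size $\rho$-solution, with the framing of \Cref{full:lem:noFII:MC_independent}, which is stated in terms of boundaried graphs $G_B$ with $G-B \in \Cindependent$. Taking $\cC$ to be exactly the graphs whose non-boundary vertices form an independent set makes both quantifiers agree, so no additional work beyond collecting the two preceding results is required.
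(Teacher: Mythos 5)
Your proposal is correct and matches the paper's proof exactly: the paper gives this theorem as an immediate corollary of \Cref{full:lem:noFII:MC_independent} and \Cref{full:lem:BK-index}, and you correctly identify the choices $\rho = \modC$ with $\cC = \Cindependent$ and $g(|B|) = |B|$ (since $B$ is itself a vertex cover when $G - B$ is independent), so that the lemma yields a finite number of equivalence classes of $\equiv_{\MC,B}^{\Cindependent_B}$, contradicting the infinite family exhibited in \Cref{full:lem:noFII:MC_independent} for a boundary of size two. The paper leaves these bookkeeping details implicit ("directly follows"), while you spell them out, but the argument is the same.
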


\subsection{Tree Deletion Set}
\begin{lemma}\label{full:lem:noFII:TDS_is_B}
	\treedeletion does not  have finite integer index, even on boundaried graphs with $G-B$ being an independent set.
\end{lemma}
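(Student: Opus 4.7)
The plan is to exhibit an infinite family $\{G^i_B\}_{i\ge 2}$ of pairwise non-equivalent $B$-boundaried graphs, each with $G^i - B$ an independent set. Fix $B = \{x,y\}$ and let $G^i_B$ be the bipartite graph $K_{2,i}$ obtained from $x,y$ and an independent set $\{v_1,\ldots,v_i\}$ by making each $v_k$ adjacent to both $x$ and $y$ (and with $xy \notin E(G^i)$). Then $G^i - B$ is an independent set, and a quick inspection shows that $\OPT_{\TDS}(G^i) = 1$ for every $i \ge 2$: the graph contains the $4$-cycle $x v_1 y v_2$, yet deleting $x$ leaves the star $K_{1,i}$ centred at $y$, which is a tree.

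To separate $G^i_B$ from $G^j_B$ for distinct $i,j \ge 2$, I will exhibit two witness boundaried graphs whose gluings produce different offsets. The first is the trivial $H^0_B$ consisting of just the two isolated boundary vertices $x,y$; since gluing it is the identity, the offset it witnesses is $\OPT_{\TDS}(G^i) - \OPT_{\TDS}(G^j) = 0$. For the second, I will take $H^m_B$ to consist of $x,y$ together with $m$ new pendants $p_1,\ldots,p_m$ each adjacent only to $x$ and $m$ new pendants $q_1,\ldots,q_m$ each adjacent only to $y$, with no edge $xy$. The core computation to carry out is
\[
 \OPT_{\TDS}(G^i_B \oplus H^m_B) \;=\; \min(i-1,\; m+1) \qquad\text{for all } i \ge 1.
\]
Once this is established, picking any $m \ge \max(i,j)-2$ yields optima $i-1$ and $j-1$, so $H^m_B$ witnesses an offset of $i-j \ne 0$, which contradicts the offset $0$ witnessed by $H^0_B$. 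No constant $\Delta$ can then satisfy the definition of gluing-equivalence, so $G^i_B \not\equiv_{\TDS, B} G^j_B$.

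The main obstacle will be the tight lower bound in the identity above. The upper bound comes from two explicit strategies: either delete $i-1$ of the $v_k$ (leaving a caterpillar-like tree with a single $x$-$v_k$-$y$ spine) or delete $x$ together with all pendants $p_\ell$ (leaving the star $K_{1,i+m}$ centred at $y$). For the lower bound I will case-split on which of $x,y$ a tree-deletion set $S$ contains. If $S \cap \{x,y\} = \emptyset$, every pair $v_k \ne v_{k'}$ yields a $4$-cycle $x v_k y v_{k'}$ through $\{x,y\}$, so at most one $v_k$ can survive and $|S|\ge i-1$. If $x \in S$ (symmetric for $y$), then each pendant $p_\ell$ has its unique neighbour deleted; since TDS requires the remainder to be connected, each $p_\ell$ must also be added to $S$, giving $|S|\ge 1+m$. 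The case $\{x,y\}\subseteq S$ is strictly more expensive. The subtle point is precisely this forced inclusion of pendants, which hinges on the ``connected'' half of being a tree and is what makes the offset scale with $i$ rather than saturating at a constant.
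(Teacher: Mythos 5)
Your proposal is correct, and it follows the same meta-strategy as the paper's proof (exhibit two gluing partners that witness different offsets, one giving a constant difference and the other a difference growing with $i$), but with a genuinely different construction. The paper uses a one-vertex boundary $B=\{x\}$, takes $G^i$ to be the $i$-star centred at $x$, and glues on a flower gadget $H^h$ with $h$ triangles through $x$ plus an auxiliary vertex $r$; the flower forces $x$ into the solution, after which the $i$ star leaves become isolated and must all be deleted. You instead take a two-vertex boundary, $G^i = K_{2,i}$, and a pendant gadget: deleting $x$ (resp.\ $y$) isolates $m$ pendants and is penalised by $m$, while keeping both $x$ and $y$ forces $i-1$ of the $v_k$ to go. The two gadgets are dual in flavour --- the paper's forces deletion of the boundary vertex, yours penalises it --- but both yield the identity needed to obtain offsets $0$ versus $i-j$, so neither a finite nor even an unbounded constant $\Delta$ can exist. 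Your approach avoids the flower argument (no need for the fact that the only vertex in all cycles of a flower must be taken), which is slightly more elementary, at the cost of a boundary of size two instead of one; the statement to be proved does not constrain $|B|$, so this is fine.

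One small imprecision worth tightening in a write-up: when $x \in S$ you claim that ``each $p_\ell$ must also be added to $S$.'' This is not literally forced --- a solution with $G-S = \{p_\ell\}$ a single isolated vertex is a legitimate tree. However, in that degenerate case $|S| = |V(G)|-1 = 1+i+2m \ge 1+m$, so the bound $|S|\ge m+1$ you need still holds; you should just state the dichotomy ``either all $p_\ell \in S$, or $G-S$ is a singleton'' rather than asserting the first branch unconditionally. With that patched, the proof is complete.
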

\begin{proof}
	Choosing our boundary $B$ to consist  of exactly one vertex $x$, we show that the equivalence relation $\equiv_{\TDS, B}^{\graphclass{independent}_B}$, from now on denoted simply $\equiv$ in this proof, has infinitely many equivalence classes. For this, we give an infinite set of boundaried graphs $G^i_B, i \in \N_{\geq 1}$ with $G^i_B \in \graphclass{independent}_B$, such that for any distinct $i, j \in \N$ it holds that $G^i_B \not\equiv G^j_B$. We define $G^i$ to be an $i$-star with $x$ as the center, i.e., $G^i$ is a graph containing the boundary vertex $x$ and $i$ other vertices $v_1, \dots, v_i$, which are all adjacent to $x$.
	In order to show non-equivalence of $G^i_B$ and $G^j_B$, we define additional boundaried graphs $H^h_B, h \in \N_{\geq 1}$, each of which consists of vertices $r$, $a_n, b_n$ with $n \in [h]$ and the boundary vertex $x$. For each $n \in [h]$ the vertices $a_n$ and $b_n$ are adjacent to each other and $x$; furthermore, $a_n$ is adjacent to $r$.
	
	With those structures defined, fix any $i, j \in \N_{\geq 1}$ with $i < j$, and let  $h = j+1$. First, note that $\OPT_\TDS(G^i_B \oplus H^1_B) = \OPT_\TDS(G^j_B \oplus H^1_B) = 1$, since both resulting graphs contain exactly one cycle which is induced by $\{a_1, b_1, x\}$, which can be hit by removing $b_1$, while preserving connectivity.
	On the other hand, we show that $\OPT_\TDS(G^i_B \oplus H^h_B) \leq i + 1$ while $\OPT_\TDS(G^j_B \oplus H^h_B) \geq h = j+1 > i+1$, which then proves our point, as we do not have the same difference in optimum values after gluing with $H^1_B$ and $H^h_B$.
	
	Note that each $F^i \coloneq G^i_B \oplus H^h_B$ and $F^j \coloneq G^j_B \oplus H^h_B$ contain $h$-many cycles, which pairwise all intersect exactly at $x$. This constellation is called an $x$-flower of order $h$, and it is easy to see, that in order to hit those $h$ cycles with less than $h$ vertices, one needs to take $x$. Both $F^i - x$ and $F^j - x$ are disconnected, but at the same time $T \coloneq \{r\} \cup \{a_n, b_n \mid n \in [h]\}$ induces a tree. Since this tree contains more than $h$ vertices, we check whether removing all the other components leads to a solution smaller than $h$. In case of $F^i-x$, it is exactly the vertices $v_1, \dots, v_i$ which are not  contained in $T$, and whose removal gives us a feasible solution of size $i+1$. However, in $F^j-x$, outside of $T$ there are $j$-many vertices $v_1, \dots, v_j$. Removing those would give us size $j+1  = h$. Thus, $F^j$ does not have a solution which is smaller than $h$.
\end{proof}

\begin{lemma}\label{full:lem:noFII:TDS_tree_B}
	\treedeletion does not have finite integer index on $\graphclass{tree}_B$.
\end{lemma}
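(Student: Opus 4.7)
The plan is to mirror the proof of Lemma~\ref{full:lem:noFII:TDS_is_B} with the only change that the family $G^i_B$ is redefined so that $G^i - B$ is a tree, while the auxiliary family $H^h_B$ used to witness non-equivalence is reused verbatim. Concretely, I keep $B = \{x\}$ and, for each $i \in \N_{\geq 1}$, set $G^i$ to be the path $x - v_1 - v_2 - \cdots - v_i$ on $i + 1$ vertices; then $G^i - B$ is a path on $i$ vertices, hence a tree, so $G^i_B \in \Ctree_B$. The auxiliary graphs $H^h_B$ (the $x$-flower of $h$ triangles $\{x, a_n, b_n\}$ with an extra vertex $r$ adjacent to each $a_n$) are taken unchanged from the previous proof.

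Next, I bound $\OPT_\TDS$ on the two relevant glued instances. For $H^1_B$, the graph $G^i_B \oplus H^1_B$ contains a single cycle (the triangle $\{x, a_1, b_1\}$), and deleting $b_1$ leaves a connected acyclic graph (namely the $v$-path attached at $x$ together with the pendant path $r - a_1$); hence $\OPT_\TDS(G^i_B \oplus H^1_B) = 1$. For $H^h_B$ with $h \geq i + 1$, the set $\{x, v_1, \ldots, v_i\}$ is a tree deletion set of size $i + 1$ whose complement is the tree $H^h - x$, so $\OPT_\TDS(G^i_B \oplus H^h_B) \leq i + 1$. For the matching lower bound I case on whether the solution $S$ contains $x$: if $x \notin S$, then the $h$ triangles $\{x, a_n, b_n\}$ form an $x$-flower of order $h$, forcing $|S| \geq h \geq i + 1$; if $x \in S$, then $(G^i_B \oplus H^h_B) - x$ consists of two vertex-disjoint trees ($G^i - x$ of size $i$ and $H^h - x$ of size $2h + 1$), so $S$ must further delete all vertices of one of them to regain connectivity, yielding $|S| \geq 1 + \min(i, 2h + 1) = 1 + i$.

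With these values the non-equivalence argument is a direct copy of the one in Lemma~\ref{full:lem:noFII:TDS_is_B}: for distinct $i, j \in \N_{\geq 1}$ with $i < j$ and $h := j + 1$, we have $\OPT_\TDS(G^i_B \oplus H^1_B) - \OPT_\TDS(G^j_B \oplus H^1_B) = 0$ but $\OPT_\TDS(G^i_B \oplus H^h_B) - \OPT_\TDS(G^j_B \oplus H^h_B) \leq (i+1) - (j+1) = i - j < 0$, so no single offset $\Delta$ can witness $G^i_B \equiv_{\TDS, B}^{\Ctree_B} G^j_B$. Thus $\equiv_{\TDS, B}^{\Ctree_B}$ has infinitely many equivalence classes, which is exactly what is claimed.

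The step I expect to require the most care is the lower bound in the case $x \in S$. In the star construction used for Lemma~\ref{full:lem:noFII:TDS_is_B}, the $+i$ contribution came from $G^i - x$ breaking into $i$ singleton components, each of which had to be hit individually. In the tree adaptation $G^i - x$ is a single connected path of size $i$, and the same $+i$ contribution instead arises because $S$ must delete all of $G^i - x$ at once to merge the $G$-side with the larger remaining tree $H^h - x$. Making this coincidence explicit — and thus arguing that turning the star into a path preserves the same optimum — is the only real subtlety; the rest of the argument is a verbatim re-run of the earlier lemma.
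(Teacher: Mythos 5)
Your proof is correct, and it takes a genuinely different (and simpler) construction than the paper's. Where you set $G^i$ to be the path $x - v_1 - \cdots - v_i$, the paper instead modifies the star construction of Lemma~\ref{full:lem:noFII:TDS_is_B} by adding a second ``apex'' vertex $y$ adjacent to all of $v_1, \ldots, v_i$, so that $G^i = K_{2,i}$ with parts $\{x,y\}$ and $\{v_1,\ldots,v_i\}$ and $G^i - x$ is a star centered at $y$. With that choice, $\OPT(G^i_B \oplus H^1_B) = 2$ (one must delete $y$ to break the $4$-cycles through $x$ and $y$, plus $b_1$ for the triangle), and $\OPT(G^i_B \oplus H^h_B) \le i+2$ by deleting $\{x, y, v_1, \ldots, v_i\}$; the non-equivalence argument then compares differences exactly as in your write-up, with $h = j+2$ and $i, j \ge 2$. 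Both constructions run through the same template; yours avoids the auxiliary vertex $y$ and the resulting extra casework for $H^1_B$, while the paper's stays closest to a literal ``add a twin'' edit of the independent-set proof. Your lower-bound argument in the case $x \in S$ (that $(G^i_B \oplus H^h_B) - x$ splits into two trees, one of which must be entirely deleted to restore connectivity) is the right observation and is correctly bounded by $1 + \min(i, 2h+1) = 1 + i$ once $h \ge i+1$; the paper argues the analogous step via the two components of $F^j - x$.
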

\begin{proof}
	This proof works similarly to the proof for \Cref{full:lem:noFII:TDS_is_B}. We again choose $B = \{x\}$ and for $h \in \N_{\geq 1}$ we define the boundaried graph $H^h_B$ exactly the same. The boundaried graph $G^i_B, i \in \N_{\geq 2}$ is defined nearly the same, but  there is an extra vertex $y \neq x$, which is adjacent to all $v_1, \dots, v_i$.
	
	Let $i, j \in \N_{\geq 2}$ with $i < j$, and we emphasize that this time it holds that $i,j\geq 2$, but we still get an infinite amount of non-equivalent boundaried graphs. First, it holds that $\OPT(G^i_B \oplus H^1_B) = \OPT(G^j_B \oplus H^1 _B) = 2$. This time, we have multiple cycles, which each contain at least either $y$ or $b_1$. Removing these two vertices does not  interrupt connectedness of the rest, so in both cases the optimum is at most $2$. However, we cannot find any better solution: in search of one we would need to take $x$, as it is the only vertex contained in all cycles. This way we get multiple components, at least one of which then also needs to be contained in the solution. Hence such a solution has at least $2$ vertices.
	
	Now with $h = j+2$, we show that $\OPT(G^i_B \oplus H^h_B) \leq i+2$, while $\OPT(G^ j_B \oplus H^h_B) \geq h = j+2 > i+2$. Let $F^i = G^i_B \oplus H^h_B$ and $F^j = G^ j_B \oplus H^h_B$. For $F^i$ it is sufficient to remove all vertices contained in $G^i$, i.e., we remove $x, y, v_1, \dots, v_i$, which are $i+2$ vertices. This way we are left with the tree $H^h - x$. In order to find a solution for $F^j$ of size less than $h$, we need to remove $x$, as the graph contains an $x$-flower of order at least $h$. However, $F^j - x$ contains two components, one with $2h + 1$ vertices and the other with $j+1$. Whichever component we would choose to remove, we would get size at least $j+2$. This way, $F^j$ does not have a solution of size less than $h$.
	
	Altogether, we get $\OPT(G^j_B \oplus H^1 _B) - \OPT(G^i_B \oplus H^1_B) = 0$, which is clearly not equal to $\OPT(G^j_B \oplus H^h_B) - \OPT(G^i_B \oplus H^h_B) \geq 1$. Hence it holds that $G^i_B \not\equiv G^j_B$.
\end{proof}

By \Cref{full:lem:BK-index} in combination with \Cref{full:lem:noFII:TDS_is_B,full:lem:noFII:TDS_tree_B} we get the following theorem.

\begin{theorem}
	The parameterized problems \treedeletionvc and \treedeletiontds do not admit boundaried kernelization.
\end{theorem}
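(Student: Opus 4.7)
My plan is to derive the theorem as a direct corollary of \Cref{full:lem:BK-index} together with the two preceding lemmas. Suppose for contradiction that \treedeletionvc admits a boundaried kernelization of some computable size $f$. I would instantiate \Cref{full:lem:BK-index} with $\Pi = \treedeletion$, $\rho = \vertexcover$, and $\cC = \Cindependent$, observing that for any $G_B\in\Cindependent_B$, the boundary $B$ itself is a vertex cover of $G$, so we may take $g(|B|) = |B|$ as the required bound on the size of a $\rho$-solution. The lemma then yields a function $h(|B|) = \Oh(2^{f(2|B|)^2})$ bounding the number of equivalence classes of $\equiv^{\Cindependent}_{\TDS,B}$. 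Fixing $|B| = 1$, this number is a finite constant, contradicting \Cref{full:lem:noFII:TDS_is_B}, which exhibits an infinite family of pairwise $\equiv^{\Cindependent}_{\TDS,B}$-inequivalent boundaried graphs.

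For \treedeletiontds the argument is analogous: set $\cC = \Ctree$ and $\rho = \treedeletion$, and note that every graph in $\Ctree$ admits the empty set as a tree deletion set, so we may take $g(|B|) = 0$. Applying \Cref{full:lem:BK-index} again bounds the index of $\equiv^{\Ctree}_{\TDS,B}$ by a finite function of $|B|$, which for $|B|=1$ contradicts the infinite family constructed in \Cref{full:lem:noFII:TDS_tree_B}.

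I do not expect any serious obstacle: both ingredients are already in place, and the only points to verify are the trivial bounds $g(|B|) \leq |B|$ and $g(|B|)=0$ for the two parameter problems on the respective graph classes, ensuring that \Cref{full:lem:BK-index} is applicable with a concrete $g$. The resulting contradiction is clean because the cited lemmas produce an infinite family already for a single-vertex boundary, so no quantitative matching between $f$ and the number of equivalence classes is needed.
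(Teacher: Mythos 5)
Your approach is exactly the paper's: combine \Cref{full:lem:BK-index} with \Cref{full:lem:noFII:TDS_is_B,full:lem:noFII:TDS_tree_B} to get a contradiction. The paper's own ``proof'' is a single sentence doing precisely this, so you have identified the intended route.

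There is, however, a small but genuine imprecision in the instantiation of $\cC$ that you should tidy up in both cases. \Cref{full:lem:BK-index} as stated bounds the index of $\equiv^{\cC}_{\Pi,B}$, i.e.\ the restriction to graphs $G\in\cC$, whereas \Cref{full:lem:noFII:TDS_is_B} and \Cref{full:lem:noFII:TDS_tree_B} exhibit infinite families in $\Cindependent_B$ and $\Ctree_B$, i.e.\ boundaried graphs with $G-B\in\Cindependent$ resp.\ $G-B\in\Ctree$, and the graphs $G^i$ in those constructions do \emph{not} lie in $\Cindependent$ or $\Ctree$ themselves (stars are not edgeless; the graphs in \Cref{full:lem:noFII:TDS_tree_B} contain cycles through $x$, $y$, and $v_j$). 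So you should instantiate $\cC$ as the class of graphs $G$ with $B\subseteq V(G)$ for which $G-B$ is independent (resp.\ a tree), not as $\Cindependent$ (resp.\ $\Ctree$). This makes your $g(|B|)=|B|$ for \treedeletionvc exactly right (you are already reasoning about $G_B\in\Cindependent_B$ despite writing $\cC=\Cindependent$). But it breaks your $g(|B|)=0$ in the second case: ``every graph in $\Ctree$ has $\emptyset$ as a tree deletion set'' does not apply since the $G^i$ are not trees. The correct observation is that $B$ itself is a tree deletion set whenever $G-B$ is a tree, giving $g(|B|)=|B|$. This change is harmless for your argument, since the exact value of $g$ never matters—only its finiteness—but the justification you wrote is not the right one. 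With these corrections, the contradiction goes through cleanly, and, as you note, already at $|B|=1$.
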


\subsection{Long Cycle and Long Path}
\begin{lemma}\label{full:lem:noFII:LC_d2}
	\longcycle does not have finite integer index, even on the class of $B$-boundaried graphs where each $v \in V(G) \setminus B$ has degree two.
\end{lemma}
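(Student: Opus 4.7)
My plan mirrors the pattern of the preceding lower bound lemmas: construct an infinite family $\{G^i_B : i \in \N_{\geq 1}\}$ lying in the restricted class (every non-boundary vertex has degree exactly two), and then exhibit for each pair $i \neq j$ two different test graphs whose gluings produce incompatible optimum differences, thereby ruling out a constant offset $\Delta$. I will take $B = \{x, y\}$ with two boundary vertices and let $G^i_B$ be an $x$-$y$ path with $i$ internal vertices; every internal vertex is then incident with exactly its two path-neighbors, so $G^i_B$ is in the required class, while the boundary vertices $x, y$ (which are exempt from the degree condition) have degree $1$.

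For the test graphs, first let $H^{\mathrm{e}}_B$ be the two-vertex graph consisting of $x$, $y$ and a single edge $\{x, y\}$; then let $H^h_B$ for $h \geq 1$ consist of $x$, $y$ and two internally vertex-disjoint $x$-$y$ paths, each with $h$ internal vertices. The optimum cycle lengths are straightforward to compute: $G^i_B \oplus H^{\mathrm{e}}_B$ is a single cycle of length $i+2$, so $\OPT_\LC(G^i_B \oplus H^{\mathrm{e}}_B) = i+2$; and $G^i_B \oplus H^h_B$ is a theta graph with three internally vertex-disjoint $x$-$y$ paths of lengths $i+1$, $h+1$, $h+1$, whose longest cycle uses the two longest of these paths and thus has length $\max(i+h+2,\, 2h+2)$.

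To conclude non-equivalence of $G^i_B$ and $G^j_B$ for $i < j$, I would compare the two optimum differences. Gluing with $H^{\mathrm{e}}_B$ gives $\OPT_\LC(G^j_B \oplus H^{\mathrm{e}}_B) - \OPT_\LC(G^i_B \oplus H^{\mathrm{e}}_B) = (j+2) - (i+2) = j - i > 0$; gluing with $H^h_B$ for any $h > j$ gives both optima equal to $2h+2$ and hence difference $0$. No single integer $\Delta$ can reconcile these two differences, so $G^i_B \not\equiv_{\LC, B} G^j_B$, and the equivalence relation has infinitely many classes even on the restricted class. The only subtlety worth watching is the degenerate situation in which a glued graph contains no cycle at all (so $\OPT_\LC$ is ill-defined or $-\infty$); this is avoided by starting the family at $i \geq 1$ and requiring $h \geq 1$ in $H^h_B$, which ensures every graph in play contains at least one cycle and all optima are finite integers.
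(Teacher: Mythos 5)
Your proof is correct and follows the same overall strategy as the paper's: build an infinite family of boundaried graphs whose long-cycle behavior tracks an integer $i$, then exhibit two gluings that produce incompatible optimum differences, ruling out any fixed offset $\Delta$. The details differ, though. The paper's construction uses a four-vertex boundary $\{x_1,x_2,x_3,x_4\}$ and two disjoint path components in each $G^i$ (one of fixed length, one of length $i+2$), with two fixed test graphs $L_B$ and $R_B$ that close the respective paths into cycles; the differences are then $0$ versus $i-j$. Your construction is slimmer: a two-vertex boundary and a single $x$--$y$ path, with $H^{\mathrm{e}}_B$ as the ``measuring'' gluing and the theta graph $H^h_B$ (for $h>j$) as the ``forgetting'' gluing, whose longest cycle $2h+2$ depends only on the two $H$-paths once $h$ exceeds $i$ and $j$. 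This buys a minimal boundary size at the small cost of invoking the (standard but unstated in the paper) fact about longest cycles in a theta graph, and of using a test graph $H^h_B$ that depends on the pair $(i,j)$ rather than the paper's two fixed test graphs. Your degeneracy remark is also sound: restricting to $i\geq 1$ and $h\geq 1$ ensures every glued graph contains a cycle, so all optima are finite.
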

\begin{proof}
	Fix $B = \{x_1, x_2, x_3, x_4\}$ and define for each $i \in \N$ the graph $G^i_B$ with additional vertices $a$ and $b_1, \dots, b_{i+1}$ and add edges such that $(x_1, a, x_2)$ forms a path of length $2$ and $(x_3, b_1, \dots, b_{i+1}, x_4)$ forms a path of length $i+2$. Further define the following two graphs: $L_B$ with additional vertex $l$ such that $(x_1, l, x_2)$ forms a path of length $2$; $R_B$ with additional vertex $r$ such that $(x_3, r, x_4)$ forms a path of length $2$. It is easy to see that for any distinct $i,j \in \N$ it holds that $\OPT(G^i_B \oplus L_B) = 4 = \OPT(G^j_B \oplus L_B)$, while $\OPT(G^i_B \oplus R_B) = i+4 \neq j+4 = \OPT(G^j_B \oplus R_B)$, implying that $G^i_B \not\equiv G^j_B$.
\end{proof}

\begin{lemma}\label{full:lem:noFII:LP_d2}
	\longpath does not have finite integer index, even on the class of $B$-boundaried graphs where each $v \in V(G) \setminus B$ has degree two.
\end{lemma}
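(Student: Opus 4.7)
My plan is to adapt the construction from the immediately preceding \Cref{full:lem:noFII:LC_d2}, but exploit a different effect of gluing: rather than closing a short or long arc into a cycle, a well-chosen gluing can be used to \emph{concatenate} two long internal paths into one even longer path. I would fix $B=\{x_1,x_2,x_3,x_4\}$ and, for each $i\in\mathbb{N}$, define $G^i_B$ to consist of the two vertex-disjoint paths $(x_1,a_1,\dots,a_i,x_3)$ and $(x_2,b_1,\dots,b_i,x_4)$. By construction every internal vertex has degree exactly two, so each $G^i_B$ lies in the stated restricted class.

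To show the $G^i_B$ fall into distinct equivalence classes, I would compute $\OPT_\LP$ under two carefully chosen gluings. First, take $L_B$ to be the boundaried graph on $B$ with no edges: then $G^i_B\oplus L_B$ is simply the disjoint union of two paths of $i+2$ vertices each, so $\OPT_\LP(G^i_B\oplus L_B)=i+2$. Second, take $R_B$ to be the boundaried graph on $B$ with the single edge $\{x_2,x_3\}$: then $G^i_B\oplus R_B$ is the single Hamiltonian path $(x_1,a_1,\dots,a_i,x_3,x_2,b_1,\dots,b_i,x_4)$, yielding $\OPT_\LP(G^i_B\oplus R_B)=2i+4$.

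For any distinct $i,j\in\mathbb{N}$, the offset under $L_B$ is $i-j$ while the offset under $R_B$ is $2(i-j)$. These disagree, so no single $\Delta\in\mathbb{Z}$ can play the role of the witness required by the definition of gluing equivalence with respect to $\LP$, which forces $G^i_B\not\equiv G^j_B$ and thus produces infinitely many equivalence classes.

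The main subtlety I expect is the following: in contrast to \longcycle, a long internal path is already present as a subgraph of \emph{any} glued graph and therefore always contributes the same $\Theta(i)$ to $\OPT_\LP$, regardless of $H_B$. To defeat finite integer index one cannot let a gluing ``hide'' the long structure; instead the gluing must change the \emph{slope} with which $i$ enters $\OPT_\LP$. Using two parallel $i$-length paths that the gluing can either leave disjoint or join end-to-end through the boundary achieves exactly this, giving slopes $1$ and $2$ and hence the required mismatch of offsets between $i$ and $j$.
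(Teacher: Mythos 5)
Your construction is correct and establishes the lemma, but by a genuinely different mechanism than the paper's. The paper fixes $B=\{x_1,x_2,x_3\}$, puts a single internal path of length $i+2$ from $x_2$ to $x_3$ inside $G^i_B$, and then glues on a dangling path of length $h$ hanging off the otherwise isolated vertex $x_1$: with $h=0$ the offset between $G^i_B$ and $G^j_B$ is $i-j$, while with $h=i+3$ the dangling path \emph{saturates} both graphs to the same optimum and the offset collapses to $0$. You instead keep two parallel internal paths and let the gluing control whether they remain disjoint (so $\OPT_\LP$ grows with slope $1$ in $i$) or are chained through the single boundary edge $\{x_2,x_3\}$ into one Hamiltonian path (slope $2$ in $i$), giving offsets $i-j$ and $2(i-j)$. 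Either mismatch forces $G^i_B\not\equiv G^j_B$. The paper's saturation gadget is a hair more economical ($|B|=3$ versus $4$) and reuses the template of \Cref{full:lem:noFII:LC_d2}, whereas your concatenation gadget arguably better isolates the path-specific effect of gluing and makes the degree-two restriction on $V(G)\setminus B$ transparently satisfied.

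One small arithmetic slip that does not affect the argument: by the appendix definition, a path $P=(v_1,\dots,v_r)$ has value $r-1$, i.e., the number of edges. The disjoint union $G^i_B\oplus L_B$ therefore has $\OPT_\LP=i+1$, not $i+2$, and the Hamiltonian path in $G^i_B\oplus R_B$ has value $2i+3$, not $2i+4$. Since you only compare differences across $i$ and $j$, these additive constants cancel and your offsets $i-j$ and $2(i-j)$ are still correct.
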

\begin{proof}
	Fix $B = \{x_1, x_2, x_3\}$ and define for each $i \in \N$ the graph $G^i_B$ with additional vertices $b_1, \dots, b_{i+1}$ and edges such that $(x_2, b_1, \dots, b_{i+1}, x_3)$ forms a path of length $i+2$. Further let $H^i_B$ have additional vertices $c_1, \dots, c_i$ such that $(x_1, c_1, \dots, c_i)$ forms a path of length $i$. Now given distinct $i, j \in \N$ (and assume without loss of generality that $i > j$) and $h = i+3$ it is easy to see that $\OPT(G^i_B \oplus H^h_B) = h = \OPT(G^j_B \oplus H^h_B)$, while $\OPT(G^i_B \oplus H^0_B) = i + 2 > j + 2 = \OPT(G^j_B \oplus H^0_B)$, implying that $G^i_B \not\equiv G^j_B$.
\end{proof}

By \Cref{full:lem:BK-index} and \Cref{full:lem:noFII:LC_d2,full:lem:noFII:LP_d2} we get:

\begin{theorem}
	\longcycledeg and \longpathdeg do not admit polynomial boundaried kernelization.
\end{theorem}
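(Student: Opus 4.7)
The plan is to derive the theorem as a direct corollary of Lemma \ref{full:lem:BK-index} together with Lemmas \ref{full:lem:noFII:LC_d2} and \ref{full:lem:noFII:LP_d2}. In fact the argument would yield the stronger conclusion that these problems admit no boundaried kernelization at all (not merely no polynomial one), since the preceding non-finite-index lemmas produce infinitely many equivalence classes already at a fixed boundary size.

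First I would verify that Lemma \ref{full:lem:BK-index} is applicable to the parameterization $\#v, \text{deg}(v) \neq 2$. The associated pure graph minimization problem $\rho$ returns the set of vertices of degree different from $2$. On the class $\cC$ of boundaried graphs $G_B$ in which every vertex of $V(G) \setminus B$ has degree $2$ in $G$, every vertex of degree $\neq 2$ is forced to lie in $B$, so there is always a $\rho$-solution $s$ with $\rho(G, s) \leq |B|$. Hence Lemma \ref{full:lem:BK-index} applies with $g(|B|) = |B|$.

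Next I would assume, for contradiction, that \longcycledeg admits a boundaried kernelization of (polynomial) size $f$. Lemma \ref{full:lem:BK-index} then forces the number of equivalence classes of $\equiv^\cC_{\longcycle, B}$ to be at most $\Oh(2^{f(2|B|)^2})$, which is finite for each fixed $|B|$. However, Lemma \ref{full:lem:noFII:LC_d2} exhibits, already for $|B| = 4$, an infinite family $\{G^i_B\}_{i \in \N}$ of pairwise non-equivalent boundaried graphs that all belong to $\cC$, contradicting this finiteness. The same argument applies verbatim to \longpathdeg using Lemma \ref{full:lem:noFII:LP_d2} with $|B| = 3$.

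There is no substantial obstacle remaining: all nontrivial work has been done in the two preceding lemmas, where the separating instances (the gadgets $L_B, R_B$ for \longcycle and the path gadgets $H^h_B$ for \longpath) are tailored so that the optimum cycle, respectively path, length after gluing depends sensitively on the number of degree-$2$ vertices internal to $G^i_B$. Packaging these non-equivalence witnesses into the above contradiction is a mechanical application of Lemma \ref{full:lem:BK-index}.
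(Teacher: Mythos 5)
Your proof is correct and matches the paper's own argument exactly: the paper also derives this theorem as an immediate corollary of Lemma~\ref{full:lem:BK-index} combined with Lemmas~\ref{full:lem:noFII:LC_d2} and~\ref{full:lem:noFII:LP_d2}. Your observation that the argument in fact rules out \emph{any} boundaried kernelization (not just polynomial ones) is also accurate, and is consistent with the phrasing used for the other lower-bound theorems in this section, where the word ``polynomial'' is dropped.
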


\subsection{Dominating Set}
One especially interesting case in our series of lower bounds for boundaried kernelization is \dominatingsetvc, as \dominatingset is known to have finite integer index (see \cite{DBLP:journals/jacm/BodlaenderFLPST16}), but by showing, that it does not have single-exponential integer index, we can still rule out any polynomial boundaried kernelization for \dominatingsetvc. This time, we cannot choose $B$ to consist just of one or two vertices, and instead need to work with $B$ having arbitrary size.

\begin{lemma}\label{full:lem:noSEII:DS_vc}
	\dominatingset does not have single-exponential integer index, even on $\graphclass{independent}_B$.
\end{lemma}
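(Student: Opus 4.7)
The plan is to exhibit $2^{2^n}$ boundaried graphs in $\graphclass{independent}_B$ with $|B| = n$ that fall into pairwise distinct gluing-equivalence classes for \dominatingset, which is super-single-exponential in $n$ and thereby forbids any bound $\Oh(c^{p(n)})$ with $c$ constant and $p$ polynomial. For each family $\cF \subseteq 2^B$ (with $B$ an independent set of size $n$), I construct $G^\cF_B$ by taking $B$ together with, for each $S \in \cF$, a single additional vertex $v_S$ whose open neighborhood is exactly $S$ (as in the right-hand illustration of \Cref{full:fig:ds_ce}); the non-boundary vertices form an independent set, so $G^\cF_B \in \graphclass{independent}_B$.

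To distinguish two families $\cF_1 \neq \cF_2$, I associate to each $G_B$ its signature $\sigma_G(D, T)$ defined as the minimum number of vertices of $V(G) \setminus B$ needed to dominate both $V(G) \setminus B$ and $T \subseteq B \setminus D$, given that the boundary part $D$ is already committed to the solution. For $G^\cF$ this takes the clean form $\sigma_{G^\cF}(D, T) = |\{S \in \cF : S \cap D = \emptyset\}| + \gamma(D, T)$, where $\gamma(D, T)$ is a small residual set-cover term, and $\OPT_\DS(G_B \oplus H_B)$ is the min-plus convolution of $\sigma_G$ and $\sigma_H$. A Möbius-inversion argument on the subset lattice shows that if $\cF_1 \neq \cF_2$ then no additive constant $\Delta$ reconciles $\sigma_{G^{\cF_1}}$ and $\sigma_{G^{\cF_2}}$: restricting (if needed) to families $\cF \supseteq \{\{x\} : x \in B\}$ — still $2^{2^n - n}$ many — the partial signature $S \mapsto \sigma_{G^\cF}(B \setminus S, S) = |\{S' \in \cF : S' \subseteq S\}|$ Möbius-inverts uniquely to the indicator of $\cF$, so two distinct such families yield signatures that genuinely differ at some pair $(D_0, T_0)$.

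To convert this signature-level difference into a gluing witness I construct, for any such $(D_0, T_0)$, a test gadget $H^{(D_0, T_0)}_B$ that attaches a pendant at each $x \in D_0$ (forcing $x$ into the dominating set in the WLOG sense), attaches nothing at $x \in T_0$ (so $x$ must be dominated from $G$'s side), and attaches a ``private dominator'' — a fresh neighbor of $x$ equipped with its own pendant — at every remaining $x \in B \setminus D_0 \setminus T_0$ (so $x$ is dominated from $H$'s side at a fixed cost of one vertex). Under this gadget, and provided the optimum commits to $D = D_0$, the value of $\OPT_\DS(G^\cF \oplus H^{(D_0, T_0)})$ equals $|D_0| + \sigma_{G^\cF}(D_0, T_0)$ plus a constant depending only on $H^{(D_0, T_0)}$, exposing the signature value at the chosen $(D_0, T_0)$. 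Comparing with the trivial gluing $H^0_B = B$ then yields inconsistent offsets between $G^{\cF_1}$ and $G^{\cF_2}$, ruling out gluing-equivalence.

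The main obstacle is verifying that $D = D_0$ is indeed the unique optimum under the forcing gadget: a deviation $D \supsetneq D_0$ costs $|D \setminus D_0|$ extra boundary vertices while possibly saving on the $G$-side signature, and one has to check that the potential saving is strictly smaller. Choosing $(D_0, T_0) = (B \setminus S^*, S^*)$ for $S^*$ a $\subseteq$-minimal element of $\cF_1 \triangle \cF_2$ makes the Möbius-inverted difference of $\sigma_{G^{\cF_1}} - \sigma_{G^{\cF_2}}$ at $(D_0, T_0)$ equal to exactly $\pm 1$, which one controls by a direct case analysis on whether $D$ includes a vertex of $S^*$ and how that interacts with the single distinguishing vertex $v_{S^*}$ (present in $G^{\cF_1}$, absent in $G^{\cF_2}$). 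The $\infty$ entries that arise when $\bigcup \cF \neq B$ are sidestepped by the singleton-containing restriction above.
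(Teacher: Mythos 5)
Your high-level strategy matches the paper's: exhibit a super-single-exponential number of $B$-boundaried graphs by attaching, for each subset $S$ in some family $\cF \subseteq 2^{B}$, a non-boundary vertex $v_S$ with $N(v_S)=S$, and distinguish them by gluing pendant gadgets to $B$. Your gadget $H^{(D_0,T_0)}_B$ with $(D_0,T_0)=(B\setminus S^*, S^*)$ is exactly the paper's $H^{P}_B$ with $P=B\setminus S^*$, and the Möbius-inversion/signature framing, while a nice way to see that distinct families \emph{should} be distinguishable, does not carry extra weight in the eventual case analysis.

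There is, however, a gap that actually breaks the argument. The singleton-containing restriction you impose to avoid $\infty$ signature entries is fatal. Once $\cF\supseteq\{\{x\}:x\in B\}$, the vertex $v_{\{x\}}$ is a degree-one neighbour of $x$ for every $x\in B$, so for any $H_B$ every dominating set of $G^{\cF}_B\oplus H_B$ may be assumed w.l.o.g.\ to contain all of $B$ (exchanging $v_{\{x\}}$ for $x$ never hurts and is always possible). But $B$ already dominates every $v_S$ with $S\neq\emptyset$, so $\OPT_\DS(G^{\cF}_B\oplus H_B)$ equals $|B|$ plus a term depending only on the part of $H$ not dominated by $B$ --- it does not depend on $\cF$ at all. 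Hence all singleton-containing families are pairwise gluing-equivalent with offset $0$, and your $2^{2^n-n}$ candidate classes collapse to a single one. Concretely, with $n=3$, $\cF_1=\{\{1\},\{2\},\{3\},\{1,2\}\}$, $\cF_2=\{\{1\},\{2\},\{3\}\}$, $S^*=\{1,2\}$, both $G^{\cF_i}_B\oplus H^{(\{3\},\{1,2\})}_B$ have optimum $3$ (just take $B$), so the gadget you propose does not separate them, and indeed nothing does.

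The missing ingredient, which the paper supplies, is to take every attached set of one \emph{fixed} size $\lfloor |B|/2\rfloor$ (so at least $2$ once $|B|\geq 4$). This simultaneously keeps the count $\binom{|B|}{\lfloor |B|/2\rfloor}$ superpolynomial; removes all degree-one $v_S$, so the optimum is not forced onto $B$; ensures that with pendants on $B\setminus I$ the only possibly-undominated $v_S$ is $v_I$ itself, since $S\subseteq I$ together with $|S|=|I|$ forces $S=I$; and ensures that when $v_I$ is absent no single vertex can dominate all of $I$, because every other $v_J$ has $|J|=|I|$ but $J\neq I$. This uniform-size choice is exactly what makes the ``optimum commits to $D_0$'' step --- the one you flagged as the main obstacle --- actually provable, and it is the step your write-up leaves as an unverified case analysis.
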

\begin{proof}
	Without loss of generality, let $B = \{x_1, \dots, x_q\}$ for some arbitrary, but fixed $q \in \N$.	We give a set $\cQ$ of $2^{\binom{q}{\lfloor q / 2 \rfloor}}$-many pairwise non-equivalent (with respect to $\equiv$) boundaried graphs. It holds that $t \coloneq \binom{q}{\lfloor q/2 \rfloor} \geq (\frac{q}{\lfloor q/2 \rfloor})^{\lfloor q/2 \rfloor} \geq 2^{\lfloor q/2 \rfloor} \in \omega(|B|^c)$ for any constant $c$. Hence, $|\cQ|$ is not single-exponential.
	Let $\cD$ be the family of all size-$\lfloor q/2 \rfloor$ subsets of $[q] = \{1, \dots, q\}$. For any $\cI \subseteq \cD$ let $G^\cI_B$ be a $B$-boundaried graph with additional vertices $y_I$ for every $I \in \cI$, each of which is adjacent to vertices $x_i$ with $i \in I$. Since the vertices $y_I$ have no edges between each other, it clearly holds that $G^\cI_B \in \graphclass{independent}_B$. We set $\cQ \coloneq \{G^\cI_B \mid \cI \subseteq \cD\}$. Observe that $|\cQ| = 2^t$, so it remains to show that different $G^\cI_B, G^\cJ_B \in \cQ$ are not gluing-equivalent. Therefore, we define for each $P \subseteq [q]$ a $B$-boundaried graph $H^P_B$ with additional vertices $z_i$ for each $i \in P$, which is adjacent only to $x_i$. Observe that after gluing any $G^\cI_B$ to $H^P_B$, there is a minimum dominating set containing $x_i$ for every $i \in P$, as it is adjacent to a vertex of degree one, which needs to be dominated anyway. 
	
	Let $\cI$ and $\cJ$ be two different subsets of $\cD$. We show that $G^\cI_B \not\equiv G^\cJ_B$. First, note that both $G^\cI_B \oplus H^{[q]}_B$ and $G^\cJ_B \oplus H^{[q]}_B$ have $B$ as an optimum solution: By previous argumentation exists an optimum solution containing all $x_i$ with $i \in [q]$, and all $y_I, I \in \cI$ are adjacent to some vertices in $B$, thus all vertices are dominated by $B$. This way we get $\OPT(G^\cJ_B \oplus H^{[q]}_B) - \OPT(G^\cI_B \oplus H^{[q]}_B) = 0$.
	
	As $\cI \neq \cJ$, we can assume without loss of generality, that exists some $I \in \cI \setminus \cJ$. Consider the two graphs $F^\cI = G^\cI_B \oplus H^{[q] \setminus I}_B$ and $F^\cJ = G^\cJ_B \oplus H^{[q] \setminus I}_B$. We show that $\OPT(F^\cJ) - \OPT(F^\cI) \neq 0$, which proves that $G^\cI_B$ and $G^\cJ_B$ are not gluing equivalent. For this, first observe that $\OPT(F^\cI) \leq \lceil q/2 \rceil + 1$, since the set $\{y_I\} \cup \{x_h \mid h \in [q] \setminus I\}$ is a feasible solution of that size: for each $h \in [q] \setminus I$ it holds that $x_h$ dominates $x_h$ and $z_h$; each $y_J$ with $J \in \cI \setminus \{I\}$ is also dominated by some $x_h$ with $h \in [q] \setminus I$, and $y_I$ together with all $x_r$ with $r \in I$ are dominated by $y_I$. However, it holds that $\OPT(F^\cJ) > \lceil q/2 \rceil + 1$. By earlier argumentation we know it is safe to assume that for each $h \in [q] \setminus I$ the vertex $x_h$ is in the optimum solution. As a next step, we would need to dominate all $x_r$ with $r \in I$ using only one vertex. However, by construction no vertex in $F^\cJ$ is adjacent to all such $x_r$.
\end{proof}

Using \Cref{full:lem:BK-index,full:lem:noSEII:DS_vc} we get the following theorem.
\begin{theorem}
	The parameterized problem \dominatingsetvc does not admit polynomial boundaried kernelization.
\end{theorem}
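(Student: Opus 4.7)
The plan is to derive the theorem from Lemma \ref{full:lem:noSEII:DS_vc} via the general machinery of Lemma \ref{full:lem:BK-index}, by contradiction. Suppose that \dominatingsetvc admits a polynomial boundaried kernelization of size bounded by some polynomial $p$. Since the parameterizing problem $\rho$ here is \vertexcover, I need a (boundaried) class on which $\rho$-solutions are of size at most $g(|B|)$ for some function $g$, so that Lemma \ref{full:lem:BK-index} yields a single-exponential upper bound on the number of equivalence classes (the paper explicitly notes immediately before that lemma that the same bound also applies to $\equiv^{\cC_B}_{\Pi, B}$, not only $\equiv^{\cC}_{\Pi, B}$).

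The natural choice is $\graphclass{independent}_B$: for any $G_B \in \graphclass{independent}_B$ the boundary $B$ is itself a vertex cover of $G$, because every edge of $G$ has at least one endpoint in $B$ (the graph $G - B$ has no edges). Hence I may take $g(|B|) = |B|$. Plugging this into Lemma \ref{full:lem:BK-index} yields that $\equiv^{\graphclass{independent}_B}_{\DS, B}$ has at most $\Oh(2^{p(2|B|)^2})$ equivalence classes, which is single-exponential in $|B|$.

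On the other hand, Lemma \ref{full:lem:noSEII:DS_vc} produces $2^{\binom{|B|}{\lfloor |B|/2 \rfloor}}$ pairwise non-equivalent $B$-boundaried graphs that all lie in $\graphclass{independent}_B$. Since $\binom{|B|}{\lfloor |B|/2 \rfloor} = \Theta(2^{|B|}/\sqrt{|B|})$, this lower bound on the index is doubly-exponential in $|B|$ and thus exceeds the single-exponential upper bound $\Oh(2^{p(2|B|)^2})$ for every sufficiently large $|B|$, contradicting the assumption.

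The only conceptual point worth flagging is that Lemma \ref{full:lem:BK-index} is stated in terms of a pure graph class $\cC$, whereas the lower bound instances live in the boundaried class $\graphclass{independent}_B$; however, this is exactly the extension that the paper explicitly endorses in the paragraph preceding the statement of Lemma \ref{full:lem:BK-index}, and its proof goes through verbatim once one observes that $B$ serves as a size-$|B|$ vertex cover for every member of $\graphclass{independent}_B$. No further obstacle is expected: the remainder is the elementary comparison between a doubly-exponential and a single-exponential quantity.
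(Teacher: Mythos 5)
Your proof is correct and follows precisely the route the paper takes: the paper's proof of this theorem is the single line ``Using \Cref{full:lem:BK-index,full:lem:noSEII:DS_vc} we get the following theorem,'' and you have filled in exactly the intended details, including the observation that $B$ is a vertex cover for every $G_B \in \graphclass{independent}_B$ (giving $g(|B|)=|B|$) and the comparison between the polynomial-size-induced single-exponential index bound $\Oh(2^{p(2|B|)^2})$ and the doubly-exponential lower bound $2^{\binom{|B|}{\lfloor |B|/2\rfloor}}$ from Lemma~\ref{full:lem:noSEII:DS_vc}.
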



\section{Improved kernelization for VC[mod to constant $\td$]}\label{full:section:VC_constTD}
By applying the idea of boundaried kernelization we give an improved kernel for \vertexcovertdd. Bougeret and Sau \cite{DBLP:journals/algorithmica/BougeretS19} gave the first polynomial kernel for this problem on general graphs, which was then improved and generalized by Hols et al.~\cite{DBLP:journals/siamdm/HolsKP22}. The kernelization works in a recursive manner: If $d=1$, then apply the vertex-linear kernel for \vertexcovervc. Else, given a graph $G$ and modulator $X$ to constant treedepth $d > 1$, reduce the number of connected components in $G-X$ to at most $\mathcal{O}(|X|^{\cramped{2^{d-2}+1}})$, and put from each remaining component the root vertex of its treedepth decomposition to the modulator, reducing the task to computing a kernel for \parameterizedproblem{Vertex Cover}{mod to $\td\leq d-1$}. In total, this yields a kernel with $\mathcal{O}(|X|^{\cramped{2^{\Theta(d^2)}}})$ vertices. An issue with this approach is that the roots of all components are moved to the modulator simultaneously, leading to an unnecessarily large blow-up of the number of components in each recursive step. Instead, we take a depth-first approach: At each recursive step, after having reduced the number of components, we partition the set of remaining components into groups of size at most $|X|$, and recursively process each group individually as a boundaried graph, with $X$ and the roots of the respective components forming the boundary and modulator to treedepth $d-1$. This way, we achieve a kernel with $\Oh(|X|^{2^{d-1}})$ vertices, which comes much closer to the best known lower bound of $\Oh(|X|^{2^{d-2}+1})$ vertices \cite[Theorem 1.1]{DBLP:journals/siamdm/HolsKP22}. The correctness of this approach follows from \Cref{full:thm:VC_vc} and \Cref{full:lem:bs_cc}, a \textit{gluing}-equivalence restatement for bounding the number of connected components in $G-X$. We remind that the notion of chunks, blocking sets etc., was defined in \Cref{full:section:vc}.

Let $A$ be a bipartite graph, whose vertex set consists of the connected components of $G-B$ on one side, and of the chunks of $B$ on the other side. Add an edge between component $F$ and chunk $Z$ in $A$, if and only if $N_F(Z)$ is a blocking set of $F$, i.e., if it holds that $\conf{F}{Z} > 0$. Observe that by assumption of \Cref{full:lem:bs_cc}, we can compute $\conf{F}{Z}$ in polynomial time. Using the Hopcroft-Karp algorithm for maximum matching, find a set $\cX' \subseteq \cX$ such that $|N_A(\cX')| < |\cX'|$ and a matching $M$ in $A$ that saturates $\hat{\cX} = \cX \setminus \cX'$. Note that if there exists an $\cX$-saturating matching for $A$, then it holds that $\cX' = \emptyset$ and $\hat{\cX} = \cX$. Denote by $\mathcal{F}_D$ the set of connected components of $G-B$ that are neither in the neighborhood of $\cX'$, nor contained in a matching edge of $M$.

\begin{lemma}[Analogue of {\cite[Lemma 3.4]{DBLP:journals/siamdm/HolsKP22}}]\label{full:lem:vctdd_hitchunk}
	For every $H_B$ there exists an optimum vertex cover $S$ of $G_B \oplus H_B$ with $S \cap Z \neq \emptyset$ for all $Z \in \hat{\cX}$.
\end{lemma}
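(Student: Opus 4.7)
I would prove the lemma by an exchange argument, paralleling the approach of Hols, Kratsch, and Pieterse~\cite{DBLP:journals/siamdm/HolsKP22}. Fix an arbitrary $H_B$, let $Q := G_B \oplus H_B$, and among all optimum vertex covers of $Q$ choose $S$ hitting the maximum number of chunks of $\hat{\cX}$. Suppose for contradiction that there is $Z \in \hat{\cX}$ with $Z \cap S = \emptyset$, and let $F_Z := M(Z)$; by the matching property, the components $F_Z$ for different $Z \in \hat{\cX}$ are pairwise distinct. Since $Z \cap S = \emptyset$, every edge in $Q$ between $Z$ and $V(F_Z)$ must be covered on the $V(F_Z)$-side, so $N_{F_Z}(Z) \subseteq S$; and because $\{F_Z,Z\} \in E(A)$ means $N_{F_Z}(Z)$ is blocking in $G[F_Z]$, we obtain $|S \cap V(F_Z)| > \OPT_\VC(G[F_Z])$.

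I would then perform a swap localized to $F_Z$: set $S' := (S \setminus V(F_Z)) \cup T_Z \cup Z$, where $T_Z$ is a minimum vertex cover of $G[F_Z]$ subject to containing $N_{F_Z}(B \setminus (S \cup Z))$. Placing $Z$ into the boundary portion covers the edges from $Z$ to $V(F_Z)$, so $T_Z$ only needs to take care of internal $F_Z$-edges and edges going to $B \setminus (S \cup Z)$. Consequently $S'$ is a vertex cover of $Q$ that hits $Z$ as well as every chunk that $S$ was already hitting. The remaining task is the size bound $|S'| \leq |S|$, which reduces to the saving inequality
\[
|Z| \;\leq\; \phi(F_Z, B\setminus S) - \phi(F_Z, B\setminus (S\cup Z)), \qquad \phi(F,P) := \OPT_\VC(G[F]-N_F(P)) + |N_F(P)|.
\]
If it holds, then $S'$ is an optimum vertex cover hitting strictly more chunks of $\hat{\cX}$ than $S$, contradicting the maximality of $S$.

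The main obstacle is establishing this saving inequality, because a priori other vertices of $(B\setminus S)\setminus Z$ may share all of $Z$'s $F_Z$-neighborhood and thereby make the right-hand side vanish. This is where the matching $M$ enters decisively: any competing chunk $Z' \in \hat{\cX}$ whose neighborhood $N_{F_Z}(Z')$ is also blocking in $F_Z$ would be adjacent to $F_Z$ in $A$, and $M$ can match $F_Z$ to at most one of $Z, Z'$. A Hall-type argument, analogous to the one behind the definition of $\cX'$, then forces the competitors into $\cX'$, contradicting their membership in $\hat{\cX}$. For the general case one simultaneously swaps all unhit chunks $U = \{Z \in \hat{\cX}: Z \cap S = \emptyset\}$ on their pairwise-disjoint matched components $\{F_Z : Z\in U\}$ and applies this Hall accounting to the incidence structure inside $A$, yielding a total saving matching $\sum_{Z\in U}|Z|$ and completing the exchange.
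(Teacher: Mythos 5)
There is a genuine gap, and it is in the counting. Your swap is $S' := (S \setminus V(F_Z)) \cup T_Z \cup Z$, which adds \emph{all} of $Z$ to the cover, so you need the component $F_Z$ to repay you $|Z|$ vertices. But the only leverage you have from the matching edge $\{Z,F_Z\}\in E(A)$ is that $N_{F_Z}(Z)$ is a blocking set in $F_Z$, which gives $|S\cap V(F_Z)| \geq \OPT_\VC(F_Z)+1$, i.e., a guaranteed saving of exactly \emph{one} vertex once you replace $S\cap V(F_Z)$ by an optimal cover of $F_Z$. Your ``saving inequality'' $|Z|\leq \phi(F_Z,B\setminus S)-\phi(F_Z,B\setminus(S\cup Z))$ simply need not hold: for example if $N_{F_Z}(Z)$ is a single vertex (a minimal blocking set of size $1$) then the right-hand side is at most $1$ while $|Z|$ can be up to $b_{\Ctdd}=2^{d-2}+1$. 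The Hall/matching considerations you invoke to rescue it control which \emph{other} chunks can still threaten $F_Z$; they do not increase the per-component saving beyond one, so the deficit of $|Z|-1$ remains.

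The fix — and this is what the paper does — is to notice that the lemma only asks for $S'\cap Z\neq\emptyset$, not $Z\subseteq S'$. So it suffices to add \emph{one arbitrary vertex} from each unhit chunk. The paper starts from any optimum $S$, lets $\tilde{\cX}$ be the unhit chunks and $\tilde{\cF}$ their $M$-partners (distinct since $M$ is a matching, so $|\tilde{\cF}|=|\tilde{\cX}|$), adds one vertex per $Z\in\tilde{\cX}$ to get $\hat{S}$ with $|\hat{S}|\leq|S|+|\tilde{\cX}|$, and then replaces $S\cap V(F)$ by an optimum vertex cover $S_F\supseteq N_F(B\setminus\hat{S})$ for each $F\in\tilde{\cF}$. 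Because $|S\cap V(F)|>\OPT(F)=|S_F|$, each replacement saves at least one vertex, so $|S'|\leq|\hat{S}|-|\tilde{\cF}|\leq|S|$, and $S'$ is verified to be a vertex cover by the usual case check. No extremal ``maximum number of chunks hit'' choice is needed, and the matching enters only to make $\tilde{\cF}\to\tilde{\cX}$ a bijection and to guarantee (via the Hall structure) that the constrained optimum $S_F$ exists at all.
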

\begin{proof}
	Fix any boundaried graph $H_B$ to be glued to $G_B$.
	We will go along the proof of Lemma 3.4 in the paper of Hols et al.
	
	Let $S$ be an optimum vertex cover of $G_B \oplus H_B$. If $S \cap Z \neq \emptyset$ for all $Z \in \hat{\cX}$, then we are done. Thus, assume that there exists at least one set $Z \in \hat{X}$ such that $S \cap Z = \emptyset$. Let $\tilde{\cX} = \{Z \in \hat{\cX} \mid S \cap Z = \emptyset\}$ be the set of chunks with $S \cap Z = \emptyset$, and let $\tilde{\cF} = \{H \in \cF \mid \exists Z \in \tilde{\cX}: \{Z, H\} \in M \}$ be the set of connected components of $G-B$, that are matched to a vertex in $\tilde{X}$ by $M$. Observe that $|\tilde{X}| = |\tilde{\cF}|$.
	Hols et al.\ have shown that for every $F \in \tilde{\cF}$ it holds that $|V(F) \cap S| > \OPT(F)$.
	
	Now we construct an optimum vertex cover $S'$ of $G_B \oplus H_B$ which fulfills the properties of the lemma. First, we add from each chunk $Z \in \tilde{\cX}$ one arbitrary vertex to the set $S$. We denote the resulting set $\hat{S}$. It holds that $|\hat{S}| \leq |S| + |\tilde{X}|$. Furthermore, Hols et al.\ have shown that for every $F \in \tilde{\cF}$ there exists an optimum vertex cover $S_F$ of $F$ which contains the set $Y_F = N(B \setminus \hat{S}) \cap V(F)$. This is exactly what we do in our next step: we replace for every connected component $F \in \tilde{\cF}$ the set $S \cap V(F)$ in $\hat{S}$ by the optimum vertex cover $S_F$ in $F$, that contains the set $Y_F$. Denote the resulting set by $S'$. As we have seen earlier, that for $S$ and every $F \in \tilde{\cF}$, the set $V(F) \cap S$ was larger than an optimum vertex cover for $F$, and we have replaced this part by an optimum vertex cover for $F$, it holds that $|S'| \leq |\hat{S}| - |\tilde{\cF}| \leq |S| + |\tilde{\cX}| - |\tilde{\cF}| = |S|$.
	
	It remains to show that $S'$ is a vertex cover of $G_B \oplus H_B$. Since $S'$ contains all vertices from $S \cap V(H)$, we know that every edge inside of $H$ is still covered by $S'$. Every edge inside of $(G_B \oplus H_B)[B]$ is covered by $S'$, since $S' \cap B$ is a superset of $S \cap B$. And at last, for every component $F \in \tilde{\cF}$, the set $S'$ contains all vertices of $S_H$, which covers all edges inside of $F$, and contains the neighborhood of all $B$-vertices that are not contained in $S'$.
\end{proof}

\begin{redrule}[Analogue of {\cite[Reduction Rule 3.3]{DBLP:journals/siamdm/HolsKP22}}]\label{full:rr:vctdd}
	Delete all connected components in $\mathcal{F}_D$ from $G$ and increase $\Delta$ by $\OPT(\cF_D)$.
\end{redrule}
\begin{proof}[Proof of gluing safeness]
	Fix any boundaried graph $H_B$ to glue with $G_B$.  Let $G'_B$ be the reduced instance, i.e., $G' = G - \cF_D$. It obviously holds that $\OPT(G'_B \oplus H_B) \leq \OPT(G_B \oplus H_B) - \OPT(\cF_D)$. So let us show, that the other direction also holds, i.e., that $\OPT(G_B \oplus H_B) \leq \OPT(G'_B \oplus H_B) + \OPT(\cF_D)$. Note that with the same choice of $\hat{\cX}$ we can also use \Cref{full:lem:vctdd_hitchunk} on $G'_B$. So let $S'$ be a minimum vertex cover for $G'_B \oplus H_B$ with $S' \cap Z \neq \emptyset$ for all $Z \in \hat{\cX}$. Note that in $A$, every connected component $F \in \cF_D$ is only adjacent to vertices in $\hat{\cX}$. Since every set $Z \in \hat{\cX}$ has a nonempty intersection with the set $S'$, it holds that there exists an optimum vertex cover $S_F$ of $F$ which contains the set $N_G(B \setminus S') \cap V(F)$ (similar to how it is the case for every $F \in \tilde{\cF}$ in the proof of \Cref{full:lem:vctdd_hitchunk}). Let $S$ be the set that results from adding for each component $F \in \cF_D$ the optimum vertex cover $S_F$ to the set $S'$. By construction of $S$, it is a superset of $S'$ and thus covers all edges in $G'_B \oplus H_B$. Additionally, all the new edges inside of any component $F \in \cF_D$ and between $F$ and $B \setminus S$ are covered by $S_F \subseteq S$. Thus, $S$ is a vertex cover of $G_B \oplus H_B$ of size $|S'| + \OPT(\cF_D)$.
\end{proof}

As \Cref{full:rr:vctdd} works exactly the same as \cite[Reduction Rule 3.3]{DBLP:journals/siamdm/HolsKP22}, we also refer to their proof (see \cite[Lemma 3.10]{DBLP:journals/siamdm/HolsKP22}), that \Cref{full:rr:vctdd} can be applied in polynomial time. Similarly, we refer to Hols et  al.\ for the bound of at most $\Oh(|B|^{b_\cC})$ connected components after having applied \Cref{full:rr:vctdd} (\cite[Theorem 3.8]{DBLP:journals/siamdm/HolsKP22}). 

\begin{lemma}[Analogue of {\cite[Theorem 1.3]{DBLP:journals/siamdm/HolsKP22}}]\label{full:lem:bs_cc}
	Let $\cC$ be a hereditary graph class with minimal blocking set size $b$ on which \vertexcover can be solved in polynomial time. There is a polynomial-time algorithm that given boundaried graph $G_B \in \cC_B$ returns a gluing-equivalent graph $G'_B \in \cC_B$ with the respective offset $\Delta$, such that $G'-B$ has at most $\Oh(|B|^b)$ connected components. 
\end{lemma}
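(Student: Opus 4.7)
The plan is to mirror the approach of Hols et al.~\cite[Theorem 3.8]{DBLP:journals/siamdm/HolsKP22}, but phrased through gluing-equivalence using \Cref{full:rr:vctdd} instead of through a sought-solution bound. First I would construct the bipartite auxiliary graph $A$ defined in the text: on one side the connected components of $G-B$, on the other the chunk set $\cX$ of all independent subsets of $B$ of size at most $b$, with an edge $\{F, Z\}$ whenever $\conf{F}{Z} > 0$. Since $\cC$ is hereditary, both $G[V(F)]$ and $G[V(F)] - N_F(Z)$ lie in $\cC$, so each value $\conf{F}{Z}$ is computable in polynomial time by assumption. There are $|\cX| = \sum_{i=0}^{b}\binom{|B|}{i} = \Oh(|B|^b)$ chunks, so $A$ has polynomial size and is constructed in polynomial time.

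Next I would compute a maximum matching $M$ in $A$ via Hopcroft-Karp and, from the alternating-path structure of $M$, extract a (possibly empty) set $\cX' \subseteq \cX$ with $|N_A(\cX')| < |\cX'|$ such that $M$ saturates $\hat{\cX} := \cX \setminus \cX'$; the standard König/Hall argument guarantees both existence of such $\cX'$ whenever $M$ fails to saturate $\cX$, and its polynomial-time extraction from the residual. Then I would compute $\cF_D$ as the set of connected components of $G-B$ that are neither in $N_A(\cX')$ nor matched by $M$, and apply \Cref{full:rr:vctdd} to delete all components in $\cF_D$ and increase $\Delta$ by $\OPT_\VC(G[V(\cF_D)])$; the latter value is computable in polynomial time as each component lies in $\cC$. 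Call the result $G'_B$. Since we only delete vertices from $G-B$ and $\cC$ is hereditary, $G'_B \in \cC_B$, and gluing-equivalence $G_B \equiv_{\VC,B} G'_B$ with offset $\Delta$ is inherited directly from the gluing safeness of \Cref{full:rr:vctdd}.

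For the component count, any connected component of $G'-B$ is by construction either matched by $M$ or has at least one $A$-neighbor in $\cX'$. The first set has size $|M| \leq |\hat{\cX}| \leq |\cX|$ and the second has size at most $|N_A(\cX')| < |\cX'| \leq |\cX|$, for a total of $\Oh(|\cX|) = \Oh(|B|^b)$, as claimed.

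The main obstacle is not any individual calculation but making sure that the Hall-deficient set $\cX'$ can be produced efficiently and with exactly the right property $|N_A(\cX')| < |\cX'|$; this is the combinatorial heart of the reduction, since it is what licenses discarding precisely the components in $\cF_D$ in \Cref{full:lem:vctdd_hitchunk}. Once this extraction is handled via the standard alternating-BFS from unmatched $\cX$-vertices in a maximum matching, everything else is either a direct consequence of \Cref{full:rr:vctdd}, of hereditariness of $\cC$, or of the chunk-counting estimate.
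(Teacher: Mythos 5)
Your proposal is correct and follows the same route the paper takes: build the bipartite auxiliary graph $A$ on components of $G-B$ versus chunks of $B$, compute a maximum matching, extract the Hall-deficient set $\cX'$ via alternating paths, delete $\cF_D$ via \Cref{full:rr:vctdd}, and bound the surviving components by $|N_A(\cX')| + |M| < 2|\cX| = \Oh(|B|^b)$. The paper simply delegates the polynomial-time applicability of the rule and the component bound to Hols et al.\ rather than spelling out the alternating-path extraction and the counting, which you make explicit.
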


\begin{theorem}\label{full:theorem:PBK:VC_modtdd}
	For any constant $d \in \N_{\geq 1}$, the parameterized problem \vertexcovertdd admits a polynomial boundaried kernelization with $\Oh((|B|+k)^{2^{d-1}})$ vertices.
\end{theorem}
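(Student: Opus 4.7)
I would proceed by induction on $d$. The base case $d=1$ is immediate: since $\cC^{\td \leq 1} = \Cindependent$, the parameterization $\vertexcovertdd$ coincides with $\vertexcovervc$, so Theorem~\ref{full:thm:VC_vc} directly yields a boundaried kernelization with at most $2(|B|+k) = \Oh((|B|+k)^{2^{0}})$ vertices.

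For the inductive step $d \geq 2$, given $G_{B}$ together with a $\cC^{\td \leq d}$-modulator $X$ of size $k$, set $B^{\star} \coloneqq B \cup X$ and $m \coloneqq |B^{\star}| \leq |B|+k$; then $B^{\star}$ is itself a $\cC^{\td \leq d}$-modulator for $G$. Viewing $G$ as a $B^{\star}$-boundaried graph, I first invoke Lemma~\ref{full:lem:bs_cc} with $\cC = \cC^{\td \leq d}$ to obtain a gluing-equivalent $G^{1}_{B^{\star}}$ (with offset $\Delta_{0}$) in which $G^{1} - B^{\star}$ has at most $\Oh(m^{b})$ connected components, where $b = b_{\cC^{\td \leq d}} \leq 2^{d-2}+1$; this blocking-set bound on bounded-treedepth graphs is the structural ingredient that drives the recursion and matches the one used implicitly by Hols et al.~\cite{DBLP:journals/siamdm/HolsKP22}. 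I then partition these components into $\Oh(m^{b-1})$ groups of at most $m$ components each, and for every group $Y$ form the induced $B^{\star}$-boundaried subgraph $G^{1,Y}_{B^{\star}}$. The roots of the treedepth decompositions of the components in $Y$ form a $\cC^{\td \leq d-1}$-modulator of size at most $m$ for $G^{1,Y}_{B^{\star}}$, so the inductive hypothesis produces a gluing-equivalent $\hat{G}^{Y}_{B^{\star}}$ with $\cC^{\td \leq d-1}$-modulator $\hat{X}^{Y}$, both of size $\Oh((|B^{\star}|+m)^{2^{d-2}}) = \Oh(m^{2^{d-2}})$, and a local offset $\Delta_{Y}$.

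Since $G^{1}_{B^{\star}}$ equals the iterated gluing of the $G^{1,Y}_{B^{\star}}$ through $B^{\star}$, repeated application of Lemma~\ref{full:lem:glue_equiv_get_equiv} gives that $\hat{G}_{B^{\star}} \coloneqq \bigoplus_{Y}\hat{G}^{Y}_{B^{\star}}$ satisfies $\hat{G}_{B^{\star}} \equiv_{\VC,B^{\star}} G^{1}_{B^{\star}}$ with combined offset $\sum_{Y}\Delta_{Y}$, and Lemma~\ref{full:lem:BD} transfers this to the smaller boundary: $\hat{G}_{B} \equiv_{\VC,B} G_{B}$ with total offset $\Delta_{0} + \sum_{Y}\Delta_{Y}$. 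The final vertex count is $|B^{\star}| + \Oh(m^{b-1}) \cdot \Oh(m^{2^{d-2}}) = \Oh(m^{b-1+2^{d-2}}) \leq \Oh((|B|+k)^{2^{d-1}})$, where the last inequality uses $b \leq 2^{d-2}+1$; as the output $\cC^{\td \leq d}$-modulator I take $B^{\star} \cup \bigcup_{Y}\hat{X}^{Y}$, of size $\Oh((|B|+k)^{2^{d-1}})$, whose removal leaves a disjoint union of graphs of treedepth at most $d-1$. The main obstacle is establishing the blocking-set bound $b_{\cC^{\td \leq d}} \leq 2^{d-2}+1$ underlying Lemma~\ref{full:lem:bs_cc}; once that is in hand, the remainder is careful bookkeeping on top of Lemmas~\ref{full:lem:bs_cc}, \ref{full:lem:glue_equiv_get_equiv}, and \ref{full:lem:BD}.
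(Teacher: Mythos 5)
Your proof is correct and follows essentially the same route as the paper: reduce to a single boundary $B^\star$ that is also a modulator, invoke Lemma~\ref{full:lem:bs_cc} with the blocking-set bound $b_{\cC^{\td\leq d}}\le 2^{d-2}+1$ to shrink the number of components of $G-B^\star$, pack the surviving components into groups of size $\le |B^\star|$, recursively process each group, and reassemble via Lemmas~\ref{full:lem:glue_equiv_get_equiv} and~\ref{full:lem:BD}. One small wording slip: the roots alone do not form a $\cC^{\td\le d-1}$-modulator for $G^{1,Y}$ (removing only the roots still leaves $B^\star$ of unbounded treedepth); the modulator passed to the recursive call is $B^\star$ together with the roots, which is exactly how the paper frames it (via Lemma~\ref{full:lem:Bismod}, boundary and modulator coincide and equal $B\cup R_i$). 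This does not affect the size accounting, since you already wrote the recursive bound as $\Oh((|B^\star|+m)^{2^{d-2}})$.
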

\begin{proof}
	By \Cref{full:lem:Bismod}, we assume that $G-B$ has treedepth $d$ and show a boundaried kernelization with $\Oh(|B|^{2^{d-1}})$ vertices for this case. We do our proof by induction over $d$. In the base case it holds that $d = 1$, which means that $G-B$ is an independent set. In this case we can use our boundaried kernelization for \vertexcovervc from \Cref{full:section:vc} of vertex-size $\Oh(|B|)$.
	
	Now let $d > 1$ and assume that \parameterizedproblem{Vertex Cover}{mod to $\td\leq d-1$} with $B$ being also the modulator, has a boundaried kernelization of vertex-size $\Oh(|B|^{2^{d-2}})$. We use \Cref{full:lem:bs_cc} and the fact that $\Ctdd$ has minimal blocking set size at most $2^{d-2}+1$ (see \cite[Theorem 1.4]{DBLP:journals/siamdm/HolsKP22}) in order to find a gluing-equivalent boundaried graph $G'_B$ with $\Oh(|B|^{2^{d-2}+1})$ connected components. We partition these components into $t \in \Oh(|B|^{2^{d-2}+1})/|B|) = \Oh(|B|^{2^{d-2}})$ sets $Z_1, \dots, Z_t$ by packing into each $Z_i$, $i\in [t]$ all the vertices of at most $|B|$ components. Let $R_1, \dots, R_t$ be the sets of roots of the treedepth decompositions of $G^1 \coloneq G'[Z_1], \dots, G^t \coloneq G'[Z_t]$. Note that for each $i \in [t]$ it holds that $|R_i| \leq |B|$ and that $G' = G^1_B \oplus \dots \oplus G^t_B$. For each $i \in [t]$ it holds that $\td(G^i - (B \cup R_i)) \leq d-1$, and thus by induction hypothesis, we can compute in polynomial time a graph $\hat{G}^i$ of vertex size $\Oh((|B|+|R_i|)^{2^{d-2}}) = \Oh(|B|^{2^{d-2}})$ and constant $\Delta_i$ with $\OPT_\VC(G^i_{B \cup R_i} \oplus H_{B \cup R_i}) =  \OPT_\VC(\hat{G}^i_{B \cup R_i} \oplus H_{B \cup R_i}) + \Delta_i$ for any $H_{B \cup R_i}$, and thus by \Cref{full:lem:BD} also $G^i_B \equiv_{\VC, B} \hat{G}^i_B$ with same offset. By repeated use of \Cref{full:lem:glue_equiv_get_equiv}, it holds that $G'_B \equiv_{\VC, B} \hat{G}_B$ with $\hat{G} = \hat{G}^1_B \oplus \dots \oplus \hat{G}^t_B$ and offset $\sum_{i = 1}^{t} \Delta_i$. Thus, we can output the boundaried graph $\hat{G}_B$ which has vertex-size $t \cdot \Oh(|B|^{2^{d-2}}) = \Oh(|B|^{2^{d-1}})$.
\end{proof}

By \Cref{full:lem:PBK-PK} this also gives the regular polynomial kernel.

{
	\renewcommand{\thetheorem}{\ref*{full:thm:vc_tdd}}
	\begin{theorem}
		For any constant $d \in \N_{\geq 1}$, the parameterized problem \vertexcovertdd admits a polynomial kernelization with $\Oh(k^{2^{d-1}})$ vertices.
	\end{theorem}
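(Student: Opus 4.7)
The plan is to derive \Cref{full:thm:vc_tdd} as an essentially immediate consequence of \Cref{full:theorem:PBK:VC_modtdd} combined with \Cref{full:lem:PBK-PK}. Since \vertexcover is an \NP-optimization problem and, for any fixed $d$, $\modto{\Ctdd}$ is an \NP-minimization problem (membership in $\Ctdd$ can be tested in polynomial time for constant~$d$), the hypotheses of \Cref{full:lem:PBK-PK} are satisfied, and so a polynomial boundaried kernelization for \vertexcovertdd translates into a regular polynomial kernelization. The quantitative point to verify is that this conversion does not inflate the vertex count beyond $\Oh(k^{2^{d-1}})$.

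Concretely, given an instance $(G, \ell, X, k)$ of \vertexcovertdd with $\Ctdd$-modulator $X$ of size at most $k$, I would set $B = \emptyset$ and invoke the boundaried kernelization of \Cref{full:theorem:PBK:VC_modtdd} on the trivially boundaried graph $G_\emptyset$ together with modulator $X$. This produces a boundaried graph $G'_\emptyset$ on at most $\Oh((|B|+k)^{2^{d-1}}) = \Oh(k^{2^{d-1}})$ vertices, a new modulator $X'$ to $\Ctdd$ of comparable size, and an integer offset $\Delta$ such that for every $H_\emptyset$ one has $\OPT_\VC(G_\emptyset \oplus H_\emptyset) = \OPT_\VC(G'_\emptyset \oplus H_\emptyset) + \Delta$. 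Specializing $H_\emptyset$ to the empty graph gives $\OPT_\VC(G) = \OPT_\VC(G') + \Delta$, so the output instance $(G', \ell - \Delta, X', |X'|)$ is equivalent to $(G, \ell, X, k)$ as an instance of \vertexcovertdd.

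The only remaining work is the standard bookkeeping from the proof of \Cref{full:lem:PBK-PK}, ensuring that the encoding of the new target is also polynomial: if $\ell - \Delta < 0$ then $G'$ has no vertex cover of that size and we emit a trivial NO-instance; if $\ell - \Delta$ exceeds $|V(G')|$ we may cap it at $|V(G')|$, which is safe because we are minimizing and $V(G')$ itself is a vertex cover. After this cap, both the vertex count and the bit-length of the target value are $\Oh(k^{2^{d-1}})$, matching the claimed bound. The argument requires no new ideas beyond the already proved boundaried kernelization and the generic reduction of \Cref{full:lem:PBK-PK}, so there is no real obstacle to overcome.
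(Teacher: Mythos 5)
Your proposal is exactly the paper's argument: the paper derives \Cref{full:thm:vc_tdd} in one line by applying \Cref{full:lem:PBK-PK} to the polynomial boundaried kernelization of \Cref{full:theorem:PBK:VC_modtdd}, and you have simply unfolded that lemma's proof in this special case. The bookkeeping you spell out (setting $B=\emptyset$, gluing the empty graph, adjusting $\ell$ by $\Delta$, capping the target) matches what \Cref{full:lem:PBK-PK} already handles, so the argument is correct and takes the same route.
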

	\addtocounter{theorem}{-1}
}


\section{Conclusion}\label{full:section:conclusion}

Boundaried kernelization is a model for efficient local preprocessing, inspired by protrusions and protrusion replacement used in meta kernelization. It allows for provably effective preprocessing when only local parts of the input exhibit useful structure, while this need not be true for the entire input. We have showed several polynomial-sized boundaried kernelizations as well as a number of unconditional lower bounds ruling out such preprocessing for other problems. This is unlike the typical use of protrusions where one can appeal to finite integer index to obtain a ``small'' replacement, which will usually be of size exponential in treewidth and boundary size (but still constant). That being said, finite (integer) index is still a prerequisite for having a boundaried kernelization of any size. Similarly, a (polynomial) kernelization is required for a problem to admit a (polynomial) boundaried kernelization.

While the present definition of boundaried kernelization is restricted to pure graph problems (parameterized by some graph optimization problem), there is no reason not to pursue such local kernelization also in more general settings. Rather, it was appealing to keep the notation (somewhat) concise by relying on the robust setting of boundaried graphs and the established notion of finite (integer) index. Moreover, even in the restricted setting of pure graph problems, we were able to showcase a number of different behaviors of target problems. Last but not least, this allowed us to define what is a boundaried kernelization for a given (parameterized) problem, rather than having to first define a specific boundaried version of each problem.

That being said, to go beyond pure graph problems necessitates clean definitions for what a local part of an instance should be. This is probably best formalized by having a problem-specific gluing operator on the level of entire instances. (As a simple example, consider gluing two boundaried instances \steinertree, where we should glue the graphs but also take the union of the respective terminal sets.) Ideally, such an operator has some form of neutral element/instance so that the connection to (regular) kernelization can be retained. As future work, we aim to extend boundaried kernelization (at least) to problems on general finite structures, which should be sufficient for a large(r) range of target problems while still enabling a generic definition without problem-specific parts.

Further directions of future work are, obviously, to research further what problems with polynomial kernelization also admit polynomial boundaried kernelizations (for pure graph problems and beyond), and to establish unconditional lower bounds where this is infeasible. It is also interesting to perhaps identify some general properties of problems that admit a polynomial kernelization but no polynomial boundaried kernelization, like, e.g., \treedeletion.

Finally, we showed that our polynomial boundaried kernelization for \vertexcovervc can be used to get a polynomial kernelization for \vertexcovertdd, and in fact got a significant improvement over the previous kernelization. Can this be further improved to match the lower bound via maximum size of minimal blocking sets~\cite{DBLP:journals/siamdm/HolsKP22}? Can one find more applications of polynomial boundaried kernelization for obtaining polynomial kernelizations for more complex parameterizations?

\bibliographystyle{abbrv}
\bibliography{bk}


\appendix

\section{Problem definitions}\label{app:problems}
\begin{center}
	\fbox{
		\parbox{.95\textwidth}{
			\vertexcover (\myproblem{vc})
			\begin{itemize}
				\item \textbf{Type:} Minimization problem.
				\item \textbf{Input:} A graph $G$.
				\item \textbf{Feasible solutions:} A vertex set $S \subseteq V(G)$ s.t.\ $G-S$ is an independent set.
				\item \textbf{Solution value:} The size $|S|$ of $S$.
			\end{itemize}
}}\end{center}

\begin{center}
	\fbox{
		\parbox{.95\textwidth}{
			\feedbackvertexset (\myproblem{fvs})
			\begin{itemize}
				\item \textbf{Type:} Minimization problem.
				\item \textbf{Input:} A graph $G$.
				\item \textbf{Feasible solutions:} A vertex set $S \subseteq V(G)$ s.t.\ $G-S$ is a forest.
				\item \textbf{Solution value:} The size $|S|$ of $S$.
			\end{itemize}
}}\end{center}

\begin{center}
	\fbox{
		\parbox{.95\textwidth}{
			\treedeletion (\myproblem{tds})
			\begin{itemize}
				\item \textbf{Type:} Minimization problem.
				\item \textbf{Input:} A graph $G$.
				\item \textbf{Feasible solutions:} A vertex set $S \subseteq V(G)$ s.t.\ $G-S$ is a tree.
				\item \textbf{Solution value:} The size $|S|$ of $S$.
			\end{itemize}
}}\end{center}

\begin{center}
	\fbox{
		\parbox{.95\textwidth}{
			\dominatingset (\myproblem{ds})
			\begin{itemize}
				\item \textbf{Type:} Minimization problem.
				\item \textbf{Input:} A graph $G$.
				\item \textbf{Feasible solutions:} A vertex set $S \subseteq V(G)$ s.t.\ $N_G[S] = V(G)$.
				\item \textbf{Solution value:} The size $|S|$ of $S$.
			\end{itemize}
}}\end{center}

\begin{center}
	\fbox{
		\parbox{.95\textwidth}{
			\myproblem{Long Cycle} (\myproblem{lc})
			\begin{itemize}
				\item \textbf{Type:} Maximization problem.
				\item \textbf{Input:} A graph $G$.
				\item \textbf{Feasible solutions:} A cycle $C = (v_1, v_2, \dots, v_r, v_1)$ in $G$ s.t.\ $v_1, \dots, v_r$ are distinct vertices from $G$.
				\item \textbf{Solution value:} The length $r$ of $C$.
			\end{itemize}
}}\end{center}

\begin{center}
	\fbox{
		\parbox{.95\textwidth}{
			\myproblem{Long Path} (\myproblem{lp})
			\begin{itemize}
				\item \textbf{Type:} Maximization problem.
				\item \textbf{Input:} A graph $G$.
				\item \textbf{Feasible solutions:} A path $P = (v_1, v_2, \dots, v_r)$ in $G$ s.t.\ $v_1, \dots, v_r$ are distinct vertices from $G$.
				\item \textbf{Solution value:} The length $r-1$ of $P$.
			\end{itemize}
}}\end{center}

\begin{center}
	\fbox{
		\parbox{.95\textwidth}{
			\myproblem{Hamiltonian Cycle} (\myproblem{hc})
			\begin{itemize}
				\item \textbf{Type:} Decision problem.
				\item \textbf{Input:} A graph $G$.
				\item \textbf{Question:} Does there exist a cycle $C$ in $G$ that visits all vertices of $G$?
			\end{itemize}
}}\end{center}

\begin{center}
	\fbox{
		\parbox{.95\textwidth}{
			\myproblem{Hamiltonian Path} (\myproblem{hp})
			\begin{itemize}
				\item \textbf{Type:} Decision problem.
				\item \textbf{Input:} A graph $G$.
				\item \textbf{Question:} Does there exist a path $P$ in $G$ that visits all vertices of $G$?
			\end{itemize}
}}\end{center}

\begin{center}
	\fbox{
		\parbox{.95\textwidth}{
			\myproblem{Cluster Editing} (\myproblem{ce})
			\begin{itemize}
				\item \textbf{Type:} Minimization problem.
				\item \textbf{Input:} A graph $G$.
				\item \textbf{Feasible solutions:} An edge set $E \subseteq \{\{u, v\} \mid u,v \in V\}$ s.t.\ $G \triangle E \coloneq (V(G), ((E(G) \cup E) \setminus (E(G) \cap E)))$ is a cluster.
				\item \textbf{Solution value:} The size $|E|$ of $E$.
			\end{itemize}
}}\end{center}

\begin{center}
	\fbox{
		\parbox{.95\textwidth}{
			\myproblem{Cluster Vertex Deletion} (\myproblem{cvd})
			\begin{itemize}
				\item \textbf{Type:} Minimization problem.
				\item \textbf{Input:} A graph $G$.
				\item \textbf{Feasible solutions:} A vertex set $S \subseteq V(G)$ s.t.\ $G-S$ is a cluster.
				\item \textbf{Solution value:} The size $|S|$ of $S$.
			\end{itemize}
}}\end{center}

\begin{center}
	\fbox{
		\parbox{.95\textwidth}{
			\maximumcut (\myproblem{mc})
			\begin{itemize}
				\item \textbf{Type:} Maximization problem.
				\item \textbf{Input:} A graph $G$.
				\item \textbf{Feasible solutions:} A partition $X \dot\cup Y$ of $V(G)$.
				\item \textbf{Solution value:} The size $|E|$ of $E = \{\{u, v\} \in E(G) \mid u \in X, v \in Y\}$.
			\end{itemize}
}}\end{center}

\begin{center}
	\fbox{
		\parbox{.95\textwidth}{
			\myproblem{\myproblem{Number Of Vertices With Degree $\neq 2$}} ($\#v, \text{deg}(v) \neq 2$)
			\begin{itemize}
				\item \textbf{Type:} Minimization problem.
				\item \textbf{Input:} A graph $G$.
				\item \textbf{Feasible solutions:} A vertex set $S \subseteq V(G)$ s.t.\ all vertices in $V(G) \setminus S$ have degree equal two.
				\item \textbf{Solution value:} The size $|S|$ of $S$.
			\end{itemize}
		}
	}
\end{center}
\end{document}